\NewDocumentCommand \autoname { m }
 {
   \tl_set:Nn \l__autotikz_code_tl { #1 }

   \tl_set:Nx \l__autotikz_hash_tl
     { \pdfmdfivesum { \detokenize \expandafter { \l__autotikz_code_tl } } }

   \tikzsetnextfilename { tikz-\l__autotikz_hash_tl }

   #1
 }
\numberwithin{equation}{section}
\def\ben{\begin{enumerate}}
\def\een{\end{enumerate}}
\def\bs{\bigskip}
\def\stubs{stubs of\xspace}
\numberwithin{theorem}{section}
\numberwithin{definition}{section}
\theoremstyle{theorem}
\newtheorem*{rep@theorem}{\rep@title}
\newcommand{\newreptheorem}[2]{%
\newenvironment{rep#1}[1]{%
 \def\rep@title{#2 \ref{##1}}%
 \begin{rep@theorem}}%
 {\end{rep@theorem}}}
\newcommand{\rev}{\mathrm{Rev}}
\title{Quantum precomputation: parallelizing cascade circuits and the Moore--Nilsson conjecture is false}
\author[1]{Adam Bene Watts}
\affil[1]{University of Calgary}
\author[2]{Charles R. Chen}
\author[2]{J. William Helton}
\affil[2]{University of California, San Diego}
\author[3,4]{Joseph Slote}
\affil[3]{University of Washington}
\affil[4]{California Institute of Technology}
\date{}
\begin{document}
\clearpage\maketitle
\thispagestyle{empty}

\begin{abstract}
    Parallelization is a major challenge in quantum algorithms due to physical constraints like no-cloning.
    This is vividly illustrated by the conjecture of Moore and Nilsson from their seminal work on quantum circuit complexity~\cite[announced 1998]{MooreNilsson}: unitaries of a deceptively simple form---controlled-unitary ``staircases''---require circuits of minimum depth $\Omega(n)$.
    If true, this lower bound would represent a major break from classical parallelism and prove a quantum-native analogue of the famous $\mathsf{NC}\neq \mathsf{P}$ conjecture.
    
    In this work we settle the Moore--Nilsson conjecture in the negative by compressing all circuits in the class to depth $\mc O(\log n)$, which is the best possible.
    The parallelizations are exact, ancilla-free, and can be computed in poly($n$) time.
    We also consider circuits restricted to 2D connectivity, for which we derive compressions of optimal depth $\mc O(\sqrt{n})$.

    More generally, we make progress on the project of quantum parallelization by introducing a quantum blockwise precomputation technique somewhat analogous to the method of Arlazarov, Dini\v c, Kronrod, and Farad\v zev \cite{arlazarov1970economical} in classical dynamic programming, often called the ``Four-Russians method.''
    We apply this technique to more-general ``cascade'' circuits as well, obtaining for example polynomial depth reductions for staircases of controlled $\log(n)$-qubit unitaries.
\end{abstract}

\vspace{5em}
\begin{center}
    
\tikzfading[name=testfade,
            top color=transparent!0,
            bottom color=transparent!0,
            middle color=transparent!0
            ]
\begin{tikzpicture}
  \node (boxtext) [
  draw,rounded corners,
    fill=red!5,
    inner sep=8pt,
    text width=16cm,
    align=left,
  ] at (current page.north east)
  {\footnotesize {\color{purple}\textbf{Warning.}} Due to limitations of arXiv and the Ti\textit{k}Z \texttt{externalize} package, your PDF viewer may not correctly render transparencies in circuit diagrams below.
  A patched document is available at \url{joeslote.com/documents/precomputation.pdf}.};

  \node[
    fit=(boxtext),
    fill=white,
    path fading=testfade,
    fading angle=90
  ] {};
\end{tikzpicture}
\end{center}

\newpage

\setcounter{tocdepth}{2}

{\small
\tableofcontents
}

\newpage

\section{Introduction}

\noindent\textsc{Parallelism is a fundamental aspect of computation.}
On the one hand, finding ways to parallelize algorithms has obvious benefits for runtime in modern computer architectures.
On the other, it is desirable that \textit{some} programs do \textit{not} parallelize: various cryptographic primitives rely on the existence of so-called \textit{inherently sequential functions}---polytime functions which do not admit efficient polylog-depth circuits---see for example~\cite{BonehNaor00,CouteauRR21,bonneau2025}.\footnote{We do not attempt a survey of the sizeable literature on time-lock puzzles and related objects initiated by~\cite{RSW96}; the cited works provide examples where inherently sequential computations either are provably necessary or at least circumvent previous impossibility results.}

But despite its importance, parallelism remains poorly understood.
For example, it is open whether every classical bounded fan-in circuit composed of $m$ gates has an equivalent circuit of depth merely $\mathrm{polylog}(m)$ while keeping the gate count to $\mathrm{poly}(m)$.
This is essentially the $\mathsf{NC}$ vs. $\mathsf{P}$ question, a central problem in complexity theory that has stymied the community since its posing in 1981~\cite{cook}.
While it is generally expected that there are functions in $\mathsf{P}$ requiring $\poly(n)$-depth $\mathsf{NC}$ circuits, lower bounds have been stalled for over 30 years at $(3-o(1))\log n$, coming from H{\aa}stad's landmark formula lower bounds~\cite{hastad:shrinkage1993,doi:10.1137/S0097539794261556}.\footnote{Though there is an important program towards super-logarithmic lower bounds via the KRW conjecture~\cite{KRW95}; see the recent work \cite{meir25} for an overview.}

\subsubsection*{Quantum parallelism: constrained by physics?}
Parallelism in quantum computation is even more mysterious.
Not only are super-logarithmic lower bounds not known, but even basic approaches to upper bounds---borrowed from the classical literature on parallel algorithms---appear to be frustrated by constraints inherent to quantum physics.
To see this, consider the very simple toy example pictured in \Cref{fig:parallelization-challenge-Intro}.

\begin{figure}[ht]
    \centering
    \begin{subfigure}{.61\textwidth}
        \[
    \autoname{
    \begin{quantikz}[wire types={q,b,n,n}, classical gap=2pt]
        \lstick{$x$}& \gate{f}\wireoverride{b}
        &\coctrl{}& &\\
        \lstick{$y$}& &\gate{g}\wire[u]{q} &\setwiretype{q} &\\
        & & &\\
        &\slice[style = black, label style={pos=1, anchor=north}]{$T$} & \slice[style = black, label style={pos=1, anchor=north}]{$2T$} &
    \end{quantikz}}
    \: =
    \autoname{\begin{quantikz}[wire types={q,b,n,b}, classical gap=2pt]
        \lstick{$x$}& \wireoverride{b}&\gate{f}\wireoverride{b}&[1em]\coctrl{}\wire[d]{q}\setwiretype{q}&[1em]\\
        \lstick{$y$}& \gate[3, label style={rotate=90}][1.3em]{\mathclap{\text{COPY}}}&\gate{g_0} &\gate[3, label style={rotate=90}][1.3em][2em]{\mathclap{\text{SELECT}}}\setwiretype{q} & \setwiretype{n}\\[-.75em]
        &&&&\setwiretype{q}\\[-.75em]
        & \wireoverride{n} \slice[style = black, label style={pos=1, anchor=north}]{$1$} &\gate{g_1} \slice[style = black, label style={pos=1, anchor=north}]{$T + 1$} &\setwiretype{q} \slice[style = black, label style={pos=1, anchor=north}]{$T + 2$}& \setwiretype{n} 
    \end{quantikz}}
    \]
    \caption{Classical parallelization from depth $2T$ to depth $T + 2$.}
    \label{subfig:classical_parallel}
    \end{subfigure}
    \hfill\vline\hfill
    \begin{subfigure}{.38\textwidth}
        \[
        \begin{quantikz}[wire types={b,q,b,n,n}, classical gap=2pt]
         \lstick[2,brackets=none]{$\displaystyle\ket{\vphantom{\int}\psi \!}$\hspace{-0.7em}} & \gate[2]{F} & &\\[-1.3em]
        & &\coctrl{}&\\
        \lstick{\ket{\phi}} & &\gate{G}\wire[u]{q} & \\
        & & \\[20pt]
        & &
    \end{quantikz}\quad=\quad \text{\textbf{?}}
        \]
        \caption{Quantum parallelization?}
        \label{subfig:quantum_parallel}
    \end{subfigure}
    \caption{A standard idea in classical parallelization with no immediate quantum analogue.
    The half-open control denotes the product of an open control and closed control---see \Cref{sec:main results} for a formal explanation.
    }
    \label{fig:parallelization-challenge-Intro}
\end{figure}
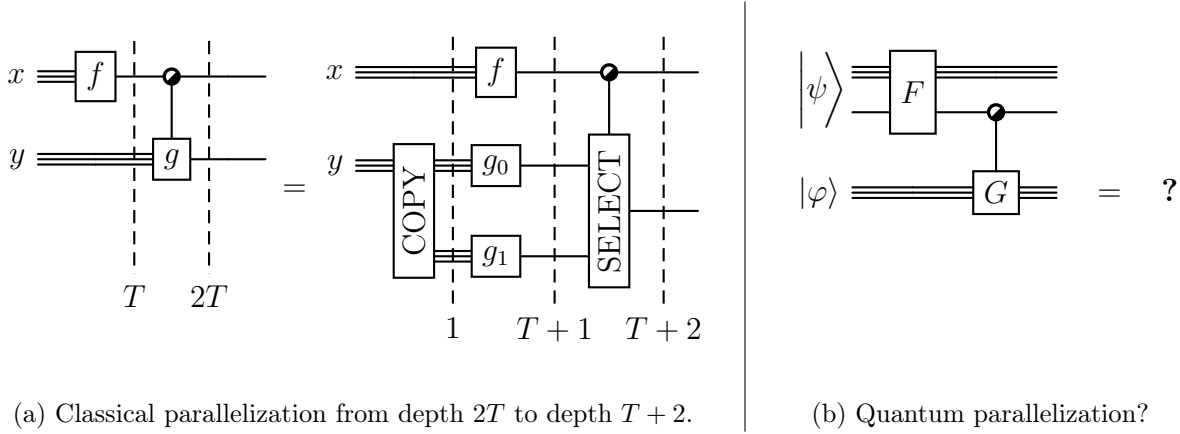

In the classical scenario, we wish to compute a function $h(x,y):=g_{f(x)}(y)$ on two classical inputs $x$ and $y$ for some Boolean functions $f,g_0,g_1$.
The naive implementation of $h$ is to first compute $b:=f(x)\in\{0,1\}$, and then run $g_b$ on $y$.
If $f,g_0,$ and $g_1$ each take time $T$ to run, this approach takes time about $2T$.
But $h$ can be easily parallelized: while one subcircuit is computing $f(x)$, we should copy input $y$ and evaluate $g_0(y)$ and $g_1(y)$ in parallel.
Then the final output can be simply selected (in constant time) based on the outcome of $f$.
Here we find parallelization essentially \emph{halves the runtime} to just $T$ plus a small constant overhead.
These techniques also extend naturally to reversible classical circuits.\footnote{By adding a layer of uncompute operations to our classical circuit, which can also be performed in parallel after the desired output bit is computed.}

Now consider a quantum variant: we replace strings $x$ and $y$ with unknown quantum states $\ket{\psi},\ket{\phi}$, and our goal is to apply unitary transformations $G_0$ or $G_1$ to $\ket{\phi}$, controlled by the last qubit of $F\ket{\psi}$.
Here the parallelization trick no longer works: because of the no-cloning theorem, we cannot copy $\ket{\phi}$ to simultaneously precompute $G_0\ket{\phi}$ and $G_1\ket{\phi}$.
We could try to ``guess'' the outcome of $F\ket{\psi}$, but if we guessed wrongly we would have to rewind our computation on $\ket{\phi}$ and begin anew.
Thus, it seems \emph{this classical parallelization technique has no immediate quantum analogue}.

\subsubsection*{The Moore--Nilsson conjecture}
By iterating the heuristic argument above, one may be led to wonder if there is no way to parallelize a sequence of controlled quantum operations from qubit $j-1$ to $j$, $j=2,\ldots, n$.
This is precisely the conjecture that concludes the seminal paper of Christopher Moore and Martin Nilsson \cite{MooreNilsson}, which 25 years ago defined the $\mathsf{QNC}$ hierarchy and initiated the study of concrete quantum circuit complexity.

\begin{conjecture*}[Moore--Nilsson \cite{MooreNilsson}]
Let $U^{(1)},U^{(2)},\cdots,U^{(n)}$ be 1-qubit unitaries which are neither diagonal nor anti-diagonal.
Then the following circuit has depth $\Omega(n)$.
\[ 
\mc C(\vec U)\quad=\qquad
\autoname{\begin{quantikz}[classical gap=1.8pt,row sep={2em,between origins},
wire types={q,q,q,q,q,q},transparency group, blend group=normal]
     & \ctrl{1} & & & & & \\
     & \gate{U^{(1)}} & \ctrl{1} & & & & \\
     & & \gate{U^{(2)}} & \ctrl{1} & & & \\
      \newwave[minimum height=3em]&&&\defer{\hspace{3em}\rotatebox{-45}{$\cdots$}}&&&\\[.5em]
    & & & & \gate{U^{(n-1)}}\wire[u]{q} & \ctrl{1} &  \\
    & & & & & \gate{U^{(n)}} & 
\end{quantikz}
}
.
\]
\end{conjecture*}
Observe that classical analogues of these circuits can be easily parallelized: for example imagine setting each $U^{(j)}$ to Pauli $X$.
This yields a classical reversible circuit implementing a prefix-sum computation, and such circuits are well-studied and have $\mc O(\log n)$ depth parallelizations~\cite{blelloch1990prefix,cole1989faster}.

The circuit class in the Moore--Nilsson conjecture, however, has resisted parallelization for the quarter century since its first announcement.
To the authors' best knowledge there is no published progress on the conjecture.
Through personal communications the authors are aware of some study made of the specific circuit $\mc C(H,H,\ldots, H)$.
Chinmay Nirkhe and colleagues derived a circuit that produces $\mc C(H,H,\ldots, H)\ket{0}$ in $\mc O(\log n)$ depth---though of course this says nothing about the circuit required to implement the entire unitary. 
Independently, Anne Broadbent considered the circuit $\mc C(H,H,\ldots, H)$ in the context of position verification.
This led Florian Speelman and colleagues to note that ``middle bits'' of the $\mc C(H,H,\ldots, H)$ unitary could be extracted up to small error by log-depth quantum circuits and, as a consequence, that it was possible to construct a log-depth circuit which approximately inverted the action of $\mc C(H,H,\ldots, H)$ on computational basis states.
However, it is not clear how to extend this approach to invert the action of $\mc C(H,H,\ldots, H)$ on general states or to obtain an operator-norm approximation of the unitary.

\bigskip

The Moore--Nilsson conjecture highlights dual motivations for studying parallelization in the quantum world:
\begin{itemize}
    \item The first is the need for \textit{upper bounds}: because Moore--Nilsson circuits are so simple, they underscore how underdeveloped quantum parallelization techniques are in comparison to the classical case.
    Can we at least recover quantum analogues of basic parallelization ideas? These could lead to new compilation techniques, reducing the runtime of both near-term and fault-tolerant quantum algorithms.
    \item The second is \textit{lower bounds}: the Moore--Nilsson conjecture presents a simple, concrete class of unitaries that might have super-logarithmic depth lower bounds.
    Could unitary circuit depth lower bounds join the growing list \cite[etc.]{kretschmer2021quantum,kretschmer2023quantum,DBLP:conf/stoc/LombardiMW24} of quantum hardness results that seem to be available independent of separations (or the lack thereof) in classical complexity?
\end{itemize}

In this work we develop new techniques for parallelizing quantum circuits. We use these techniques to strongly refute the Moore--Nilsson conjecture, showing Moore--Nilsson circuits can be exactly compiled with depths that match lightcone lower bounds.
We also use our parallelization techniques to obtain milder depth reductions for a larger class of circuits consisting of cascades of controlled multi-qubit unitaries.
Finally, we discuss new classes of simple unitaries which may still admit super-logarithmic depth lower bounds.

\subsection{Main results}
\label{sec:main results}
We will work with circuits consisting of arbitrary one and two qubit gates with all-to-all connectivity---\textit{i.e.,} $\mathsf{QNC}$ circuits\footnote{Equivalently (up to constant factors), we consider circuits with a gate set consisting of arbitrary one-qubit gates and at least one entangling two qubit gate~\cite{bullock2003arbitrary,bremner2002practical}.}---as well as circuits restricted to 2D connectivity, which we call $\mathsf{QNC}_\mathsf{2D}$ circuits.
All our results are constructive and can therefore be converted into approximate compilation schemes over a finite gate set using standard techniques.

The first contribution of this paper is a refutation of the Moore--Nilsson conjecture.

\begin{theorem}
\label{thm:single_qbit_parallelization-Intro}
    Every Moore--Nilsson unitary $\mc C(\vec U)$ on $n$ qubits is computed exactly by...
    \begin{itemize}
        \item A $\textsf{QNC}$ circuit of depth $\mc O(\log n)$ and no ancillae, and
        \item A $\textsf{QNC}_\textsf{2D}$ circuit of depth $\mc O(\sqrt n)$ and $\mc O(n)$ ancilla qubits.
    \end{itemize}
    Both of these depths are the best possible.
    Moreover, these circuits can be computed from the list of gates $U^{(1)},\ldots, U^{(n)}$ in time $\mathrm{poly}(n)$.
\end{theorem}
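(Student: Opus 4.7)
The plan is to exploit the fact that Moore--Nilsson cascades have \emph{operator Schmidt rank at most $2$} across every left--right cut of the qubit chain. Fix a cut position $m$ and split $[n]$ into $L=\{1,\ldots,m\}$ and $R=\{m+1,\ldots,n\}$. Because each $CU^{(j)}$ is block-diagonal in the $Z$-basis of its control qubit, writing $\pi_c := I_{L\setminus\{m\}} \otimes \ketbra{c}{c}_m$ and letting $\mathcal C_L, \mathcal C_R$ denote the sub-cascades on $L$ and $R$, a direct computation gives
\[
    \mathcal C(\vec U) \;=\; (\pi_0\mathcal C_L)\otimes \mathcal C_R \;+\; (\pi_1\mathcal C_L)\otimes\bigl(\mathcal C_R\, U^{(m)}_{m+1}\bigr).
\]
This is the quantum analogue of the observation that a classical prefix scan only needs to carry $O(1)$ bits of state across each cut, and it is the combinatorial core of what I would exploit.

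Given this Schmidt-rank-$2$ factorization, the approach is to build a balanced binary tree of merges: at level $\ell$ each segment of size $n/2^\ell$ is represented by a constant number of $2\times 2$ matrix pairs, and two adjacent segments are combined into their parent by a constant-size matrix computation. The tree has depth $\mathcal O(\log n)$ and the tensors are polynomial-time functions of $\vec U$, so the whole circuit can be assembled in time $\mathrm{poly}(n)$. A standard compilation of this constant-bond-dimension tensor network yields a $\mathsf{QNC}$ circuit of depth $\mathcal O(\log n)$ using $\Theta(n)$ bond ancillae. The ``quantum blockwise precomputation'' technique advertised in the abstract is what should let us eliminate these bond ancillae: one precomputes offline a small finite family of $1$- and $2$-qubit gates depending on the $U^{(j)}$'s, and interleaves them so that certain physical qubits temporarily play the role of bond registers in place.

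For the $\mathsf{QNC}_\mathsf{2D}$ bound I would lay the qubits out on a $\sqrt n \times \sqrt n$ grid in a snake pattern and rerun the same merging idea with block size $\sqrt n$. Each block cascade is performed sequentially in depth $\mathcal O(\sqrt n)$, while bridge data between blocks is routed across the grid in a single wave of ancilla-mediated communication of depth $\mathcal O(\sqrt n)$, using the theorem's $\mathcal O(n)$ ancillae as bond registers; here the ancillae need not be eliminated. Both matching lower bounds follow from standard lightcone arguments: a generic Moore--Nilsson cascade entangles all $n$ qubits, so the circuit's lightcone must cover the entire chain, forcing depth $\Omega(\log n)$ on $\mathsf{QNC}$ (lightcone doubles per layer) and $\Omega(\sqrt n)$ on $\mathsf{QNC}_\mathsf{2D}$ (lightcone area $\mathcal O(d^2)$).

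The main obstacle I anticipate is the \emph{ancilla-free} clause of the $\mathsf{QNC}$ statement. Standard constant-bond-dimension MPO compilation uses $\Theta(n)$ bond ancillae, and eliminating them is precisely the quantum manifestation of the no-cloning obstacle from \Cref{fig:parallelization-challenge-Intro}: each bond register is a two-dimensional quantum system that cannot simply be prepared by copying from the physical qubits. Packing the in-place precomputation into $\mathcal O(\log n)$ depth while preserving exact unitarity and using only the original $n$ qubits is where I would expect most of the paper's technical work to lie.
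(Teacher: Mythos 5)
Your starting observation is correct: across any single cut, the Moore--Nilsson cascade does factor as a sum of two product operators, and one can write it explicitly as
\[
\mathcal C(\vec U) \;=\; \bigl(\ketbra{0}{0}_m \,\mathcal C_L\bigr)\otimes \mathcal C_R \;+\; \bigl(\ketbra{1}{1}_m \,\mathcal C_L\bigr)\otimes\bigl(\mathcal C_R\, U^{(m)}_{m+1}\bigr).
\]
The lower-bound lightcone arguments you give are also essentially those of the paper. However, the rest of the proposal has a genuine gap, and it is not the one you flag.

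The crucial false step is the sentence ``a standard compilation of this constant-bond-dimension tensor network yields a $\mathsf{QNC}$ circuit of depth $\mathcal O(\log n)$ using $\Theta(n)$ bond ancillae.'' No such compilation is standard, and no such result holds in general: an MPO with bond dimension $2$ need not admit shallow circuits at all, even with ancillae, because the local MPO tensors are not unitaries or isometries---they cannot be ``applied'' as gates, and contracting two segments' bond legs at a tree merge node is not a unitary operation. Low operator Schmidt rank is a necessary consequence of low circuit depth, not a sufficient one. Concretely, your binary tree of merges needs the two children's cascades to run \emph{in parallel}, but $\mathcal C_R$ depends on the outcome of $\mathcal C_L$ through the control wire; the rank-$2$ identity simply records this dependence, it does not remove it. What dissolves the dependence in the paper is Lemma~\ref{lem:q4rid-moore-nilsson-Intro}: via a cosine--sine decomposition of a ``valley circuit'' with a rank-one $\Sigma_2$ (Corollary~\ref{prop:1qbflippedDelta}), each block's sub-cascade on $\ell$ qubits is rewritten exactly as a control-independent preprocessing unitary $P$ on the block, a single-qubit contribution $Q$ to a shrunken MN cascade on $n/\ell$ qubits, and a postprocessing multiplexer $R$; the $P$'s act on disjoint qubits and the $R$'s pairwise commute, so both collapse to depth $O(1)$, and the construction recurses on $\mathcal C(\vec Q)$. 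This produces the $O(\log n)$ depth with no ancillae in one stroke; ancilla removal is not a separate clean-up step layered on top of a tensor-network circuit.

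Your 2D sketch inherits the same problem. Running the $\sqrt n$ block cascades ``sequentially'' gives depth $\Theta(n)$, not $\Theta(\sqrt n)$; running them in parallel is exactly what requires the precomputation identity. Routing ``bridge data'' across the grid is not meaningful without first establishing what classical-looking data can be extracted and forwarded from a block without disturbing it, which is again the content of the CS decomposition. The paper's 2D argument instead observes that after parallelization the $P$ stage has a complete-binary-tree forward topology (and the $R$ stage a complete-binary-tree backward topology up to a layer of CNOTs), and then invokes minimax-edge-length embeddings of $n$-leaf trees in $\sqrt n\times\sqrt n$ grids from the VLSI literature, yielding total depth $O(\sqrt n)$ with $O(n)$ ancillae.

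In short, you have correctly identified a structural feature of Moore--Nilsson unitaries and correctly identified where the difficulty lives, but ``low bond dimension $\Rightarrow$ shallow circuit with ancillae'' does not hold; the heart of the proof is a unitary refactoring (the valley-circuit CS decomposition) that you have black-boxed as ``quantum blockwise precomputation,'' and without it both bullet points of the theorem remain unproved.
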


Our second result is a milder parallelization theorem that applies to a broader class of circuits which we now define.

\begin{definition}\label{defn:control_cascade}
Let 
$\vec{U} = \big(U_0^{(1)}, U_1^{(1)}, 
U_0^{(2)}, U_1^{(2)}, ..., U_0^{(m)}, U_1^{(m)} \big)$ 
be a list of $2m$ unitaries, each acting on $k$ qubits.
Then the \emph{control cascade} $\mc C (\vec{U})$ is the $(km+1)$-qubit unitary implemented by
\[\mc C(\vec U) \;=\;\mbox{
\tikzsetnextfilename{cascade-def}
\begin{quantikz}[wire types={q,b,q,b,q,n,b,q,b},classical gap=2pt, row sep={2em,between origins}]
     & \coctrl{} & & & & & \\
     & \gate[2]{U^{(1)}}\wire[u]{q} & & & & & \\[-.5em]
     & & \coctrl{} & & & & \\
     & & \gate[2]{U^{(2)}}\wire[u]{q} & & &  &  \\[-.5em]
     & & & \coctrl{}\wire[d]{q} & & &\\
    \newwave[minimum height=3em]&&&\defer{\hspace{3em}\rotatebox{-45}{$\cdots$}}&&&\\[.5em]
    & & & & \gate[2]{U^{(m-1)}}\wire[u]{q} &  &  \\[-.5em]
    & & & & & \coctrl{}\wire[d]{q} & \\
    & & & & & \gate{U^{(m)}}&
\end{quantikz}}\,.\]
\end{definition}

\medskip

\noindent\textit{Notation:} Here and throughout we use the ``closed-open'' control notation to refer to a product of open and closed controlled unitaries.
So, for example: 
\begin{align*}
\begin{quantikz}
    & \coctrl{1}\wire[d]{q} & \\
    & \gate{U} &
\end{quantikz} = 
\begin{quantikz}
    & \octrl{1} &  \ctrl{1} & \\
    & \gate{U_0} & \gate{U_1} & 
\end{quantikz}\,.
\end{align*}
We extend this notation in the natural way to multiply controlled gates, so
\begin{align*}
\begin{quantikz}
    & \coctrl{1}\wire[d]{q} & \\
    & \coctrl{1}\wire[d]{q} & \\
    & \gate{U} &
\end{quantikz} = 
\begin{quantikz}
    &\octrl{1} & \octrl{1} & \ctrl{1} &  \ctrl{1} &\\
    &\octrl{1} & \ctrl{1} &\octrl{1} & \ctrl{1} &\\ 
    &\gate{U_{00}} & \gate{U_{01}}& \gate{U_{10}}& 
    \gate{U_{11}}& 
\end{quantikz}.
\end{align*}
When necessary we will specify the sub-matrices of a closed-open controlled unitary by writing $U$ as a vector of unitaries, \textit{i.e.} $U = (U_0, U_1)$ or $U = (U_{00}, U_{01},U_{10},U_{11})$ respectively.
In text, we will use the phrase ``multiplexer $U$'' to refer to a unitary of this form.
\bigskip

One may note that, up to a layer of single qubit unitaries, Moore--Nilsson unitaries coincide with the set of control cascades with $k=1$.\footnote{Since we can always rewrite a multiplexer $U = (U_0,U_1)$ as a $U_0$ unitary followed by a controlled $U_0^{\dagger}U_1$.}
For the larger class of control cascade unitaries we are still able to obtain some depth reductions.

\begin{theorem}
\label{thm:general-exact-Intro}
        For any control cascade $\mc C(\vec U)$ with $m$-many $k$-qubit $U$'s there is a $\textsf{QNC}$ circuit with depth
        $\mc O(4^k + m2^k)$ and no ancillae which exactly computes $\mc C(\vec U)$. 
        Moreover, this circuit can be computed from $\mc C(\vec U)$ in time $\mathrm{poly}(m, 2^k)$.
\end{theorem}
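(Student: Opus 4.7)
The plan is to parallelize the cascade using a blockwise application of the Quantum Shannon decomposition. I would write each multiplexer as
\[
U^{(j)} \;=\; (I_c \otimes V^{(j)})\, D^{(j)}\, (I_c \otimes W^{(j)}),
\]
where $V^{(j)}, W^{(j)}$ are uncontrolled $k$-qubit unitaries on the target block $a_j$ and $D^{(j)}$ is a controlled-diagonal on the carry qubit together with $a_j$. General $k$-qubit unitaries compile to $\textsf{QNC}$ circuits of depth $\mc O(4^k)$, while a controlled-diagonal on $k{+}1$ qubits can be implemented in depth $\mc O(2^k)$ via Gray-code techniques. So the expensive part of each stage lies in the two uncontrolled pieces, and the cheap part is exactly the controlled diagonal.

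The main idea is then to collapse all $\mc O(m)$ uncontrolled $k$-qubit layers into a constant number of layers of total depth $\mc O(4^k)$, leaving only the $m$ controlled diagonals to be applied in sequence for a contribution of $\mc O(m \cdot 2^k)$. Since the target blocks $a_j$ are pairwise disjoint, the unitaries $V^{(j)}, W^{(j)}$ for different $j$ commute with one another and are \emph{a priori} schedulable in parallel; the obstruction is that in the straight-line cascade they are interleaved with the controlled diagonals $D^{(j)}$, which do not commute with a general unitary on the adjacent carry wire. To dissolve this obstruction I would exploit the freedom in the Shannon decomposition: $W^{(j)}$ is determined only up to right-multiplication by a diagonal factor, which can be absorbed into $D^{(j)}$, and similarly $V^{(j)}$ up to left-multiplication. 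Proceeding stage by stage from left to right, one picks $W^{(j+1)}$ to cancel as much of the preceding $V^{(j)}$ as possible on the shared carry qubit, pushing any irreducible residue into a single setup layer that is executed once at the start of the circuit.

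The main obstacle will be verifying that this ``telescoping'' is actually consistent: the absorbed factors must remain diagonal so that each modified $D^{(j)}$ stays of controlled-diagonal form on only $k{+}1$ qubits, preserving its $\mc O(2^k)$ depth. This may well require a somewhat finer decomposition than plain Shannon --- perhaps one that exposes the privileged role of the carry qubit and allows any one-qubit off-diagonal residual to be propagated through the cascade and collected into a single end-of-circuit correction. Once the accounting works out, the final circuit comprises a single $\mc O(4^k)$-depth setup layer, $m$ controlled diagonals contributing $\mc O(m \cdot 2^k)$ total depth, and a single $\mc O(4^k)$-depth cleanup layer, giving the claimed bound. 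The poly$(m, 2^k)$ compilation time follows because every decomposition step reduces to classical linear algebra on matrices of size $2^k$.
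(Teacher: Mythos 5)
Your plan correctly identifies the right target shape (two $\mathcal O(4^k)$ bookend layers plus an $\mathcal O(m 2^k)$ diagonal spine) and you also correctly locate the obstruction: after a Quantum Shannon decomposition $U^{(j)} = (I\otimes V^{(j)})\,D^{(j)}\,(I\otimes W^{(j)})$, the outgoing unitary $V^{(j)}$ acts as a \emph{general} unitary on the carry qubit (the last qubit of $a_j$), and that carry qubit is the control line of $D^{(j+1)}$, so $V^{(j)}$ does not commute past $D^{(j+1)}$. However, the ``telescoping'' you propose to dissolve this obstruction cannot work. The issue is that $W^{(j+1)}$ acts on $a_{j+1}$, which is disjoint from the carry qubit shared between stage $j$ and stage $j+1$ (that qubit lies in $a_j$). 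There is no overlap on which $W^{(j+1)}$ could ``cancel'' the off-diagonal action of $V^{(j)}$. And the residual diagonal freedom in Shannon --- $V^{(j)} \mapsto \Lambda V^{(j)}$, $W^{(j)}\mapsto W^{(j)}\Lambda'$ for diagonals $\Lambda,\Lambda'$ --- cannot turn $V^{(j)}$ into something block-diagonal in the carry-qubit basis (left-multiplication by a diagonal preserves exactly the set of matrices that are already block-diagonal in that basis). So in the straight-line circuit the $V^{(j)}$ remain strictly interleaved with the $D$'s, contributing $\mathcal O(m\cdot 4^k)$ depth and giving no improvement over the naive bound.

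The missing idea is a decomposition that is structurally \emph{asymmetric} about the carry qubit. The paper's \Cref{lem:q4rid-Intro} rewrites the open-control as $U_0$ followed by a control-$U_0^\dagger U_1$, then applies the cosine--sine decomposition to $U_0^\dagger U_1$ ``upside down,'' i.e.\ pivoting on the carry qubit rather than the control qubit. The payoff is that the postprocessing piece $R$ is \emph{doubly controlled} --- controlled by both the incoming control line $c_j$ and the carry line $c_{j+1}$ --- rather than acting generically on $c_{j+1}$. Controls commute with each other, so $R^{(j)}$ commutes past $D^{(j+1)}$ and past $R^{(j+1)}$, allowing all $R$'s to collapse to two $\mathcal O(4^k)$ layers at the end of the circuit. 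What remains on the carry qubit itself is a \emph{fixed single-qubit gate} $\Phi$ sandwiched between consecutive diagonals; this $\Phi$ genuinely does not commute out, but because it is one-qubit it only adds $\mathcal O(1)$ per stage and the middle cascade stays at $\mathcal O(m\cdot 2^k)$. Your closing remark that one ``perhaps'' needs a decomposition exposing the privileged role of the carry qubit is exactly right in spirit, but it is the essential content of the proof, not an afterthought --- and, notably, the irreducible single-qubit residual does \emph{not} get collected at the end of the circuit; it stays in place, and the argument relies instead on its cheapness.
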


To appreciate this theorem, consider the regime of $m=n$ and $k=\log_2(n)$ (so the unitary $\mc C(\vec U)$ acts on $n\log_2(n)$ qubits).
Because $k$-qubit unitaries require up to $\mc O(4^k)$ gates \cite{tucci1999rudimentary,mottonen04}, and $\mc C(\vec U)$ consists of $m$ of these in a cascade, a naive compilation of such unitaries would yield circuits of depth on the order of $m4^k = n^3$.
But in the same parameter regime we may apply \Cref{thm:general-exact-Intro} to get a polynomial improvement to depth~$\mc O (n^2)$. 

\begin{corollary}
\label{cor:misn-Intro}
    Consider $\mc C(\vec U)$ with $n$-many $\log_2(n)$-qubit gates.
    Then $\mc C(\vec U)$ has an exact, ancilla-free circuit with depth $\mc O(n^2)$.
    (\textit{C.f.} the naive $\mc O(n^3)$-depth circuit.)
\end{corollary}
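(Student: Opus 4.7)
The plan is to obtain the corollary as a direct substitution of parameters into \Cref{thm:general-exact-Intro}. Specifically, I would set $m = n$ and $k = \log_2(n)$ in the theorem. Then $2^k = n$ and $4^k = n^2$, so the depth bound $\mc O(4^k + m2^k)$ becomes $\mc O(n^2 + n \cdot n) = \mc O(n^2)$. The ``exact'' and ``ancilla-free'' properties of the resulting circuit are inherited directly from the theorem. There is no real obstacle here: the corollary is just a parameter instantiation and a numerical simplification.

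To justify the parenthetical comparison to the naive $\mc O(n^3)$ bound, I would briefly invoke that any $\log_2(n)$-qubit unitary can be compiled by a circuit of $\mc O(4^{\log_2 n}) = \mc O(n^2)$ one- and two-qubit gates (e.g.\ via the Shannon decomposition of \cite{tucci1999rudimentary,mottonen04}), hence $\mc O(n^2)$ depth. Since $\mc C(\vec U)$ is a cascade of $m = n$ such controlled unitaries that must be executed essentially sequentially in the naive approach (each shares a qubit with its neighbour), composing these decompositions yields total depth $\mc O(n \cdot n^2) = \mc O(n^3)$. Thus \Cref{thm:general-exact-Intro} gives a polynomial depth improvement from $n^3$ to $n^2$ in this regime, as claimed. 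The whole argument fits in a few lines, so I would present it as a one-paragraph proof following the corollary statement.
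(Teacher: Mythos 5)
Your proposal is correct and matches the paper's reasoning exactly: the corollary is obtained by substituting $m=n$, $k=\log_2 n$ into \Cref{thm:general-exact-Intro} to get depth $\mc O(4^k + m2^k) = \mc O(n^2)$, with exactness and the absence of ancillae inherited from the theorem. Your justification of the $\mc O(n^3)$ naive bound via $m \cdot 4^k = n\cdot n^2$ likewise mirrors the paper's one-line explanation.
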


We can decrease the depth further by allowing for ancillae and approximations.

\begin{theorem} 
\label{thm:general_parallelization-Intro}
    For any $\mc C(\vec{U})$ with $m$-many $k$-qubit unitaries and error threshold $\epsilon$ there exists:
    \begin{enumerate}[label = (\alph*)]
        \item A $\textsf{QNC}$ circuit of depth $\mc O(4^k + mk)$ and with $m2^k$ ancilla qubits which implements $\mc C(\vec{U})$ exactly. 
        \item A $\textsf{QNC}$ circuit of depth $\mc O(4^k + m\log \log(m/\epsilon))$ and with $m\log(m/\epsilon)$ ancilla qubits which implements a unitary $\mc C '$ satisfying $\|\mc C' - \mc C(\vec{U})\|_\infty \leq \epsilon$.
    \end{enumerate}
    Moreover, these circuits can be computed from $\mc C(\vec U)$ in time $\mathrm{poly}(m, 2^k)$.
\end{theorem}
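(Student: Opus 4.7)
The plan is to combine the blockwise precomputation technique behind \Cref{thm:single_qbit_parallelization-Intro} with ancilla-assisted coherent lookup, trading ancilla space for depth reduction relative to \Cref{thm:general-exact-Intro}. The key observation is that the $\mc O(m 2^k)$ sequential depth in \Cref{thm:general-exact-Intro} comes from actually executing each $k$-qubit multiplexer inside the cascade; if we can instead \emph{precompute} each multiplexer's action on ancilla registers in parallel, the only sequential work left at each cascade step is a coherent address-and-select, which is much cheaper than the full unitary.

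For part (a), I would proceed in two stages. \textbf{Precomputation stage.} Allocate $2^k$ ancilla qubits for each of the $m$ multiplexers. In parallel across all $j$, prepare the $j$-th ancilla register in a resource state encoding both branches of $U^{(j)}$, for example a doubly-indexed Choi-like state of the form $\sum_{b\in\{0,1\}} |b\rangle \otimes (I \otimes U^{(j)}_b)|\Phi_k\rangle$ where $|\Phi_k\rangle$ is the $k$-qubit maximally entangled state. Since each $U^{(j)}_b$ synthesizes in depth $\mc O(4^k)$ (Shende--Bullock--Markov) and different $j$ act on disjoint ancillae, the whole stage runs in depth $\mc O(4^k)$. \textbf{Execution stage.} Walk the cascade sequentially. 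At step $j$, once the control bit produced by $U^{(j-1)}$ is available, teleport the live $k$-qubit input through the appropriate half of the resource register, using the incoming control to select between the two precomputed branches. The select-and-teleport gadget is essentially a coherent swap network addressed by the control, costing depth $\mc O(k)$ per multiplexer and $\mc O(mk)$ overall. This yields total depth $\mc O(4^k + mk)$ using $m\cdot 2^k$ ancillae, as claimed.

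The main obstacle I anticipate is that gate teleportation for arbitrary (non-Clifford) $U^{(j)}_b$ introduces a correction of the form $U^{(j)}_b P (U^{(j)}_b)^\dagger$ for a random Pauli $P$ determined by the Bell-measurement outcomes, and naively applying this correction would cost another $\mc O(4^k)$ depth per multiplexer, destroying the parallelization. The technical heart of the argument will therefore be arranging the resource state so that either (i) the Pauli ``frame'' can be propagated forward along the cascade and absorbed into the next multiplexer's precomputed resource (in the spirit of Pauli-frame bookkeeping for measurement-based quantum computing), or (ii) the correction is pre-compiled into a constant-depth Clifford gadget attached to the resource itself, so the only live dependency at each step is the outcome of the Bell measurement.

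For part (b), the same framework goes through with a smaller resource register per multiplexer. Replace the exact $2^k$-entry table with a compressed register of size $\log(m/\epsilon)$, built either by truncating a Clifford$+$T (or LCU / block-encoded) synthesis of each $U^{(j)}_b$ to its $\log(m/\epsilon)$ most significant components, which incurs $\mc O(\epsilon/m)$ operator-norm error per multiplexer, or by using an approximate block-encoding of the same width. A triangle inequality across the $m$ multiplexers then yields total error $\epsilon$. Addressing a $\log(m/\epsilon)$-size table with a binary-tree swap selector costs depth only $\mc O(\log\log(m/\epsilon))$ per step, giving the claimed $\mc O(4^k + m \log\log(m/\epsilon))$ overall depth. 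The Pauli-correction obstacle from part (a) recurs and would be handled in the same way; since the precomputation error is already budgeted per multiplexer, the approximate selector contributes only additive error within that budget.
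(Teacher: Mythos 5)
Your approach is genuinely different from the paper's, but it has a gap that I don't believe can be closed as sketched. The paper's proof of this theorem does not work directly on the cascade of general multiplexers; it first invokes \Cref{thm:general-exact-Intro} to rewrite $\mc C(\vec U)$ as pre/post-processing layers of depth $\mc O(4^k)$ surrounding a cascade whose ``live'' sequential content consists only of controlled \emph{diagonal} $k$-qubit gates and $\Phi$ rotations. It then factors each diagonal gate into two multiplexed phase gates $Z(\theta)$ and applies ancilla-assisted phase kickback: for the exact version, a one-hot $\textsc{SELECT}$ on $2^{k-1}$ ancillae followed by a Toffoli fan-out and a bank of phase gates; for the approximate version, a $\textsc{LOAD}$ of the $r = \log(m/\epsilon)$-bit truncation of each angle followed by a $\log r$-depth Toffoli fan-out and controlled $Z(1/2^i)$ gates. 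The $\textsc{SELECT}/\textsc{LOAD}$ and their inverses commute to the ends and parallelize; only the $\mc O(k)$ (resp.\ $\mc O(\log\log(m/\epsilon))$) fan-out-and-kick cost is paid $m$ times. Crucially, phase kickback is \emph{exact and unconditional}: there is no measurement, no Pauli frame, and no post-correction.

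You correctly anticipate the central obstruction to your teleportation route, but the two fixes you propose do not resolve it. For arbitrary (non-Clifford) $U^{(j)}_b$, the teleportation by-product is $U^{(j)}_b\,P\,(U^{(j)}_b)^\dagger$ for a Pauli $P$ determined by the Bell outcome, and this conjugate is generically an arbitrary $k$-qubit unitary, so option (ii) --- absorbing it into a fixed constant-depth Clifford gadget --- is unavailable in principle. Option (i), Pauli-frame propagation, works for Clifford circuits precisely because conjugation by a Clifford keeps the frame a Pauli; here conjugation by $U^{(j)}_b$ takes you out of the Pauli group after the very first step, so the ``frame'' you would have to carry is a general unitary whose description grows with every step. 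Absorbing it into the \emph{next} resource state is also not an option without breaking your claimed parallelism: the resource for step $j+1$ would have to depend on the Bell outcome at step $j$, i.e., it could only be prepared after step $j$, reintroducing $\Theta(m\,4^k)$ sequential depth. In fact, if this route worked for arbitrary $U^{(j)}_b$ it would imply that any depth-$m$ circuit compresses to depth $\mc O(\poly(k) + m)$ with preprocessing, which is far stronger than what is known. There are two further, smaller issues: teleportation requires measurements, whereas the theorem asserts a unitary $\mathsf{QNC}$ circuit (the paper's phase-kickback gadget is measurement-free); and for part (b) the phrase ``truncating a Clifford$+T$ synthesis to its $\log(m/\epsilon)$ most significant components'' does not correspond to any operation with a clean operator-norm error bound --- the paper's truncation is of the classical bit-expansion of the rotation \emph{angle}, which gives an immediate $2^{-r}$ operator-norm bound per phase gate, and this is only available because the CS-decomposition step has already reduced everything to diagonal (hence commuting, hence angle-parametrized) gates. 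To salvage your outline you would essentially need to rediscover that reduction to diagonals; without it, the teleportation corrections block the claimed depth.
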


\noindent\Cref{thm:general_parallelization-Intro} can give a near-quadratic depth improvement over naive techniques.
With $\mc C(\vec U)$ composed of $n$-many $\log_4(n)$-qubit gates, the naive depth is on the order of $n4^k=n^2$.
However direct computation in the same parameter regime gives:

\begin{corollary}
    Consider $\mc C(\vec U)$ with $n$-many $\log_4(n)$-qubit gates.
    Then there exists:
    \begin{enumerate}[label = (\alph*)]
    \item A $\textsf{QNC}$ circuit with depth $\mc O(n \log n )$ and $n^{3/2}$ ancillae implementing $\mc C(\vec{U})$ exactly.
    \item A $\textsf{QNC}$ circuit with depth $\mc O(n \log \log n)$ and $\mc O(n \log n )$ ancillae implementing a unitary~$\mc C'$ satisfying $\|\mc C' - \mc C(\vec{U})\|_\infty \leq 2^{- \Omega(n)}$.
    \end{enumerate}
    (\textit{C.f.} the naive $\mc O(n^2)$-depth circuit.)
\end{corollary}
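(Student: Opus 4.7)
Both claims should follow by direct parameter substitution into \Cref{thm:general_parallelization-Intro}, taking $m = n$ and $k = \log_4(n)$; these choices give $2^k = n^{1/2}$ and $4^k = n$, which I use repeatedly below.

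For part (a), I would apply \Cref{thm:general_parallelization-Intro}(a). The depth becomes $\mc O(4^k + mk) = \mc O(n + n\log_4 n) = \mc O(n\log n)$ and the ancilla count is $m\cdot 2^k = n\cdot n^{1/2} = n^{3/2}$, matching the claim. Exactness is inherited verbatim from the theorem, and no further work beyond arithmetic is required.

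For part (b), I would apply \Cref{thm:general_parallelization-Intro}(b) with $\epsilon$ chosen to simultaneously hit the three targets of the corollary: depth $\mc O(n\log\log n)$, ancillae $\mc O(n\log n)$, and error $2^{-\Omega(n)}$. Substituting $m=n$ and $k=\log_4 n$ into the theorem produces a depth of $\mc O(n + n\log\log(n/\epsilon))$ and an ancilla count of $\mc O(n\log(n/\epsilon))$, so the key arithmetic is to pin down an $\epsilon$ satisfying $\log(n/\epsilon)=\mc O(\log n)$ while also keeping $\epsilon = 2^{-\Omega(n)}$.

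I expect the main obstacle to be precisely this parameter-balancing step: a naive substitution of $\epsilon=2^{-cn}$ yields $\log\log(n/\epsilon)=\Theta(\log n)$ and $\log(n/\epsilon)=\Theta(n)$, which would degrade the bounds to $\mc O(n\log n)$ depth and $\mc O(n^2)$ ancillae. A clean proof therefore requires either (i) invoking a tighter version of \Cref{thm:general_parallelization-Intro}(b) whose trade-off curve explicitly supports the claimed triple of bounds, or (ii) splitting the compilation into a coarse polynomially-accurate stage (matching the stated depth/ancilla scalings) composed with a short post-processing stage that boosts the accuracy to $2^{-\Omega(n)}$ without materially increasing either resource. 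Either route reduces the corollary to a careful reading of the proof of \Cref{thm:general_parallelization-Intro}(b) rather than any new construction.
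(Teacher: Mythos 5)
Your part (a) is correct and matches the paper's intended (unstated) proof: with $m=n$ and $k=\log_4 n$ we have $4^k=n$ and $2^k=\sqrt{n}$, so \Cref{thm:general_parallelization-Intro}(a) yields depth $\mc O(4^k+mk)=\mc O(n+n\log_4 n)=\mc O(n\log n)$ and $m2^k=n^{3/2}$ ancillae.

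For part (b) you have correctly located a genuine inconsistency in the statement, and your arithmetic is right: substituting $\epsilon=2^{-\Theta(n)}$ into \Cref{thm:general_parallelization-Intro}(b) forces $\log(m/\epsilon)=\Theta(n)$ and $\log\log(m/\epsilon)=\Theta(\log n)$, which gives depth $\mc O(n\log n)$ and ancillae $\Theta(n^2)$---so the three bounds stated in the corollary cannot all be obtained simultaneously by direct substitution. However, your proposed repairs are not the resolution. There is no tighter version of \Cref{thm:general_parallelization-Intro}(b) in the paper, and your suggestion (ii) is unlikely to work: there is no cheap post-processing step that boosts an operator-norm approximation from error $n^{-c}$ to error $2^{-\Omega(n)}$. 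The actual resolution is that the error bound $2^{-\Omega(n)}$ in the corollary appears to be a misprint for $n^{-\Omega(1)}$, equivalently $1/\mathrm{poly}(n)=2^{-\Omega(\log n)}$. Taking $\epsilon=n^{-c}$ gives $\log(m/\epsilon)=\Theta(\log n)$ and $\log\log(m/\epsilon)=\Theta(\log\log n)$, so \Cref{thm:general_parallelization-Intro}(b) yields depth $\mc O(4^k+m\log\log(m/\epsilon))=\mc O(n\log\log n)$ and ancillae $m\log(m/\epsilon)=\mc O(n\log n)$, matching both stated resource bounds. This reading is consistent with \Cref{subsec:chunking}, where the paper makes an entirely analogous application of \Cref{thm:general_parallelization-Intro}(b) with $\epsilon=1/\mathrm{poly}(n)$ and obtains the same $\log\log n$ depth overhead. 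So the right move is to take $\epsilon=1/\mathrm{poly}(n)$ and flag the stated error exponent as incompatible with the stated depth and ancilla costs.
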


We discuss situations in which our parallelization techniques do not immediately give depth reductions
in~\Cref{subsec:Outlook} of the introduction and then again in~\Cref{sec:discussion} of the main paper.

\subsection{Proof overviews}
\subsubsection{A quantum precomputation technique}

The starting point of our results is to essentially fill in the right-hand side of \Cref{subfig:quantum_parallel}, though in a weaker way than is achieved classically.
Classically, the identity in \Cref{subfig:classical_parallel} splits $g$ into a preprocessing operation independent of the control and then an $\mc O(1)$-time controlled operation.
In the quantum case, we do not obtain an $\mc O(1)$-time controlled operation, but we are able to reduce from a controlled general unitary to a controlled \textit{diagonal} unitary.
Concretely, we begin with the following lemma.



\begin{lemma}[Quantum precomputation identity, in brief]
\label{lem:q4rid-Intro}
Let $U_0, U_1$ be $k$-qubit unitaries. Then
\begin{align}
\label{def:openClosedR}
    \autoname{\begin{quantikz}[wire types = {q, b, q}, classical gap = 2pt, row sep = {2em, between origins}]
        & \octrl{2} & \ctrl{1} & \\
        & \gate[2]{U_0} & \gate[2]{U_1} & \\
        & \ghost{U_0} & \ghost{U_1} &
    \end{quantikz}}
    \quad = \quad 
    \autoname{\begin{quantikz}[wire types = {q, b, q}, classical gap = 2pt, row sep = {2em, between origins}]
        &      & \ctrl{1} &  & \coctrl{}\wire[d]{q} &\\
        & \gate[2]{P} & \gate[2]{D} & & \gate{R}   &\\
        & \ghost{P} & \ghost{D} & \gate{\Phi} & \coctrl{}\wire[u]{q}     &
    \end{quantikz}}
\end{align}
for some unitary $P$ and multiplexer $R$, diagonal unitary $D$, and a universal (\textit{i.e.}, fixed) unitary $\Phi$. 
\end{lemma}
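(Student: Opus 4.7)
The identity says that any $(k+1)$-qubit multiplexer $(U_0,U_1)$ can be realized so that its only top-control-dependent building block is a diagonal unitary $D$, at the cost of a fixed single-qubit $\Phi$ acting unconditionally on the last register qubit and a doubly-controlled multiplexer $R$ on the remaining $k-1$ qubits. My plan is to realize this by applying the Cosine--Sine Decomposition (CSD) to the relative unitary $V := U_1 U_0^\dagger$.

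Recall the CSD: every $k$-qubit unitary $V$ factors as
\[
V \;=\; L_1 \cdot (I_{k-1} \otimes \Phi)\, D\, (I_{k-1} \otimes \Phi^\dagger) \cdot L_0^\dagger,
\]
where $L_0, L_1$ are $k$-qubit unitaries block-diagonal with respect to the last qubit (equivalently, multiplexers on the first $k-1$ qubits controlled by the last), $D$ is a diagonal $k$-qubit unitary, and $\Phi$ is the \emph{fixed} single-qubit unitary diagonalizing Pauli $Y$. The key point is that $\Phi$ does not depend on $V$: every $R_y$ rotation---and hence every $2\times 2$ block of the CSD's middle factor---shares the same $\ket{\pm i}$ eigenbasis. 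This provides exactly the universal $\Phi$ demanded by the lemma.

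Given such a CSD of $V$, I would define $P := (I_{k-1} \otimes \Phi^\dagger)\, L_0^\dagger\, U_0$, and take the doubly-controlled multiplexer $R$ to have sub-unitaries $R_{c,b}$ satisfying $\sum_b R_{c,b} \otimes \ket{b}\bra{b} = L_c$ for $c\in\{0,1\}$; such $R_{c,b}$ exist precisely because each $L_c$ is block-diagonal with respect to the last qubit. Then verify the identity branch by branch: for top $= 0$, the RHS collapses to $L_0 (I_{k-1}\otimes \Phi)\, P = U_0$ (since $(I_{k-1}\otimes\Phi)(I_{k-1}\otimes\Phi^\dagger) = I$); for top $= 1$, it becomes $L_1 (I_{k-1}\otimes\Phi)\, D\, P = V U_0 = U_1 U_0^\dagger U_0 = U_1$, as required.

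There is no genuine algebraic obstacle here; existence and polynomial-time constructibility of the CSD are classical. The main conceptual step is recognizing that the target right-hand side is a CSD of $V = U_1 U_0^\dagger$ in disguise, with $U_0$ absorbed into the precomputation $P$ and the CSD middle factor decomposed into the universal $\Phi$-change-of-basis sandwiching a genuinely diagonal $D$. Without this identification, the form of the RHS---in particular the demand for a single fixed $\Phi$ working uniformly across all inputs---could seem surprisingly restrictive.
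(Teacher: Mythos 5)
Your proof is correct and takes essentially the same approach as the paper's: both pass to the relative unitary, apply the CS decomposition (with block structure on the last qubit so $\Phi$ sits there), absorb the two outer block-diagonal factors into the unconditional $P$ and the doubly-controlled $R$, and split the middle rotation factor into the universal $\Phi^\dagger D\,\Phi$ form using the fact that all the CS rotations share the $\ket{\pm i}$ eigenbasis. The only cosmetic differences are that you decompose $U_1U_0^\dagger$ rather than $U_0^\dagger U_1$ and verify the result algebraically branch by branch, whereas the paper first rewrites the multiplexer as $U_0$ followed by a controlled $U_0^\dagger U_1$ and then manipulates the circuit diagrammatically.
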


The proof of this lemma is given in \Cref{subsec:Proof_of_Q4R_general}.
It employs a classical result in matrix analysis known as the cosine-sine decomposition (CS decomposition; see \textit{e.g.}, \cite{cssurvey}), which previously played an important role in quantum circuit compilation~\cite{tucci1999rudimentary,mottonen04} and more recently has provided a valuable perspective on the Quantum Singular Value Transform~\cite{DBLP:conf/sosa/TangT24}.

This precomputation identity is the starting point for all our results, which are best explained out of order.
\Cref{thm:general-exact-Intro} follows from applying \Cref{lem:q4rid-Intro} to each gate in the cascade, as we illustrate now for the case of $m=3$ controlled unitaries.

To parallelize  $\mc C(\vec U)$, begin by applying \Cref{lem:q4rid-Intro} to each controlled unitary $U^{(i)}$'s in the cascade:
    \begin{align}
    \label{eq:intro-q4r-sketch}
    \adjustbox{scale=0.8}{\autoname{\begin{quantikz}[wire types={q,b,q,b,q,b,q},classical gap=2pt, row sep={2em,between origins}]
     & \coctrl{} &[-0.8em] &[-0.8em] & \\
     & \gate[2]{U^{(1)}}\wire[u]{q} & & & \\[-.5em]
     & & \coctrl[coctrl color=blue!70!black]{} & & \\
     & & \gate[2,style={fill=blue!10, draw=blue!70!black},label style=blue!50!black]{U^{(2)}}\wire[u,style={draw=blue!70!black}]{q} & &   \\[-.5em]
     & & & \coctrl{}\wire[d]{q} &\\
    & & & \gate[2]{U^{(3)}} &  \\[-.5em]
    & & & & 
\end{quantikz}}}
    \quad&=\quad \adjustbox{scale=0.8}{\autoname{\begin{quantikz}[wire types={q,b,q,b,q,b,q},classical gap=2pt, row sep={2em,between origins}]
        & &[-0.5em]\ctrl{1} &[-0.5em] &[-1em] \coctrl{}\wire[d]{q}&[-0.5em] &[-0.5em] &[-1em] &[-.5em] &[-.5em] &[-1em] &\\
        & \gate[2]{P^{(1)}} &\gate[2]{D^{(1)}} & & \gate{R^{(1)}}& & & & & & &\\
        & & &\gate{\Phi} &\coctrl{}\wire[u]{q}&\ctrl[style={fill=blue!70!black, draw=blue!70!black}]{1} & & \coctrl[coctrl color=blue!70!black]{}\wire[d,style={draw=blue!70!black}]{q}& & & &\\
        &  & &&\gate[2,style={fill=blue!10, draw=blue!70!black},label style=blue!50!black]{P^{(2)}} &\gate[2,style={fill=blue!10, draw=blue!70!black},label style=blue!50!black]{D^{(2)}}  & &\gate[style={fill=blue!10, draw=blue!70!black},label style=blue!50!black]{R^{(2)}}& & & &\\
        & && & &  &\gate[style={fill=blue!10, draw=blue!70!black},label style=blue!50!black]{\Phi}&\coctrl[coctrl color=blue!70!black]{}\wire[u,style={draw=blue!70!black}]{q}&\ctrl{1} & &\coctrl{}\wire[d]{q}& \\
        & & && & & &\gate[2]{P^{(3)}} &\gate[2]{D^{(3)}}&&\gate{R^{(3)}}& \\
        & & & & & && & & \gate{\Phi} &\coctrl{}\wire[u]{q} &
    \end{quantikz}}},
    \end{align}
where we have highlighted the second unitary and its replacement for visual clarity.
The resulting $P^{(i)}$'s act on disjoint qubits so they can be computed in parallel, and the $R^{(i)}$'s only overlap on control qubits and therefore commute.
Rearranging based on these observations, we find the circuit splits into three ``stages'':
\begin{align*}
    \eqref{eq:intro-q4r-sketch} \quad&=\quad\adjustbox{scale=0.8}{\autoname{\begin{quantikz}[wire types={q,b,q,b,q,b,q,b,q},classical gap=2pt, row sep={2em,between origins}]
        &\slice[style = black]{} &[1.5em] \ctrl{1} &[-.5em] &[-1em] &[-.5em] &[-1em] &[-.5em] \slice[style = black]{}&[1.5em] \coctrl{}\wire[d]{q}&[-1.5em] &\\
        & \gate[2]{P^{(1)}} &\gate[2]{D^{(1)}} & & & & && \gate{R^{(1)}}&&\\
        & &  &\gate{\Phi}&\ctrl[style={fill=blue!70!black, draw=blue!70!black}]{1} & & & &\coctrl{}\wire[u]{q}&\coctrl[coctrl color=blue!70!black]{}\wire[d,style={draw=blue!70!black}]{q}&\\
        & \gate[2,style={fill=blue!10, draw=blue!70!black},label style=blue!50!black]{P^{(2)}} & & &\gate[2,style={fill=blue!10, draw=blue!70!black},label style=blue!50!black]{D^{(2)}} & & & & &\gate[style={fill=blue!10, draw=blue!70!black},label style=blue!50!black]{R^{(2)}}&\\
        & & & & & \gate[style={fill=blue!10, draw=blue!70!black},label style=blue!50!black]{\Phi} & \ctrl{1}& &\coctrl{}\wire[d]{q}&\coctrl[coctrl color=blue!70!black]{}\wire[u,style={draw=blue!70!black}]{q}&\\[.2em]
        &\gate[2]{P^{(3)}}& & & & &\gate[2]{D^{(3)}} & &\gate{R^{(3)}} & &\\
        & & & & & & &\gate{\Phi} &\coctrl{}\wire[u]{q} &&
    \end{quantikz}}}
    \end{align*}

The $P^{(i)}$'s are $k$-qubit gates acting in parallel,
so by standard circuit compilation results
\cite{tucci1999rudimentary,vartiainen04} the first stage has a $\mathsf{QNC}$ circuit of depth 
$=\mc O(4^k)$.
The controlled $R^{(i)}$'s can always be organized into two layers, so they also parallelize to depth $\mc O(4^k)$.
Diagonal unitaries on $k$ qubits only ever require $\mc O(2^k)$ gates \cite{bullockOptDiag}, so the middle section requires depth at most $\mc O(m 2^k)$.
Summing these estimates yields 
\Cref{thm:general-exact-Intro}. A formal proof is given in~\Cref{subsec:general_control_cascade_reduction}.

\medskip

\Cref{thm:general_parallelization-Intro} builds on this approach by observing that diagonal unitaries are themselves amenable to classical precomputation techniques.
Using ancilla qubits to precompute the ``truth tables'' of each of the $D^{(j)}$'s in the cascade, either exactly or to finite precision, gives \Cref{thm:general_parallelization-Intro}.
The proof of this theorem appears in \Cref{subsec:diagonal_caching}.

\medskip

We remark that theorems \Cref{thm:general-exact-Intro,thm:general_parallelization-Intro} are already enough to give a weak disproof of the Moore--Nilsson conjecture.
The first step is to observe that a controlled cascade of $n$ single-qubit unitaries can also be viewed as a controlled cascade consisting of fewer blocks of multi-qubit unitaries, for example as follows.
\[
\autoname{\begin{quantikz}[wire types={q,q,q,q,q,q,q,},classical gap=2pt, row sep={2em,between origins}]
    & \ctrl{1} &[-1em] &[-1em] &[-1em] &[-1em] &[-1em] &\\
    & \gate{U^{(1)}} \gategroup[2,steps=2,style={inner
sep=2pt, fill=blue!10,dashed, rounded corners},background]{}             &    \ctrl{1}                                                                               &     
    &                               &                       &                       &                                       \\[-.2em]
    &                   &        \gate{U^{(2)}}                                          &   \ctrl{1}             &  
                                  &                       &                       &                                       \\[1.3em]
    &                               &                                              &                                         
    \gate{U^{(3)}} \gategroup[2,steps=2,style={inner
sep=2pt, fill=blue!10,dashed, rounded corners},background]{}         & \ctrl{1}                      &                       &                                                              
    &                               \\[-.2em]
    &                                                                     &                                       & 
    &                                                   \gate{U^{(4)}}    &   \ctrl{1}                                                            
    &           &                 \\[1.3em]
    &                               &                                              &                                       &                              
    &                                                                                                                                 
                \gate{U^{(5)}} \gategroup[2,steps=2,style={inner
sep=2pt, fill=blue!10,dashed, rounded corners},background]{}&   \ctrl{1}               &               
\\
    &                                                                                                           &                              
    &                               &                       &                       &                                                                         \gate{U^{(6)}}                                            & 
\end{quantikz}} \quad=:\quad \autoname{\begin{quantikz}[wire types={q,q,q,q,q,q,q},classical gap=2pt, row sep={2em,between origins}]
     & \coctrl{} &[-0.8em] &[-0.8em] & \\
     & \gate[2]{V^{(1)}}\wire[u]{q} & & & \\[-.5em]
     & & \coctrl{} & & \\
     & & \gate[2]{V^{(2)}}\wire[u]{q} & &   \\[-.5em]
     & & & \coctrl{}\wire[d]{q} &\\
    & & & \gate[2]{V^{(3)}} &  \\[-.5em]
    & & & & 
\end{quantikz}}
\]%
Applying \Cref{thm:general_parallelization-Intro} to $\mc C(\vec V)$ yields a $1/\poly(n)$ operator-norm approximation to the original Moore--Nilsson circuit with depth $\mc O(n \log\log n /\log n )$ and $\tilde{\mc O}(n)$ ancillae.
This is explained in more detail in \Cref{subsec:chunking}.

\subsubsection{Aside: comparing with classical work}

Before explaining the proof of \Cref{thm:single_qbit_parallelization-Intro}, we pause to mention that the (blockwise) precomputation methods described thus far can be viewed as a quantum analogue of a well-known classical technique for speeding up dynamic programming introduced by the Soviet researchers Arlazarov, Dini\v c, Kronrod, and Farad\v zev \cite{arlazarov1970economical}.
Arlazarov et al. showed how to speed up exploration of a $d$-dimensional memo table by precomputing the ``truth tables'' of small blocks, leading to time complexity improvements from the naive bound of $\mc O(n^d)$ to the asymptotically better $\mc O(n^d/\log^{C(d)}(n))$.
This approach has come to be known in the algorithms literature as the ``Four-Russians Method'' or the ``Four-Russians speedup'' even though only one---Arlazarov---was actually Russian \cite{Gusfield1997}.

That said, the analogy of our quantum results to \cite{arlazarov1970economical} is not yet a complete one.
Arlazarov et al. were not considering parallel algorithms (their speedup holds also for standard Turing machines), and when one switches to circuits a few remarks should be made about the difference between depth and gate complexity for computing memo tables.
Another point is that the circuits we parallelize---control cascades---are essentially a quantum analogue of \textit{one-dimensional} memo tables.
The extent to which the analogy to \cite{arlazarov1970economical} continues into the full setting of  many-dimensional memo tables is a tantalizing open question.
We expand on these points in the discussion, \Cref{sec:discussion}.

\subsubsection{Optimal-depth circuits for Moore--Nilsson unitaries}

It turns out that we can do much better for Moore--Nilsson circuits than described above.
The key is to exploit special structure in the precomputation identity that appears when it is applied to 1-qubit control cascades (the $V^{(j)}$'s above).

\begin{lemma}[Quantum precomputation identity for Moore--Nilsson circuits]
\label{lem:q4rid-moore-nilsson-Intro}
Let $U^{(1)},\ldots, U^{(\ell)}$ be any $1$-qubit unitaries.
Then there exists an $\ell$-qubit unitary $P$, a 1-qubit unitary $Q$, and a multiplexer $R$
such that
\begin{equation}
\label{eq:q4rmn}
    \autoname{\begin{quantikz}[wire types={q,q,q,q},classical gap=2pt, row sep = {1.75em, between origins}]
        & \ctrl{1} & & & &\\
        & \gate{U^{(1)}} & \ctrl{1} & & & \\
        & & \gate{U^{(2)}} & \ctrl{1} & &\\
        \newwave &&&\defer{\hspace{2em}\rotatebox{-45}{$\cdots$}}& &\\
        & & & & \gate{U^{(\ell)}}\wire[u]{q} &
    \end{quantikz}}
    \quad=\quad
    \autoname{\begin{quantikz}[wire types={q,b,q},classical gap=2pt]
        & & \ctrl{2} &\coctrl{}\wire[d]{q} &\\
        & \gate[2]{P} & & \gate{R} &\\
        & & \gate{Q} & \coctrl{}\wire[u]{q} &
    \end{quantikz}}
\end{equation}
\end{lemma}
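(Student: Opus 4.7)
The plan is to prove \Cref{lem:q4rid-moore-nilsson-Intro} by applying the general quantum precomputation identity \Cref{lem:q4rid-Intro} to $\mc C(\vec U)$, viewed as a multiplexer on its top control qubit, and then simplifying using the special structure of 1-qubit cascades.

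First I would express $\mc C(\vec U)$ as a multiplexer on wire~$1$ with sub-unitaries $A_0 := \mc C(U^{(2)}, \dots, U^{(\ell)})$ and $A_1 := A_0 \cdot (U^{(1)}_2 \otimes I_{[3,\ell+1]})$ on wires~$2$ through~$\ell+1$. The key structural observation is that $A_0^{-1}A_1 = U^{(1)}_2 \otimes I_{[3,\ell+1]}$ acts nontrivially only on wire~$2$, and in particular is the identity on wire~$\ell+1$. Applying \Cref{lem:q4rid-Intro} to $(A_0, A_1)$ with the bipartition (bundle $=$ wires~$2$ through~$\ell$, last qubit $=$ wire~$\ell+1$) yields
\[
\mc C(\vec U) \;=\; R_0 \cdot \Phi_{\ell+1} \cdot CD \cdot P_0,
\]
for some $\ell$-qubit unitary $P_0$ on wires~$2$ through~$\ell+1$, diagonal unitary $D$ on these wires, the fixed universal 1-qubit unitary $\Phi$ on wire~$\ell+1$, and bundle multiplexer $R_0$ controlled by wires~$1$ and~$\ell+1$.

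The crucial step is to argue that $D$ factorizes as a tensor product $D = D_{\text{bundle}} \otimes D'$ with $D'$ a 1-qubit diagonal on wire~$\ell+1$. Intuitively, in the CS-decomposition-based proof of \Cref{lem:q4rid-Intro}, the non-factorizable entries of $D$ record the ``mixing'' of wire~$\ell+1$ with the bundle induced by $A_0^{-1}A_1$; when $A_0^{-1}A_1$ acts trivially on wire~$\ell+1$, this mixing vanishes and $D$ factorizes. Rigorously verifying this factorization is the main obstacle, and it requires tracking the degenerate CS decomposition through the proof of \Cref{lem:q4rid-Intro} in \Cref{subsec:Proof_of_Q4R_general}.

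With the factorization in hand the rest is routine. The controlled diagonal splits as $CD = CD_{\text{bundle}} \cdot CD'_{1 \to \ell+1}$, where $CD_{\text{bundle}}$ is trivial on wire~$\ell+1$ (hence commutes with $\Phi_{\ell+1}$) and is itself a bundle multiplexer controlled by wires~$1$ and (trivially)~$\ell+1$, so it can be absorbed into $R_0$ to form a new multiplexer $R := R_0 \cdot CD_{\text{bundle}}$. The remaining factor $\Phi_{\ell+1} \cdot CD'_{1 \to \ell+1}$ can be rewritten as $CQ_{1 \to \ell+1} \cdot \Phi_{\ell+1}$ via the elementary commutation with $Q := \Phi D' \Phi^{-1}$, and $\Phi_{\ell+1}$ is absorbed into $P_0$ by setting $P := \Phi_{\ell+1} \cdot P_0$ (still an $\ell$-qubit unitary on wires~$2$ through~$\ell+1$). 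Collecting terms yields $\mc C(\vec U) = R \cdot CQ_{1 \to \ell+1} \cdot P$, which is exactly \eqref{eq:q4rmn}. An alternative route would bypass the general lemma by constructing $P,Q,R$ directly from the block-matrix structure of $(A_0, A_1)$, but this seems to require more case analysis.
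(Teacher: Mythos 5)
The reduction you propose—apply \Cref{lem:q4rid-Intro} to $(A_0,A_1)$, show the resulting diagonal $D$ factors as $D_{\mathrm{bundle}}\otimes D'$, then absorb $D_{\mathrm{bundle}}$ into $R$ and conjugate $D'$ by $\Phi$ to produce $Q$—is exactly the skeleton of the paper's proof, and your bookkeeping of how $CD_{\mathrm{bundle}}$ commutes through $\Phi_{\ell+1}$ and merges into $R$ is correct. The gap is that your heuristic for \emph{why} $D$ factorizes is wrong, and filling it is not a matter of ``tracking the degenerate CS decomposition'' in the proof of \Cref{lem:q4rid-Intro}: it needs a new structural lemma.

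Concretely, there are two problems. First, the identity the precomputation lemma actually feeds to the CS decomposition is not $A_0^{-1}A_1$ but rather the conjugate $A_1A_0^{-1}=A_0(A_0^{-1}A_1)A_0^{-1}$: the ``guess $U_0$'' rewrite produces a \emph{controlled} $U_1U_0^\dagger$, and it is this unitary (reversed) that gets CS-decomposed. For a Moore--Nilsson block this is the ``valley'' circuit $\rev(V_0\,U^{(1)}\,V_0^\dagger)$ of \Cref{eq:one_q_valley}, which acts highly nontrivially on wire $\ell+1$; if one really did apply the CS decomposition to $\rev(A_0^{-1}A_1)=\rev(U^{(1)}\otimes I)$ its off-diagonal blocks (with respect to the block qubit) would vanish, giving $D=I$ and hence $Q=I$, which is demonstrably false already for $\ell=2$, $U^{(1)}=U^{(2)}=H$.

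Second, even granting the corrected target, the implication ``$A_0^{-1}A_1$ trivial on the last wire $\Rightarrow$ the CS decomposition of $A_1A_0^{-1}$ by the last wire factorizes'' is false without an additional hypothesis on $A_0$. Take $A_0$ to be the Toffoli gate on wires $2,3,4$ and $A_0^{-1}A_1=X_2\otimes I_{3,4}$. Then $A_1A_0^{-1}=X_2\otimes\mathrm{CNOT}_{3\to4}$, and its CS blocks by wire $4$ are $W_{00}=X_2\otimes\lvert0\rangle\!\langle0\rvert_3$ and $W_{01}=X_2\otimes\lvert1\rangle\!\langle1\rvert_3$, giving $\Sigma_1=\operatorname{diag}(1,1,0,0)$ and $\Sigma_2=\operatorname{diag}(0,0,1,1)$. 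The ratios $\sigma_1:\sigma_2$ are not constant, so $\Sigma_1\pm i\Sigma_2$ cannot be written as $D_{\mathrm{bundle}}\cdot(\text{scalar})$, and $D$ does not factor into a bundle part times a one-qubit part on wire $4$. So the vanishing of mixing you invoke is genuinely not a consequence of the triviality of $A_0^{-1}A_1$ on the last wire; the \emph{cascade structure} of $A_0$ is essential.

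What the paper does to bridge this gap is prove a bespoke CS decomposition for valley circuits: \Cref{lem:SinglePyramidDelta} computes the stubs of a single ``pyramid'' sandwich $U^\dagger(\cdot)U$ and shows $(\Sigma_2^W)^2$ is unitarily equivalent to a Kronecker product of the component stubs, \Cref{lem:fpyrDelta} iterates this over the whole valley, and \Cref{prop:1qbflippedDelta} then observes that when each $U^{(j)}$ is one-qubit, each $\Sigma_2^{U^{(j)}}$ is a scalar and the tensor product collapses to a scalar multiple of the identity. That collapse is what gives $D=I^{\otimes\ell-1}\otimes D'$ in \Cref{prop:q4rid-stubs moore-nilsson} and thence \Cref{lem:q4rid-moore-nilsson-Intro}. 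In short: your reduction is right and the downstream algebra is right, but the ``crucial step'' you flag is a new induction on valley circuits, not something that falls out of \Cref{lem:q4rid-Intro}.
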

\noindent This lemma resembles \Cref{lem:q4rid-Intro} except that the many-qubit controlled-diagonal gate has been replaced by a single-qubit controlled gate.
The proof of \Cref{lem:q4rid-moore-nilsson-Intro} is given in~\Cref{subsec:proof_of_MNQ4R}, the main technical ingredient of which is a novel cosine-sine-type decomposition specialized to a class of circuits we term ``valley circuits.''

To see how \Cref{lem:q4rid-moore-nilsson-Intro} leads to \Cref{thm:single_qbit_parallelization-Intro}, let us repeat the procedure above with the stronger precomputation identity.
Beginning with a Moore--Nilsson circuit, apply \Cref{lem:q4rid-moore-nilsson-Intro} to each of the $V^{(j)}$'s and then commute the pre- and postprocessing unitaries to the left and right of the circuit as in \eqref{eq:intro-q4r-sketch}.
We obtain a circuit of the following form:
\[\mc C(\vec U)=\autoname{\begin{quantikz}[wire types={q,b,q,b,q,n,b,q,b,q},classical gap=2pt, row sep={2em,between origins}, column sep = {2em, between origins}]
    & &[2em] \ctrl{2} \gategroup[8,steps=4,style={inner
sep=2pt, fill=blue!10,dashed, rounded corners},background,label style={label
position=south,anchor=north,yshift=-0.5em}]{$=\mc C(\vec Q)$}& & & & &\coctrl{}\wire[d]{q}&& \\
    & \gate[2]{P^{(1)}} & & & & & &\gate{R^{(1)}}&&\\
    & & \gate{Q^{(1)}} &\ctrl{2} & & & &\coctrl{}\wire[u]{q}&\coctrl{}\wire[d]{q}&\\
    & \gate[2]{P^{(2)}} &  & & & & &&\gate{R^{(2)}}&\\
    & & & \gate{Q^{(2)}}& \ctrl{1}& & &\coctrl{}\wire[d]{q}&\coctrl{}\wire[u]{q}&\\
    & \defer{\rotatebox{90}{$\cdots$}}& \newwavepartial[style={minimum height=2.5em, wave color=blue!10},steps=4]{} && \defer{\hspace{2em}\rotatebox{-45}{$\cdots$}}& \ctrl{2}& \newwavepartial[style={minimum height=2.5em, wave color=white},steps=4]{}&\defer{\hspace{2em}\rotatebox{90}{$\cdots$}}&\\[.2em]
    & \gate[2]{P^{(m)}}& & & & & & &\gate{R^{(m)}}\wire[u]{q} & \\
    & & & & & \gate{Q^{(m)}}& & &\coctrl{}\wire[u]{q} &
\end{quantikz}}\,.
\]
We see a Moore--Nilsson circuit has appeared in the center column, this time on a small fraction of the original $n$ qubits.
Choosing a block size of $\mc O(1)$ and iterating this procedure $\mc O(\log n)$ times yields a $\mathsf{QNC}$ circuit of total depth $\mc O(\log n)$, as claimed in~\Cref{thm:single_qbit_parallelization-Intro}.
To obtain a $\mathsf{QNC_{2D}}$ circuit, one carefully arranges the $\log$-depth $\mathsf{QNC}$ circuit into a 2D grid, borrowing some tree embedding ideas from VLSI design~\cite{patersonRuzzoSnyder, Ruzzo1981}.
The formal proof is given in~\Cref{subsec:optimal_mn_circuits}.

\subsection{Outlook}
\label{subsec:Outlook}
The present work offers techniques to reduce the depth of quantum control cascade circuits in various parameter regimes.
This is good news for circuit compilation and quantum algorithms.
On the other hand, these parallelization results threaten the hope---first articulated by Moore and Nilsson---that it might be easier to resolve the unitary analogue of $\cc{NC}$ vs. $\cc{P}$ because even basic quantum circuits seemingly cannot be parallelized.

Our results here indicate that much remains to be understood.
While the main results tell us parallelization is sometimes easier than previously thought, our approach does not immediately capture a variety of generalizations of Moore--Nilsson circuits, such as:
\begin{itemize}
\itemsep0em
\item \textit{Cascade circuits with more controls.} Our \textit{control cascade} circuits have one qubit of control.
As soon as more qubits---or even one \textit{qutrit}---is used to control the next unitary, our techniques do not seem to immediately apply. For example, in the case of a cascade of qutrit-controlled unitaries, repeating the ideas we used above seems to require a certain $3\times 3$ generalization of the CS decomposition---and as we show in the discussion, \Cref{sec:discussion}, this generalization is false.
\item \textit{Multidimensional ``quantum memo tables.''} Control cascade circuits are in some sense a quantum analogue of a one-dimensional memo table.
What about two- or higher-dimensional ``quantum dynamic programming''? How far does the analogy to Arlazarov et al. \cite{arlazarov1970economical} go?
\item \textit{Non-control cascades.} 
    Our techniques also do not seem to immediately apply when control-$U$ is replaced by a general $(k+1)$-qubit unitary rather than $k$-qubit \textit{controlled} unitaries.
\end{itemize}
\noindent These items are discussed in more detail in \Cref{sec:discussion}.
The authors are hopeful that studying these generalizations will lead either to further developments in quantum parallelization methods or, potentially, to super-logarithmic unitary depth lower bounds.

\subsection{Comments on numerical stability and uniformity}
Our algorithms are exact and polynomial time in the model of real valued computation (BSS) with access to an SVD oracle (or more accurately, a CS decomposition oracle).
There are efficient and backwards-stable algorithms for computing the CS decomposition \cite{backwardcs}.
A full discussion of numerical particulars is beyond the scope of this work, but from this we may conclude the existence of an ``essentially exact'' polytime algorithm for standard Turing machines working to finite precision via the above paper on CS decompositions.
Here ``essentially exact'' means the runtime is polynomial in both the number of qubits $n$ and $\mathrm{polylog}(1/\eta)$ where $\eta$ is the desired output precision in operator norm.

\subsection{Other related work}

There is a preexisting work on ``Quantum Dynamic Programming'' \cite{QDP}, though their notion is different. In that work the next unitary in a sequence is not controlled by the previous state in the sense of being block diagonal in tensor product space, but instead is an arbitrary function of the previous state.
In this setting there is a great deal of difficulty in getting efficient circuits for even the naive staircase-type implementation. 
The authors of~\cite{QDP} show circuits of this form can be implemented approximately in linear depth, provided they have access to an exponential number of ancillae.

Another recent result~\cite{kahanamoku2025log} shows parallelizations in which ``staircase-like'' circuits make an appearance, this time for the Quantum Fourier Transform (QFT).
The authors of~\cite{kahanamoku2025log} take advantage of the ``staircase-like'' form of the approximate QFT to approximately commute subcircuits past each other, deriving an ancilla-free, logarithmic-depth circuit which implements the QFT up to small error in \textit{Frobenius} norm.
Then they show a logarithmic-depth worst-to-average case reduction circuit for the QFT, which together with the previous yields a logarithmic-depth operator norm approximation to the QFT with $\mc O(n/\log n)$ ancillae.
It is an interesting open question whether the techniques  in~\cite{kahanamoku2025log} can be adapted to the setting of the current paper or vice versa.
Naively, however, the settings appear incompatible: we derive exact or approximate parallelizations in the operator norm directly for our class of circuits, and it is unclear whether shallow worst-to-average case reductions are available for control cascade circuits.
In the other direction, our results only apply to staircase-like circuits with one qubit of control, while the staircase encountered in~\cite{kahanamoku2025log} has multi-qubit controlled phase rotations from one step to the next.

\subsection*{Acknowledgments}
Adam Bene Watts and Joseph Slote are very grateful to Atul Singh Arora for introducing the Moore--Nilsson conjecture to us.
Arora arrived at the Moore--Nilsson conjecture independently during his work on depth hierarchy theorems relative to an oracle~\cite{arora23,arora22}.
Adam Bene Watts and Joseph Slote are also very grateful for several discussions about this problem with Henry Yuen and the Columbia quantum group.
Adam Bene Watts would additionally like to thank
Umesh Vazirani, Richard Cleve, Chinmay Nirkhe, and Peter H\o yer for helpful comments about precursors to this work.
Parts of this project were completed while Joseph Slote was supported by Chris Umans' Simons Investigator grant.

\section{Quantum precomputation}
\label{sec:Q4R}

This section proves results about parallelizing general control cascade circuits. \Cref{subsec:Proof_of_Q4R_general} proves~\Cref{lem:q4rid-Intro}, a quantum precomputation identity that will be a key ingredient in subsequent proofs.
\Cref{subsec:general_control_cascade_reduction} proves~\Cref{thm:general-exact-Intro}, which shows we can rewrite a general control cascade as a control cascade of diagonal unitaries sandwiched between pre- and postprocessing operations. \Cref{subsec:diagonal_caching} discusses techniques for reducing the depth of the control cascade of diagonal unitaries by introducing ancillae, proving~\Cref{thm:general_parallelization-Intro}.

\subsection{The quantum precomputation identity}
\label{subsec:Proof_of_Q4R_general}

We begin by stating~\Cref{lem:q4rid-Intro}, with more detail than was included in the introduction.

\begin{replemma}{lem:q4rid-Intro}[Repeated]
Let $U=(U_0,U_1)$ be a $k$-qubit controlled unitary.
Then there exists $(k-1)$-qubit unitary $P$, $(k-1)$-qubit \textit{diagonal} unitary $D$,  and multiplexer $R$ unitary on $k+1$ qubits with two controls such that
\[
    \autoname{\begin{quantikz}[wire types = {q, b, q}, classical gap = 2pt, row sep = {2em, between origins}]
        & \coctrl{}\wire[d]{q}  & \\
        & \gate[2]{U} & \\
        & &
    \end{quantikz}}
    \;=\;
    \autoname{\begin{quantikz}[wire types = {q, b, q}, classical gap = 2pt, row sep = {2em, between origins}]
        &      & \ctrl{1} &  & \coctrl{}\wire[d]{q} &\\
        & \gate[2]{P} & \gate[2]{D} & & \gate{R}   &\\
        & \ghost{P} & \ghost{D} & \gate{\Phi} & \coctrl{}\wire[u]{q}     &
    \end{quantikz}}\,.
\]
Here $\Phi:=\frac1{\sqrt2}\left(\begin{smallmatrix}
    1 & i \\ i & 1
\end{smallmatrix}\right)=\sqrt{Z}H\sqrt{Z}$.

Moreover, the diagonal matrix $D$ has the following explicit form.
Let $W=\left(\begin{smallmatrix}
    W_{00} & W_{01}\\
    W_{10} & W_{11}
\end{smallmatrix}\right)$ be the $2\times 2$ block matrix corresponding to $\rev(U_0^\dagger U_1)$, \textit{i.e.}, $U_0^\dagger U_1$ with qubit order reversed.
Then \[D = \rev\begin{pmatrix}
       \Sigma_1 - i \Sigma_2 & 0\\
       0 & \Sigma_1 + i \Sigma_2
   \end{pmatrix}\]
where $\Sigma_1$ is a diagonal matrix of the singular values of $W_{00}$ in ascending order and $\Sigma_2$ is a diagonal matrix of the singular values of $W_{01}$ in descending order.
\end{replemma}

The proof of this lemma relies on a classical SVD-type decomposition for $2\times 2$ block matrices called the \textit{cosine-sine decomposition} (CS decomposition).
We state a version of this decomposition for unitaries.

\begin{lemma}[Cosine-sine decomposition]
\label{lem:CSD}
Consider any $2N \times 2N$ unitary
    \[U = \begin{pmatrix}
        U_{00} & U_{01}\\
        U_{10} & U_{11}
    \end{pmatrix}\,.
    \]
    Then there exist $N\times N$ unitaries $S_0,S_1,T_0,T_1$ such that
    $$
    U
    =
     \begin{pmatrix}
        S_0 & 0\\
        0 & S_1
    \end{pmatrix}
    \begin{pmatrix}
        \Sigma_1 & \Sigma_2\\
        - \Sigma_2 &   \Sigma_1
    \end{pmatrix}\,
    \begin{pmatrix}
        T_0 & 0\\
        0 & T_1
    \end{pmatrix}
    $$
    where $\Sigma_1$ is a diagonal matrix of the singular values of $U_{00}$ in ascending order, and $\Sigma_2$ is a diagonal matrix of the singular values of $U_{01}$ in descending order.
    Moreover, given $\Sigma_2$, this equivalence can be found exactly in time $\poly(N)$.
\end{lemma}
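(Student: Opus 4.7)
The strategy is to reduce to the standard SVD in three stages of block-diagonal similarity, using unitarity to control each new block that appears. The whole argument takes place inside $U = \begin{pmatrix} U_{00} & U_{01} \\ U_{10} & U_{11} \end{pmatrix}$.

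First I would take an SVD of the $(0,0)$ block, $U_{00} = S_0 \Sigma_1 T_0$, with singular values arranged in ascending order along the diagonal of $\Sigma_1$. Multiplying $U$ on the left by $\mathrm{diag}(S_0^\dagger,I)$ and on the right by $\mathrm{diag}(T_0^\dagger,I)$ produces a new unitary $U'$ whose top-left block is exactly $\Sigma_1$ while leaving the block-diagonal structure intact.

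Second, I would exploit unitarity of $U'$. The identities $U' (U')^\dagger = I$ and $(U')^\dagger U' = I$ specialize to $\Sigma_1^2 + U'_{01}(U'_{01})^\dagger = I$ and $\Sigma_1^2 + (U'_{10})^\dagger U'_{10} = I$. Hence $U'_{01}(U'_{01})^\dagger$ and $(U'_{10})^\dagger U'_{10}$ are both diagonal, with diagonal entries $1 - \sigma_i^2$ in the same order as the $\sigma_i$. This forces the left singular vectors of $U'_{01}$ and the right singular vectors of $U'_{10}$ to be standard basis vectors, so $U'_{01} = \Sigma_2 B^\dagger$ and $U'_{10} = C \Sigma_2$ for some unitaries $B,C$, where $\Sigma_2 := \sqrt{I - \Sigma_1^2}$ automatically has entries in descending order (the key bookkeeping point). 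Absorbing $-C^\dagger$ into the left and $B$ into the right via the block-diagonal conjugations $\mathrm{diag}(I,-C^\dagger)$ and $\mathrm{diag}(I,B)$ drives the three blocks $U_{00},U_{01},U_{10}$ to $\Sigma_1,\Sigma_2,-\Sigma_2$ respectively.

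Third, I would verify that the resulting bottom-right block is forced to equal $\Sigma_1$. Off-diagonal orthogonality of block columns gives $\Sigma_2 \tilde U_{11} = \Sigma_1 \Sigma_2$, and since $\Sigma_1,\Sigma_2$ are diagonal and commute, $\tilde U_{11} = \Sigma_1$ whenever $\Sigma_2$ is invertible. Setting $S_1 = -C$ and $T_1 = B^\dagger$ assembles the desired factorization. The main obstacle is the degenerate case where some $\sigma_i = 1$, so that the corresponding rows and columns of $\Sigma_2$ vanish: then the singular vector factors $B,C$ on the zero block are not uniquely determined, and the argument for the $(1,1)$ block must be made by pulling out the invertible part of $\Sigma_2$ and using row-unitarity to pin down the remaining diagonal entries of $\tilde U_{11}$ to unit modulus, which can be absorbed into $S_1$ and $T_1$ on that subspace. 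A continuity/perturbation argument also suffices.

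Finally, the polynomial-time claim follows because each stage reduces to a single SVD computation of an $N \times N$ matrix, and we perform only a constant number of such reductions plus block multiplications; once $\Sigma_2$ is known, $C$ and $B$ are recovered by column- or row-wise normalization, which is also in $\poly(N)$.
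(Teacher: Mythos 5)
The paper states this lemma without proof, citing Davis--Kahan and the survey \cite{cssurvey}, so there is no in-paper proof to compare against. Your argument is the standard, textbook proof of the CS decomposition, and it is correct in substance: take an SVD of $U_{00}$, use unitarity of the resulting $U'$ to show that $U_{01}'(U_{01}')^\dagger = (U_{10}')^\dagger U_{10}' = I - \Sigma_1^2 =: \Sigma_2^2$, factor $U_{01}' = \Sigma_2 B^\dagger$ and $U_{10}' = C\Sigma_2$ for unitary $B,C$, absorb those into $T_1$ and $S_1$, and let orthogonality of block columns pin down the $(1,1)$ block. The ascending/descending bookkeeping is correct because $\sigma \mapsto \sqrt{1-\sigma^2}$ is decreasing. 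The polynomial-time claim is also right: each step is a constant number of SVDs and matrix multiplications.

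Two small imprecisions, neither fatal. First, "this forces the left singular vectors of $U_{01}'$ to be standard basis vectors" is an overstatement when $\Sigma_2^2$ has repeated diagonal entries; they are not forced, but they can be chosen so, or more cleanly one can simply observe that $B^\dagger := \Sigma_2^{+} U_{01}'$ (Moore--Penrose inverse on the invertible part, extended to a unitary on the kernel) gives a unitary with $U_{01}' = \Sigma_2 B^\dagger$. Second, in the degenerate case where some $\sigma_i = 1$ (so $\Sigma_2$ has a zero on the diagonal), your description "pin down the remaining diagonal entries of $\tilde U_{11}$ to unit modulus" isn't quite what happens: on the subspace where $\Sigma_2$ vanishes, the block-row and block-column orthogonality conditions only force the corresponding principal submatrix of $\tilde U_{11}$ to be a full $\abs{S^c} \times \abs{S^c}$ unitary, not a diagonal one. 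The correct statement is that this unitary submatrix can be absorbed (say into $T_1$) by one more block-diagonal right multiplication, and this does not disturb $\Sigma_2$ precisely because $\Sigma_2$ is zero on those columns. Your "absorb into $S_1$ and $T_1$" instinct is right; only the intermediate description of what is being absorbed is off. The continuity/compactness alternative you mention also works.
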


The CS decomposition has a long history.
The name and full statement is essentially due to Chandler Davis and William Kahan \cite{Davis1969SomeNB}, though ideas along these lines can be traced back as far as the work of Jordan from 1875 \cite{BSMF_1875__3__103_2}.
We refer the reader to \cite{cssurvey} for a historical overview.

The CS decomposition admits a quantum circuit interpretation.

\begin{corollary}[Cosine-sine circuit decomposition]
\label{cor:circuit_CS_decomp}
Any $k$-qubit unitary $U$ can be written as
\[\mbox{\begin{quantikz}[wire types={q,b},classical gap=2pt, row sep={2em,between origins}]
&  \gate[2]{U} & \\
&  \ghost{U} & 
\end{quantikz}} 
= \mbox{\begin{quantikz}
[wire types 
= {q, b}, classical gap = 2pt, row sep={2em,between origins}]
& \coctrl{}\wire[d]{q} & \gate{\Phi^\dagger} & \gate[2]{D} & \gate{\Phi} & 
 \coctrl{}\wire[d]{q} & \\
 & \gate{S} & & \ghost{D} & & \gate{T} &  
\end{quantikz}}
\]
with $\Phi 
   = \frac 1 {\sqrt{2}}
   \begin{pmatrix}
        1 & i \\
        i & 1
    \end{pmatrix},$
 $D$ a diagonal matrix, and multiplexers $S$ and $T$.
\end{corollary}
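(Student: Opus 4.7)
The plan is to read off the desired circuit directly from the cosine--sine decomposition of Lemma \ref{lem:CSD} by combining it with an algebraic identity that rewrites the middle ``rotation'' factor via a basis change under $\Phi$.

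First, I would invoke Lemma \ref{lem:CSD} on the $2^k \times 2^k$ matrix $U$ (with $N = 2^{k-1}$), yielding
\[U = \begin{pmatrix} S_0 & 0 \\ 0 & S_1 \end{pmatrix} \begin{pmatrix} \Sigma_1 & \Sigma_2 \\ -\Sigma_2 & \Sigma_1 \end{pmatrix} \begin{pmatrix} T_0 & 0 \\ 0 & T_1 \end{pmatrix}.\]
The two outer block-diagonal factors are by definition multiplexers on the bottom $k-1$ qubits controlled by the top qubit, so they correspond exactly to the two coctrl'd multiplexers labeled $S$ and $T$ in the target circuit (with their roles possibly interchanged to account for the left-to-right convention in circuit diagrams versus right-to-left matrix multiplication).

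Second, I would diagonalize the middle factor $M := \bigl(\begin{smallmatrix} \Sigma_1 & \Sigma_2 \\ -\Sigma_2 & \Sigma_1\end{smallmatrix}\bigr)$ under conjugation by $\Phi \otimes I$. A direct calculation shows that for any block-diagonal $D = \mathrm{diag}(D_0, D_1)$,
\[(\Phi \otimes I)\, D\, (\Phi^\dagger \otimes I) \;=\; \frac{1}{2}\begin{pmatrix} D_0 + D_1 & i(D_1 - D_0) \\ i(D_0 - D_1) & D_0 + D_1 \end{pmatrix}.\]
Setting $D_0 = \Sigma_1 + i\Sigma_2$ and $D_1 = \Sigma_1 - i\Sigma_2$ makes the right-hand side equal $M$. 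Note that $D$ is genuinely diagonal because $\Sigma_1$ and $\Sigma_2$ are, and it is unitary because $\Sigma_1^2 + \Sigma_2^2 = I$ (which follows from the unitarity of $M$). Substituting this identity into the CS decomposition above yields precisely the displayed circuit.

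I do not foresee any serious obstacle; the argument is essentially bookkeeping. Conceptually, the middle step works because $\Phi$ simultaneously diagonalizes every real ``planar rotation'' $aI + b\bigl(\begin{smallmatrix} 0 & 1 \\ -1 & 0\end{smallmatrix}\bigr)$ (its common eigenvectors are $\frac{1}{\sqrt{2}}(1,\pm i)^{\mathsf{T}}$), and $M$ is exactly such a rotation ``entrywise'' because $\Sigma_1$ and $\Sigma_2$ are diagonal and commute. The only minor items to track are the reconciliation of circuit order with matrix product order and the sign convention for $i$ in the entries of $D$.
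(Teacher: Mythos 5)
Your proof is correct and follows essentially the same route as the paper: invoke Lemma~\ref{lem:CSD} to get the CS factorization, interpret the outer block-diagonal factors as multiplexers $S$ and $T$, and diagonalize the middle rotation block by conjugating with $\Phi \otimes I$. The only cosmetic differences are that you set $D_0 = \Sigma_1 + i\Sigma_2$, $D_1 = \Sigma_1 - i\Sigma_2$ and conjugate as $(\Phi\otimes I)D(\Phi^\dagger\otimes I)$, while the paper takes $D_0 = \Sigma_1 - i\Sigma_2$, $D_1 = \Sigma_1 + i\Sigma_2$ and conjugates as $(\Phi^\dagger\otimes I)D(\Phi\otimes I)$; these two changes cancel, and both give the same $\Sigma$-block. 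You flagged the matrix-order versus circuit-order bookkeeping yourself, which is precisely the only thing left to check.
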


\begin{proof}
From \Cref{lem:CSD}, we set $\begin{quantikz}
    \coctrl{}&\gate{S}
\end{quantikz} = \begin{pmatrix}
        S_0 & 0\\
        0 & S_1
    \end{pmatrix}$, $\begin{quantikz}
    \coctrl{}&\gate{T}
\end{quantikz} = \begin{pmatrix}
        T_0 & 0\\
        0 & T_1
    \end{pmatrix}$, and $D = \begin{pmatrix}
       \Sigma_1 - i \Sigma_2 & 0\\
       0 & \Sigma_1 + i \Sigma_2
   \end{pmatrix}$. It is then straightforward to check that 
   \begin{align*}
       \left(\Phi^\dagger \otimes I_{2^{k-1}}\right) D \left(\Phi\otimes I_{2^{k-1}}\right) 
       = \begin{pmatrix}
        \Sigma_1 & \Sigma_2\\
        - \Sigma_2 &   \Sigma_1
    \end{pmatrix},
   \end{align*} 
   and we are done. 
\end{proof}

We now move on to the proof of \Cref{lem:q4rid-Intro}.

\begin{proof}[Proof (\Cref{lem:q4rid-Intro}):]
We begin by rewriting $\begin{quantikz}
    \coctrl{}&\gate{U}
\end{quantikz}$
as a circuit with a single control gate by ``guessing'' that $U_0$ is applied and then undoing if in fact the control wire is $1$:
\begin{align*}
\begin{quantikz}[wire types={q,b},classical gap=2pt, row sep={2em,between origins}]
&\octrl{1} & \ctrl{1} & \\
& \gate[2]{U_0}  & \gate[2]{U_{1}} & 
\end{quantikz}
&= \begin{quantikz}[wire types={q,n,b},classical gap=2pt, row sep={2em,between origins}]
& &&[-1em] \ctrl{1}&[-1em]  & \\[-1.2em]
& & & & & \\[-6pt]
&\gate{U_0} & \gate[style={line width=.5pt}]{U_0^\dagger}\gategroup[1,steps=3,style={inner
sep=2pt, fill=black!2},background]{}  & & \gate[style={line width=.5pt}]{U_{1}} & 
\end{quantikz}
\end{align*}
Applying the CS decomposition upside-down to $U_0^\dagger U_1$ gives
\begin{align}
\autoname{\begin{quantikz}[wire types={q,b},classical gap=2pt, row sep={2em,between origins}]
&\octrl{1} & \ctrl{1} & \\
& \gate[2]{U_0}  & \gate[2]{U_{1}} & 
\end{quantikz}}
&= \autoname{\begin{quantikz}[wire types = {q,n,b, q}, classical gap = 2pt]
& & & &\ctrl{1} & & & \\[-.5em]
& & & & & & & \\[-8pt]
& \gate[2]{U_0} & \gate[style={line width=.5pt}]{S}\gategroup[2,steps=5,style={inner
sep=2pt, fill=black!2},background]{} & & \gate[2,style={line width=.5pt}]{D} & & \gate[style={line width=.5pt}]{T} & \\
& & \coctrl{}\wire[u]{q}&\gate[style={line width=.5pt}]{\Phi^\dagger} &  & \gate[style={line width=.5pt}]{\Phi} & 
 \coctrl{}\wire[u]{q} & 
\end{quantikz}}\label{eq:cs-applied}
\end{align}
for some $S,T$ and diagonal $D$.
Next the outer control is distributed to the gates inside its target and the $\begin{quantikz}
    \phase{}&\gate{S}&\coctrl{}
\end{quantikz}$ gate is rewritten so that it is open-controlled.
It then commutes with the gates to its right, making way for the final rearrangement.
\begin{align*}
\eqref{eq:cs-applied}&=\begin{quantikz}[wire types = {q,b, q}, classical gap = 2pt]
& &\ctrl{1} & &\ctrl{1} & & \ctrl{1}& \\
& \gate[2]{U_0} &\gate{S} & & \gate[2]{D} & & \gate{T} & \\
& & \coctrl{}\wire[u]{q}& \gate{\Phi^\dagger} &  & \gate{\Phi} & 
 \coctrl{}\wire[u]{q} & 
\end{quantikz}\\
&=\begin{quantikz}[wire types = {q,b, q}, classical gap = 2pt]
& & &\octrl{1} & &\ctrl{1} & & \ctrl{1}& \\
& \gate[2]{U_0} & \gate{S} &\gate{S^\dagger} & & \gate[2]{D} & & \gate{T} & \\
& & \coctrl{}\wire[u]{q}&\coctrl{}\wire[u]{q}& \gate{\Phi^\dagger} &  & \gate{\Phi} & 
 \coctrl{}\wire[u]{q} & 
\end{quantikz}\\
&=\begin{quantikz}[wire types = {q,b, q}, classical gap = 2pt]
& &  & &\ctrl{1} & &\octrl{1}\gategroup[3,steps=2,style={inner
sep=2pt, fill=blue!10,draw=blue!70!black, dashed, rounded corners},background,label style={label
position=below,anchor=north,yshift=-0.2cm}]{} & \ctrl{1}& \\
& \gate[2]{U_0}\gategroup[2,steps=3,style={inner
sep=2pt, fill=yellow!10, draw=yellow!70!black, dashed, rounded corners},background,label style={label
position=below,anchor=north,yshift=-0.2cm}]{} & \gate{S}  & & \gate[2]{D} & &\gate{S^\dagger}& \gate{T} & \\
& & \coctrl{}\wire[u]{q}& \gate{\Phi^\dagger} &  & \gate{\Phi} & 
 \coctrl{}\wire[u]{q}&\coctrl{}\wire[u]{q} & 
\end{quantikz}\;=:\;\autoname{\begin{quantikz}[wire types = {q,b, q}, classical gap = 2pt,row sep={1cm,between origins}]
&  &\ctrl{1} & &\coctrl[coctrl color=blue!70!black]{}\wire[d,style={draw=blue!70!black}]{q} & \\
& \gate[2,style={fill=yellow!10, draw=yellow!70!black},label style=yellow!50!black]{P} &  \gate[2]{D} & &\gate[style={fill=blue!10, draw=blue!70!black},label style=blue!50!black]{R}& \\
& & & \gate{\Phi} & 
 \coctrl[coctrl color=blue!70!black]{}\wire[u,style={draw=blue!70!black}]{q}& 
\end{quantikz}}
\end{align*}

\end{proof}

\subsection{Reducing control cascades to nearly diagonal cascades}
\label{subsec:general_control_cascade_reduction}

Now we move on to the proof of~\Cref{thm:general-exact-Intro}, the key ideas of which were sketched in the introduction.

\begin{reptheorem}{thm:general-exact-Intro}[Repeated]
        For any control cascade $\mc C(\vec U)$ with $m$-many $k$-qubit $U$'s there is a $\mathsf{QNC}$ circuit with depth
        $\mc O(4^k + m2^k)$ and no ancillae which exactly computes $\mc C(\vec U)$.
\end{reptheorem}

\begin{proof}

Recall that the control cascade $\mc C(\vec{U})$ consists of a sequence of $m$ closed-open controlled unitaries, each acting on $k+1$ bits, with the first bit of each closed-open controlled unitary (the control bit) overlapping the last bit of the unitary previous. See~\Cref{defn:control_cascade} for a picture. Applying \Cref{lem:q4rid-Intro} to each controlled unitary in this cascade and then collecting like terms produces three layers of unitaries: a layer of $k$ qubit ``preprocessing'' unitaries $P^{(i)}$, a cascade of $k$-qubit diagonal unitaries and single qubit rotations, and then a layer of doubly controlled $k + 1$ qubit ``postprocessing'' unitaries $R^{(i)}$:

\begin{align*}
\mc C(\vec U)&=\quad \autoname{\begin{quantikz}[wire types={q,b,q,b,q,n,b,q,b,q},classical gap=2pt, row sep={2em,between origins}]
    & &[-0.5em]\ctrl{1} &[-0.5em] &[-1em] \coctrl{}\wire[d]{q}&[-0.5em] &[-0.5em] &[-1em] & &[-0.5em] &[-1.2em] &[-1em]&[-.5em]&[-1em]&\\
    & \gate[2]{P^{(1)}} &\gate[2]{D^{(1)}} & & \gate{R^{(1)}}& & & & & & &&&&\\
    & & &\gate{\Phi} &\coctrl{}\wire[u]{q}&\ctrl{1} & & \coctrl{}\wire[d]{q}& & & &&&&\\
    &  & &&\gate[2]{P^{(2)}} &\gate[2]{D^{(2)}}  & &\gate{R^{(2)}}& & & &&&&\\
    & && & &  &\gate{\Phi}&\coctrl{}\wire[u]{q}&\ctrl{1} & & \coctrl{}\wire[d]{q}& &&&\\
    \newwave[minimum height=3em]& &  && &&&&& &\defer{\hspace{-2.3em}\rotatebox{-18}{$\cdots$}} & \ctrl{1}&\defer{\hspace{-2.3em}\rotatebox{-18}{$\cdots$}}&\coctrl{}\wire[d]{q}&\\[.5em]
    & & && & & & & &\gate[2]{P^{(m)}} & &\gate[2]{D^{(m)}}&&\gate{R^{(m)}}\wire[u]{q} & \\
    & & & & & && & && & & \gate{\Phi} &\coctrl{}\wire[u]{q} &
\end{quantikz}}\\[2em]
&=\quad\autoname{\begin{quantikz}[wire types={q,b,q,b,q,n,b,q,b,q},classical gap=2pt, row sep={2em,between origins}]
    & & \ctrl{1} &[-.5em] &[-1em] &[-.5em] &[-.5em] &[-.75em] &[-.5em] &[-1em] \coctrl{}\wire[d]{q}&&\\
    & \gate[2]{P^{(1)}} &\gate[2]{D^{(1)}} & & & & & && \gate{R^{(1)}}&&\\
    & &  &\gate{\Phi}&\ctrl{1} & & & & &\coctrl{}\wire[u]{q}&\coctrl{}\wire[d]{q}&\\
    & \gate[2]{P^{(2)}} & & &\gate[2]{D^{(2)}} & & & & & &\gate{R^{(2)}}&\\
    & & & & & \gate{\Phi} & \ctrl{1}& & &\coctrl{}\wire[d]{q}&\coctrl{}\wire[u]{q}&\\
    \newwave[minimum height=2.5em]& \defer{\rotatebox{90}{$\cdots$}}& & & & & \defer{\hspace{2em}\rotatebox{-45}{$\cdots$}}& \ctrl{1}& &\defer{\hspace{3.5em}\rotatebox{90}{$\cdots$}}&\\[.2em]
    &\gate[2]{P^{(m)}}& & & & & &\gate[2]{D^{(m)}} & & &\gate{R^{(m)}}\wire[u]{q} & \\
    & & & & & & & &\gate{\Phi} & &\coctrl{}\wire[u]{q} &
\end{quantikz}}
\end{align*}

The $P^{(i)}$ gates each act on disjoint sets of qubits, and the $R^{(i)}$ gates commute and so can be organized into two layers, with each layer involving gates acting on disjoint sets of qubits (as illustrated above).
Standard upper bounds show that both these layers can be implemented in depth $\mc O(4^k)$~\cite{tucci1999rudimentary,vartiainen04}. Additionally, any $k$-qubit diagonal unitary can be implemented exactly by a $\mathsf{QNC}$ circuit using at most $\mc O(2^n)$ gates \cite{bullockOptDiag}, so the central cascade requires depth $\mc O(m 2^k)$. Adding these depths together completes the proof.
\end{proof}

\subsection{Stronger parallelization with ancillae}
\label{subsec:diagonal_caching}

This section focuses on the cascade of diagonal unitaries and single qubit rotations produced by~\Cref{thm:general-exact-Intro}. We show this cascade can be further parallelized if ancilla qubits are introduced. Unlike the previous subsections, the techniques used here are standard, though some care is taken in their application. The result of this parallelization is~\Cref{thm:general_parallelization-Intro}, which we restate now. 

\begin{reptheorem}{thm:general_parallelization-Intro}[Restated]
    For any $\mc C(\vec{U})$ with $m$-many $k$-qubit unitaries and error threshold~$\epsilon$ there exists:
    \begin{enumerate}[label = (\alph*)]
        \item A $\mathsf{QNC}$ circuit of depth $\mc O(4^k + mk)$ and with $m2^{k+1}$ ancilla qubits which implements $\mc C(\vec{U})$ exactly. 
        \item A $\mathsf{QNC}$ circuit of depth $\mc O(4^k + m\log(\log(m/\epsilon))$ and with $4m\log(m/\epsilon)$ ancilla qubits which implements a unitary $\mc C '$ satisfying $\|\mc C' - \mc C(\vec{U})\|_\infty \leq \epsilon$.
    \end{enumerate}
\end{reptheorem}

\begin{proof}

From the proof of~\Cref{thm:general-exact-Intro}, we know that $\mc C(\vec{U})$ can be written as a layer of pre- and postprocessing unitaries with depth $\mc O (4^k)$, surrounding a cascade of controlled-diagonal unitaries and single qubit rotations, with each diagonal unitary acting on $k$ qubits. To analyze this cascade, we first note that any diagonal gate $D^{(i)}$ can be written as a product of two multiplexer phase gates acting on a single qubit: 

\begin{align} \label{eq:diagonal_gate_id}
\autoname{\begin{quantikz}[wire types={q,b,q},classical gap=2pt, row sep={2em,between origins}]
&  \ctrl{1}             & \\
&  \gate[2]{D^{(i)}}    & \\
&  \ghost{D^{(i)}}      &
\end{quantikz}} 
= \autoname{\begin{quantikz}[wire types={q,b,q},classical gap=2pt, row sep={2em,between origins}]
&  \ctrl{1}                 &           & \ctrl{1}              &           &\\
&  \coctrl{}\wire[d]{q}     &           & \coctrl{}\wire[d]{q}  &           &\\
&  \gate{Z(\theta^{(i)})}   & \gate{X}  & \gate{Z(\phi^{(i)})}  & \gate{X}  & 
\end{quantikz}}.  
\end{align}

To formalize this notation, let $\theta^{(i)}, \phi^{(i)}$ both be vectors of length $2^{k-1}$, and let their components $\theta^{(i)}_x, \phi^{(i)}_x$ be indexed by a length $k-1$ bit strings $x$. Then the circuit indicates that the product of phase gates
\begin{align*}
    Z(\theta^{(i)}_x) X Z(\phi^{(i)}_x) X &:= \begin{pmatrix}
        1   & 0\\
        0                           & \exp(2\pi i \theta^{(i)}_x) 
    \end{pmatrix} X \begin{pmatrix}
        1   & 0\\
        0                           & \exp(2\pi i \phi^{(i)}_x) 
    \end{pmatrix}X \\
    &= \begin{pmatrix}
        \exp(2\pi i \phi^{(i)}_x)   & 0\\
        0                           & \exp(2\pi i \theta^{(i)}_x) 
    \end{pmatrix}
\end{align*} 
is implemented on the bottom qubit controlled on the $k-1$ qubits above it being in the computational basis state $\ket{x}$ and the upper qubit being in the state $\ket{1}$.\footnote{We use the notation $Z(\theta_x)$ instead of the more standard $P(\theta_x)$ for phase gates to avoid confusing with our preprocessing unitaries $P^{(i)}$.}

Now the controlled diagonal gates appearing in the central cascade in the proof of~\Cref{thm:general-exact-Intro} can be rewritten as products of multiplexer phase gates using~\Cref{eq:diagonal_gate_id}.
This produces a cascade of products of multiplexer phase gates and single qubit rotations. 
We will show how to approximately and exactly parallelize cascades of this form, proving claims (b) and (a) of~\Cref{thm:general_parallelization-Intro}, respectively. We begin with the approximate parallelization technique. \\

\noindent \textit{Proof of Item (b):} For notational convenience, we focus on just a single multiplexer phase gate, which we write as 
\begin{align*}
\autoname{\begin{quantikz}[wire types={q,b,q},classical gap=2pt, row sep={2em,between origins}]
&  \ctrl{1}                 & \\
&  \coctrl{}\wire[d]{q}     & \\
&  \gate{Z(\theta)} &
\end{quantikz}}.
\end{align*}
As above, this multiplexer phase gate implements one of $2^{k-1}$ phase gates $Z(\theta_x)$ on the final qubit, controlled on the computational basis state of the above qubits. Without loss of generality, we can assume that every $\theta_x \in [0,1)$. Then, for every $k$ bit string $x$, let $\hat{\theta}_x$ be the approximation to $\theta_x$ obtained by taking the first $r$ digits of its binary expansion. We write this as $\theta_x = 0.\theta_{x,1}\theta_{x,2} ... \theta_{x,r}$. 

Next, we define the unitary $\textsc{LOAD}(\hat{\theta})$ which acts on $k$ control qubits and $r$ ancilla qubits and loads a binary representation of $\hat{\theta}_x$ as 
\begin{align*}
    \textsc{LOAD}(\hat{\theta})  \ket{x}\ket{0}  = \ket{x}\ket{\theta_{x,1}\theta_{x,2}...\theta_{x,r}}.
\end{align*}
We write the inverse operation as $\textsc{UNLOAD}(\hat{\theta})$. In circuit diagrams, we indicate both $\textsc{LOAD}$ and $\textsc{UNLOAD}$ operations by multiplexer controls indicating the control qubits, and $\textsc{LOAD}/\textsc{UNLOAD}$ gates on the target (ancilla) qubits. 

Now we consider the following circuit, consisting of $k+1$ qubits and $2r$ ancillae: 
\begin{gather*}
\begin{quantikz}[wire types={q,b,q,q,q,n,q,q,q,n,q},classical gap=2pt, row sep={2em,between origins}]
\lstick[3]{$k+1$ qubits} & & \ctrl{10} & &[-0.5em] &[-1em] & & & \ctrl{10} & & \\
&  \coctrl{}\wire[d]{q}                         &           & & & & &                                   &           & \coctrl{}\wire[d]{q}                  & \\
&  \wire[d]{q}                                  & \phase{}  & & & & &                                   & \phase{}  & \wire[d]{q}                           &\\
\lstick[4]{$r$ ancillae}&  \gate[4]{\textsc{LOAD}(\hat{\theta}_x)}      &           & \ctrl{4} & & & &                          &           & \gate[4]{\textsc{UNLOAD}(\hat{\theta}_x)} &\\
&  \ghost{\textsc{LOAD}(\theta_x)}              &           & & \ctrl{4} & & &                          &           & \ghost{\textsc{LOAD}(\theta_x)}       & \\
&  \ghost{\textsc{LOAD}(\theta_x)}              &           & & & & \ddots{} &                          &           & \ghost{\textsc{LOAD}(\theta_x)}       & \\
&  \ghost{\textsc{LOAD}(\theta_x)}              &           & & & & & \ctrl{4}                          &           & \ghost{\textsc{LOAD}(\theta_x)}       & \\
\lstick[4]{$r$ ancillae}&                                               & \targ{}   & \gate{Z(1)} & & & &     & \targ{}   &                                       & \\
&                                               & \targ{}   & & \gate{Z(1/2)} & & &   & \targ{}   &                                       & \\
&                                               &           & & & & \ddots{} &                          &           &                                       & \\
&                                               & \targ{}   & & & & & \gate{Z(1/2^{r-1})} & \targ{}   &                                       &
\end{quantikz}.
\end{gather*}
Straightforward calculation of the ``phase-kickback'' (see, for example, Section 3.1 of \cite{cleve2000fast}) shows that this circuit implements a multiplexer $Z(\hat{\theta})$ operation on the first $k+1$ qubits. That is, the action of the circuit above is equivalent to acting with the circuit 
\begin{align*}
\autoname{\begin{quantikz}[wire types={q,b,q},classical gap=2pt, row sep={2em,between origins}]
&  \ctrl{1}                 & \\
&  \coctrl{}\wire[d]{q}     & \\
&  \gate{Z(\hat{\theta})} &
\end{quantikz}}
\end{align*}
on the first $k+1$ qubits. But for any $x$, we also have $|| Z(\theta_x) - Z(\hat{\theta}_x)||_\infty \leq 2^{-r}$ and so we also have that the $\textsc{LOAD}/\textsc{UNLOAD}$ circuit approximates our original multiplexer phase gate to error at most $2^{-r}$ in infinity norm. 

Now we consider the original cascade of controlled diagonal gates and single qubit rotations. We replace each controlled diagonal gate by multiplexer phase gates using~\Cref{eq:diagonal_gate_id}, then approximate each multiplexer phase gates using $2r$ ancillae and the technique described above.
The resulting \textsc{LOAD} operations all commute with each other and can be commuted to the front of the circuit and implemented in depth $\mc O (2^k)$.
Similarly the \textsc{UNLOAD} operations commute with each other and the single qubit rotations (they act on disjoint qubits from the single qubit rotations) and can be commuted to the end of the circuit and also implemented in depth $\mc O(2^k)$. 
What is left to implement is the cascade is $m$ iterations of the $r$-output Toffoli gates and two qubit controlled $Z$ rotations in the above figure. The Toffoli gates can each be implemented with depth $\log r$ using a standard binary-tree fanout, and the controlled $Z$ rotations can be implemented with constant depth. 

Then this method lets us approximate the cascade of controlled-diagonal unitaries and single qubit rotations in depth $\mc O(2^k + m\log r)$. Substituting this into the proof of~\Cref{thm:general-exact-Intro} gives an overall circuit depth of $\mc O(k^24^k + m\log r)$ which implements an approximate version of the original unitary cascade. The approximation error is $\mc O(m2^{-r})$ in infinity norm, since there are $2m$ multiplexed phase gates $Z(\theta)$ which were approximated by controlled phase gates $Z(\hat{\theta})$. We also require a total of $4mr$ ancillae to implement all the approximate phase gates. Setting $r = \log(m/\epsilon)$ completes the proof.\\

\noindent \textit{Proof of Item (a):} As in the proof of Item (b), we begin by considering a single multiplexer phase gate, which we write as 
\begin{align*}
\autoname{\begin{quantikz}[wire types={q,b,q},classical gap=2pt, row sep={2em,between origins}]
&  \ctrl{1}                 & \\
&  \coctrl{}\wire[d]{q}     & \\
&  \gate{Z(\theta)} &
\end{quantikz}}.
\end{align*}
Now we define the \textsc{SELECT} operation, which acts on $k-1$ qubits and $2^{k-1}$ ancillae by writing a $1$ on the ancilla bit indexed by the computational basis state of the qubits, that is 
\begin{align*}
    \textsc{SELECT} \ket{x}\ket{00...0} = \ket{x}\ket{\delta_{0,x}\delta_{1,x}...\delta_{2^{k-1}-1,x}}
\end{align*}
where $\delta$ is a Kronecker delta and we interpret the bit string $x$ as indexing an integer between $0$ and $2^{k-1}-1$. We also let \textsc{UNSELECT} be the inverse operation. As with the \textsc{LOAD}/\textsc{UNLOAD} operations, we indicate these operations in a circuit by multiplexer controls on the control qubits, and \textsc{SELECT}/\textsc{UNSELECT} gates on the ancillae. 

Now we consider the following circuit:
\begin{gather*}
\begin{quantikz}[wire types={q,b,q,q,q,n,q,q,q,n,q},classical gap=2pt, row sep={2em,between origins}]
\lstick[3]{$k+1$ qubits} & & \ctrl{10} & &[-0.5em] &[-1em] & & & \ctrl{10} & & \\
&  \coctrl{}\wire[d]{q}                         &           & & & & &                                   &           & \coctrl{}\wire[d]{q}                  & \\
&  \wire[d]{q}                                  & \phase{}  & & & & &                                   & \phase{}  & \wire[d]{q}                           &\\
\lstick[4]{$2^{k-1}$ ancillae}&  \gate[4]{\textsc{SELECT}}      &           & \ctrl{4} & & & &                          &           & \gate[4]{\textsc{UNSELECT}} &\\
&               &           & & \ctrl{4} & & &                          &           & \ghost{\textsc{LOAD}(\theta_x)}       & \\
&               &           & & & & \ddots{} &                          &           & \ghost{\textsc{LOAD}(\theta_x)}       & \\
&               &           & & & & & \ctrl{4}                          &           & \ghost{\textsc{LOAD}(\theta_x)}       & \\
\lstick[4]{$2^{k-1}$ ancillae}&                                               & \targ{}   & \gate{Z(\theta_0)} & & & &     & \targ{}   &                                       & \\
&                                               & \targ{}   & & \gate{Z(\theta_1)} & & &   & \targ{}   &                                       & \\
&                                               &           & & & & \ddots{} &                          &           &                                       & \\
&                                               & \targ{}   & & & & & \gate{Z(\theta_{2^{k-1}-1})} & \targ{}   &                                       &
\end{quantikz}.
\end{gather*}
where we have again identified $k-1$ bit strings with their integer representations in indexing the phase gates $Z(\theta_x)$. Direct calculation of the phase-kickback shows that this circuit exactly implements the desired multiplexer phase gate on the first $k+1$ qubits. 

Then, as in the approximate case, we can consider the original cascade of controlled diagonal unitaries and single qubit rotations, replace the diagonal gates by multiplexer phase gates, and then rewrite those gates using the technique above. The \textsc{SELECT} and \textsc{UNSELECT} gates can be commuted to the front and back of the circuit, respectively, and implemented in depth $\mc O(2^k)$~\cite{bullockOptDiag}. The $2^{k-1}$-output Toffoli and phase gates are implemented in a cascade, requiring total depth $\mc O(mk)$. So the total circuit depth required for this exact rewriting is $\mc O (4^k + mk) $ while requiring $m 2^{k+1}$ total ancillae.
\end{proof}

The reader will notice that the proof above assumes a worst-case $D^{(i)}$.
Given some promise about the structure of $D^{(i)}$'s eigenvalues---for example that there is a constant number of them, or that they are generated by a constant number of phases in the circle group---it would be possible to implement the $D^{(i)}$ gates exactly using fewer ancillae and a shallower circuit.
This observation is the starting point of the tighter Moore--Nilsson upper bound proven in~\Cref{sec:MN_tight_bounds}.

\section{The Moore--Nilsson conjecture}
\label{sec:MN_tight_bounds}

The Moore--Nilsson conjecture concerns circuits $\mc C(\vec U)$ where the controlled $U^{(j)}$'s are all single qubit unitaries.
This section specializes parallelization techniques from the previous section to this setting and proves tight upper bounds for these Moore--Nilsson circuits.
As we will explain, a key ingredient is a CS decomposition for a special class of circuits we term ``valley circuits.''
It turns out we will only need to use a small amount of the available theory for these circuits; see
\Cref{app:CSDecFlippedP} for a fuller picture of their structure.

\subsection{Partitioning Moore--Nilsson circuits}
\label{subsec:chunking}

As a first step towards parallelizing Moore--Nilsson circuits, we note that we can collect together cascades of controlled unitaries into groups as pictured in \Cref{fig:chunking-MN}. This lets us reinterpret the Moore--Nilsson cascade as a different cascade consisting of fewer multiplexers, each acting on a larger number of qubits.   

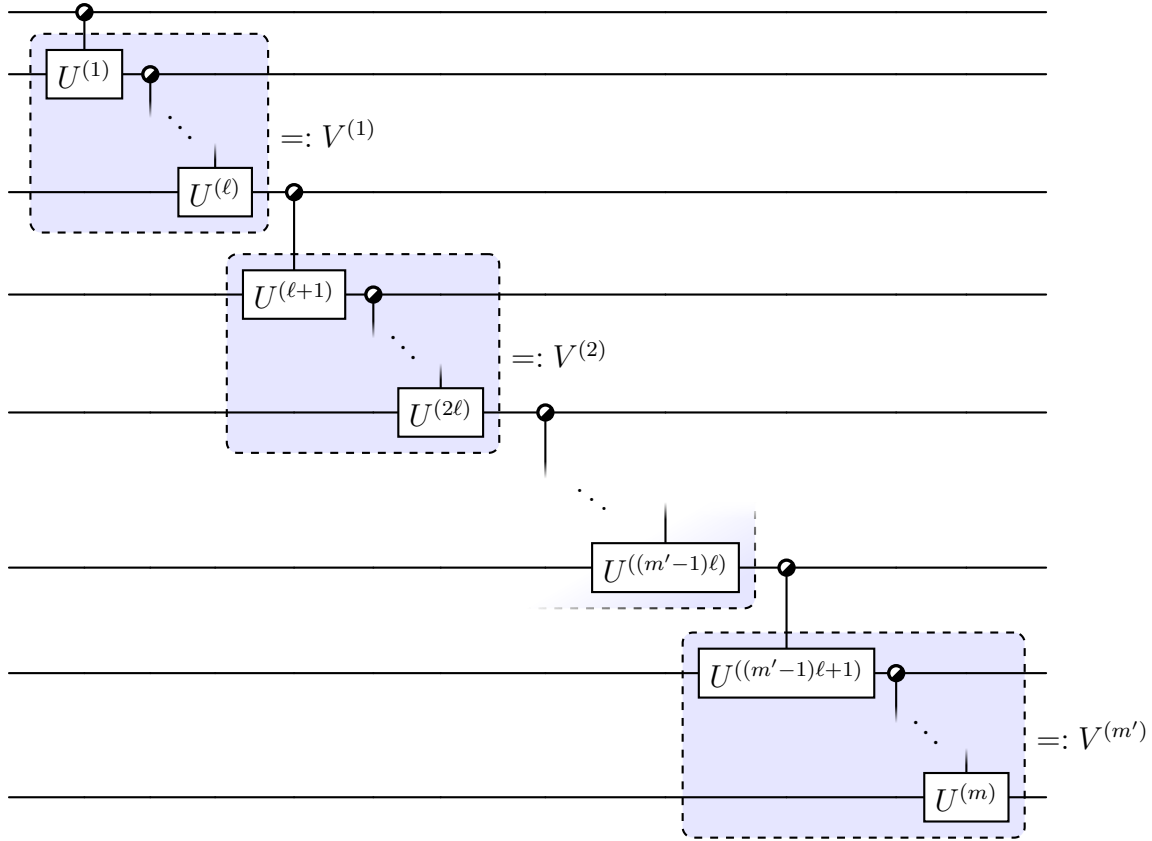
\begin{figure}[h]
    \centering
    
    \[
\autoname{\begin{quantikz}[wire types={q,q,n,q,q,n,q,n,q,q,n,q},classical gap=2pt, row sep={2em,between origins}]
    & \coctrl{}\wire[d]{q} &[-.6em] &[-.6em] &[-1.5em] &[-.6em] &[-.7em] &[.5em] & &[-2.5em] &[-.8em] &[-.6em] & \\
    & \gate{U^{(1)}} \gategroup[3,steps=3,style={inner
sep=2pt, fill=blue!10,dashed, rounded corners},background,label style={label
position=right,anchor=west,yshift=-0.2cm}]{$=: V^{(1)}$}             &    \coctrl{}\wire[d]{q}                                                                               &     
    &                               &                       &                       &                                       &                        
    &                               &                       &                       &                                                                   \\[-.2em]
    &           \newwavepartial[style={wave color=blue!10},steps=3]                    &                                              & \defer{\hspace{2em}\rotatebox{-45}{$\cdots$}}                               &  
                                  &                       &                       &                                       &                        
    &                               &                       &                       &                                                                \\
    &                               &                                              &                                       \gate{U^{(\ell )}}\wire[u]{q}
    & \coctrl{}\wire[d]{q}          &                       &                       &                                       &                        
    &                               &                       &                       &                                                                    \\[1.3em]
    &                               &                                              &                                         
    & \gate{U^{(\ell + 1)}} \gategroup[3,steps=3,style={inner
sep=2pt, fill=blue!10,dashed, rounded corners},background,label style={label
position=right,anchor=west,yshift=-0.2cm}]{$=: V^{(2)}$}         & \coctrl{}\wire[d]{q}                       &                       &                                                              
    &                               &                       &                       &                                       &                             \\[-.2em]
    &                                                                     &                                       & 
    &     \newwavepartial[style={wave color=blue!10},steps=3]                           &                        &            \defer{\hspace{2em}\rotatebox{-45}{$\cdots$}}                                                   
    &                               &                       &                       &                                       &                               \\
    &                               &                                              &                               
    &                                                     &                       & \gate{U^{(2\ell )}}\wire[u]{q}                      & \coctrl{}\wire[d]{q}      
    &                               &                       &                       &                                       &                               \\[.5em]
     &   \newwave[minimum height=2em]                    &                                              &                                       &                             
    &                               &                      &    \defer{\hspace{3em}\ddots}                         \gategroup[2,steps=2,style={inner
sep=2pt, fill=blue!10,path fading = bgfade,
  fading angle = 225,dashed, rounded corners},background,label style={label
position=right,anchor=west,yshift=-0.2cm}]{}    &                              &    
    &                              &                       &                       &                                       &                               \\[.5em]
%
%
    &                              &                       &                                                    
    &                               &                       &                       &                                                        
    & \gate{U^{((m'-1)\ell )}}\wire[u]{q}         & \coctrl{}\wire[d]{q}  &                       &                                       &                               \\[1.4em]
    &                               &                                              &                                       &                              
    &                               &                       &                       &                                       &                       
                \gate{U^{((m'-1)\ell + 1)}} \gategroup[3,steps=3,style={inner
sep=2pt, fill=blue!10,dashed, rounded corners},background,label style={label
position=right,anchor=west,yshift=-0.2cm}]{$=: V^{(m')}$}&   \coctrl{}\wire[d]{q}               &                                       &                           
\\
    &                                                                                                           &                              
    &                               &                       &                       &                                       &                       
    &     &                          \newwavepartial[style={wave color=blue!10},steps=3]                    &                                              & \defer{\hspace{2em}\rotatebox{-45}{$\cdots$}}                                                             & \\
    &                               &                                                                    
    &                               &                       &                       &                                       &                       
    &                               &                       &                       &                                       \gate{U^{(m)}}\wire[u]{q}  & 
\end{quantikz}}
\]
    \caption{Grouping subcircuits in a Moore--Nilsson circuit to obtain $\mc C(\vec V)$, a control cascade circuit of $m'$-many $j$-qubit unitaries.}
    \label{fig:chunking-MN}
\end{figure}

We pause here to remark that for appropriately sized groups, this partitioning of $\mc C(\vec U)$ is already sufficient to obtain mild depth reductions for Moore--Nilsson circuits. The key idea is to apply \Cref{thm:general_parallelization-Intro} to the each group as follows.

Each unitary in this cascade acts on $k' = \ell k \leq \log(n)/10$ qubits, and there are a total of $m' = m/\ell \leq 100n/\log n$ terms in the cascade.
Applying \Cref{thm:general_parallelization-Intro} with $\epsilon = 1/\mathrm{poly}(n)$ gives that this circuit can be implemented in depth 
\begin{align*}
    \mc O(4^{k'} + m' \log\log(m'/\epsilon)) \leq \mc O\left(n^{0.1} + \frac{100n \log\log(n \epsilon)}{\log n}\right) = \mc O\left(\frac{n\log\log n}{\log n}\right).
\end{align*}
This observation already gives a weak disproof of the Moore--Nilsson conjecture.
But we obtain much stronger parallelization results by exploiting the specific structure of the subcircuits formed by this grouping procedure.

\subsection{Valley circuits}
\label{subsec:flipped_pyramids}
Each of the groups $V^{(j)}$ discussed in the previous section take the form of smaller Moore--Nilsson cascades.
In this section we focus on just a single group, which we write as

\begin{align*}
    \autoname{\begin{quantikz}[wire types={q,q,q,n,q},classical gap=2pt, row sep = {1.75em, between origins}]
        & \coctrl{}\wire[d]{q} &\\
        & \gate[4]{V} &\\
        & & \\
        & & \\
        & & 
    \end{quantikz}}
    \quad=\quad
    \autoname{\begin{quantikz}[wire types={q,q,q,n,q},classical gap=2pt, row sep = {1.75em, between origins}]
        & \ctrl{1} & & & &\\
        & \gate{U^{(1)}} & \ctrl{1} & & & \\
        & & \gate{U^{(2)}} & \ctrl{1} & &\\
        \newwave &&&\defer{\hspace{2em}\rotatebox{-45}{$\cdots$}}& &\\
        & & & & \gate{U^{(\ell )}}\wire[u]{q} &
    \end{quantikz}}
\end{align*}
This multiplexer $V$  is the product of an open-controlled $V_0$ and closed-controlled $V_1$, where $V_0$ and $V_1$ are defined as:
\begin{gather*}
    \autoname{\begin{quantikz}[wire types={q,q,n,q},classical gap=2pt, row sep = {1.75em, between origins}, column sep = {0.5em}]
        & \gate[4]{V_0} &\\
        & & \\
        & & \\
        & & 
    \end{quantikz}}
    =
    \autoname{\begin{quantikz}[wire types={q,q,n,q},classical gap=2pt, row sep = {1.75em, between origins}, column sep = {0.5em}]
        & \ctrl{1} & & & \\
        & \gate{U^{(2)}} & \ctrl{1} & &\\
        \newwave &&\defer{\hspace{2em}\rotatebox{-45}{$\cdots$}}& &\\
        & & & \gate{U^{(\ell )}}\wire[u]{q} &
    \end{quantikz}}
    \quad \text{ and } \quad
    \autoname{\begin{quantikz}[wire types={q,q,n,q},classical gap=2pt, row sep = {1.75em, between origins}, column sep = {0.5em}]
        & \gate[4]{V_1} &\\
        & & \\
        & & \\
        & & 
    \end{quantikz}}
    =
    \autoname{\begin{quantikz}[wire types={q,q,n,q},classical gap=2pt, row sep = {1.75em, between origins}, column sep = {0.5em}]
        & \gate{U^{(1)}} & \ctrl{1} & & & \\
        & & \gate{U^{(2)}} & \ctrl{1} & &\\
        \newwave &&&\defer{\hspace{2em}\rotatebox{-45}{$\cdots$}}& &\\
        & & & & \gate{U^{(\ell )}}\wire[u]{q} &
    \end{quantikz}}
    .
\end{gather*}
In the previous section, we parallelized cascades by applying the quantum parallelization identity (\Cref{lem:q4rid-Intro}) to unitaries like the multiplexer $V$.
The proof of this identity involves applying the CS decomposition to the unitary $V_0^\dagger V_1$ with qubit order reversed which we write as $W = \rev (V_0^\dagger V_1)$.
But when $V$ has the form of a Moore--Nilsson cascade, the resulting unitary $W$ also takes on a specific form of a ``valley'' circuit:
\begin{align}
\label{eq:one_q_valley}
    W = 
    \autoname{\begin{quantikz}[wire types={q,q,n,q,q},classical gap=2pt, row sep = {1.75em, between origins}, column sep = {0.5em}]
        &\gate{{U^{(\ell )}}^\dagger} &&&& & &                       &        & \gate{U^{(\ell )}}    & \\
        &\ctrl{-1} &\gate{{U^{(\ell -1)}}^\dagger}\wire[d]{q} &&& & && \gate{U^{(\ell -1)}}\wire[d]{q}   &  \ctrl{-1}        & \\[0.3em]
        \newwave &&\defer{\hspace{2em}\rotatebox{-45}{$\cdots$}}&&&&&\defer{\hspace{1em}\rotatebox{45}{$\cdots$}\hspace{-1em}}&& &\\
        &&&\ctrl{-1}& \gate{{U^{(2)}}^\dagger} && \gate{U^{(2)}} & \ctrl{-1}&&& \\
        &&&&\ctrl{-1}& \gate{{U^{(1)}}} & \ctrl{-1} & & & &
    \end{quantikz}}.
\end{align}

The key technical result underpinning our stronger parallelization results is version of the cosine-sine decomposition adapted specifically to valley circuits. We state this next. 

\begin{lemma}[Cosine-sine for one-qubit valley circuits]
\label{lem:cos-sin_one_qubit}
    For any circuit $W$ of the form given in~\Cref{eq:one_q_valley}, we also have 
\[\mbox{\begin{quantikz}[wire types={q,b},classical gap=2pt, row sep={2em,between origins}]
&  \gate[2]{W} & \\
&  \ghost{W} & 
\end{quantikz}} 
= \mbox{\begin{quantikz}
[wire types 
= {q, b}, classical gap = 2pt, row sep={2em,between origins}]
& \coctrl{}\wire[d]{q} & \gate{\Phi^\dagger} & \gate{D} & \gate{\Phi} & 
 \coctrl{}\wire[d]{q} & \\
 & \gate{S} & & & & \gate{T} &  
\end{quantikz}}
\]
where $S, \Phi, T$ are defined as in~\Cref{cor:circuit_CS_decomp} and $D$ is a single-qubit diagonal unitary.  
\end{lemma}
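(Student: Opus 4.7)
My plan is to analyze the CS decomposition of the valley circuit $W$ directly and show that the diagonal factor $D$ from Corollary~\ref{cor:circuit_CS_decomp} is single-qubit, meaning $D = d \otimes I$ for some $2\times 2$ diagonal $d$. Since $D = \mathrm{diag}(\Sigma_1 - i\Sigma_2,\, \Sigma_1 + i\Sigma_2)$ with $\Sigma_1, \Sigma_2$ the singular-value diagonals of $W_{00}$ and $W_{01}$ respectively, this amounts to showing that both of these blocks are scalar multiples of unitaries---equivalently, $W_{00}^\dagger W_{00}$ and $W_{01}^\dagger W_{01}$ are each scalar multiples of the identity.

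The proof will proceed by induction on $\ell$, with a slightly stronger hypothesis: \emph{all four} blocks $W_{ij}$ of the $\ell$-qubit valley are scalar multiples of unitaries. The base case $\ell = 2$ is a direct $4\times 4$ computation giving $W_{00}^\dagger W_{00} = (1 - |q|^2 |c|^2)\, I$ and $W_{01}^\dagger W_{01} = |q|^2 |c|^2\, I$, where all cancellations and block-equalities come from the row- and column-norm identities of $U^{(1)}$ and $U^{(2)}$.

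For the inductive step, the key observation is that the $\ell$-qubit valley admits the nested form $W_\ell = C_{2,1}[U^{(\ell)}]\, M''\, C_{2,1}[U^{(\ell)\dagger}]$, where $M''$ is itself a valley circuit on qubits 2 through $\ell$ parametrized by $U^{(1)}, \ldots, U^{(\ell-1)}$. The outer gate $C_{2,1}[U^{(\ell)}]$ touches only qubits 1 and 2, so its $2\times 2$ block decomposition with respect to qubit 1 is explicit, with blocks expressed in terms of the entries $a,b,c,d$ of $U^{(\ell)}$. A short computation yields, for example, $W_{\ell,00} = \left(\begin{smallmatrix} M''_{00} & a^* M''_{01} \\ a M''_{10} & M''_{11}\end{smallmatrix}\right)$ (after using the row-norm identity $|a|^2 + |b|^2 = 1$). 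Applying the inductive hypothesis and the relations $M''^\dagger M'' = M'' M''^\dagger = I$, the off-diagonal blocks of $W_{\ell, 00}^\dagger W_{\ell, 00}$ vanish (from $M''_{00}^\dagger M''_{01} + M''_{10}^\dagger M''_{11} = 0$) and the diagonal blocks turn out equal (using $M''_{00} M''_{00}^\dagger + M''_{01} M''_{01}^\dagger = I$), giving a scalar multiple of the identity. Analogous computations handle $W_{\ell,01}$, where the row-orthogonality identity $a c^* + b d^* = 0$ for $U^{(\ell)}$ kills one of the off-diagonal blocks, and likewise $W_{\ell,10}$ and $W_{\ell,11}$ by symmetry, closing the induction.

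The main obstacle is the precise block bookkeeping in the inductive step, and the need to keep \emph{all four} blocks satisfying the proportional-to-unitary property: this is exactly why the stronger inductive hypothesis is needed (since $M''_{00}$ alone being scalar-times-unitary does not, for example, give $M''_{01} M''_{01}^\dagger \propto I$), and why both $M''^\dagger M'' = I$ and $M'' M''^\dagger = I$ must be used. All of the required identities, however, follow automatically from the unitarity of the inner valley $M''$ together with the unitarity of $U^{(\ell)}$, so no further structure beyond what the CS decomposition and the cascade layout already provide is needed.
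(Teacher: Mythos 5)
Your proposal is correct and the key structural idea---peel off the outermost controlled gate, express the blocks of the $\ell$-qubit valley in terms of blocks of the inner valley $M''$, and induct---is the same recursive strategy the paper uses in Lemmas~\ref{lem:SinglePyramidDelta} and \ref{lem:fpyrDelta}.\footnote{I verified the inductive step: with $U^{(\ell)} = \left(\begin{smallmatrix}a&b\\c&d\end{smallmatrix}\right)$, one gets $(W_\ell)_{00} = \left(\begin{smallmatrix}M''_{00} & a^*M''_{01}\\ aM''_{10} & M''_{11}\end{smallmatrix}\right)$ and $(W_\ell)_{01} = \left(\begin{smallmatrix}0 & c^*M''_{01}\\ bM''_{10} & 0\end{smallmatrix}\right)$ using $|a|^2+|b|^2=1$ and $ac^*+bd^*=0$; the four identities $M''^{\dagger}_{0j}M''_{0j}+M''^{\dagger}_{1j}M''_{1j}=I$ and $M''_{i0}M''^{\dagger}_{i0}+M''_{i1}M''^{\dagger}_{i1}=I$ then force the four block scalars $\alpha,\beta,\gamma,\delta$ to satisfy $\beta=\gamma$ and $\delta=\alpha$, from which all four $(W_\ell)_{ij}^\dagger(W_\ell)_{ij}$ become scalar multiples of $I$.} Where you differ is in what invariant you carry through the induction. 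You track the fact that \emph{all four} blocks are scalar multiples of unitaries, and close the induction using only the row/column unitarity relations of the inner valley and of $U^{(\ell)}$---you never invoke CS decompositions of the component gates. The paper instead tracks the CS stub $\Sigma_2^{\mc V}$, and in Lemma~\ref{lem:SinglePyramidDelta} it conjugates $W_{10}^\dagger W_{10}$ by a unitary built from the $S,T$ factors of the CS decompositions of the components, obtaining the more structured tensor-product formula $(\Sigma_2^{\mc V})^2 = (\Sigma_2^{U^{(\ell)}})^2 \otimes \bigotimes_j \bigl[(\Sigma_2^{U^{(\ell-j)}})^2\otimes I_2\bigr]$. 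Your route is the more elementary one for the specific single-qubit statement of Lemma~\ref{lem:cos-sin_one_qubit}: the strengthened four-block hypothesis plus pure unitarity is cheaper than assembling CS decompositions at every layer. The tradeoff is generality: the paper needs the full tensor-product formula from Lemma~\ref{lem:fpyrDelta} to prove the stronger stub-count result (Lemma~\ref{prop:q4rid-stubs moore-nilsson}) for valleys of \emph{multi}-qubit controlled gates, which your scalar-only argument does not directly give. So for the lemma as stated, your argument is a valid and cleaner substitute; for the later generalization, the paper's CS-based bookkeeping is what actually carries the extra structure.
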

Using this lemma in place of~\Cref{cor:circuit_CS_decomp} in the proof of~\Cref{lem:q4rid-Intro} gives the following quantum precomputation identity, first stated in the introduction.

\begin{replemma}{lem:q4rid-moore-nilsson-Intro}[Repeated]
Let $U^{(1)},\ldots, U^{(\ell)}$ be any $1$-qubit unitaries.
Then there exists an $(\ell-1)$-qubit unitary $P$, a 1-qubit unitary $Q$, and a doubly-controlled $(\ell-2)$-qubit unitary $R=(R_{00},R_{01},R_{10},R_{11})$
such that
\begin{equation}
\label{eq:q4rmn}
    \autoname{\begin{quantikz}[wire types={q,q,q,q},classical gap=2pt, row sep = {1.75em, between origins}]
        & \ctrl{1} & & & &\\
        & \gate{U^{(1)}} & \ctrl{1} & & & \\
        & & \gate{U^{(2)}} & \ctrl{1} & &\\
        \newwave &&&\defer{\hspace{2em}\rotatebox{-45}{$\cdots$}}& &\\
        & & & & \gate{U^{(\ell)}}\wire[u]{q} &
    \end{quantikz}}
    \quad=\quad
    \autoname{\begin{quantikz}[wire types={q,b,q},classical gap=2pt]
        & & \ctrl{2} &\coctrl{}\wire[d]{q} &\\
        & \gate[2]{P} & & \gate{R} &\\
        & & \gate{Q} & \coctrl{}\wire[u]{q} &
    \end{quantikz}}
\end{equation}
\end{replemma}

The next section proves more-general versions of \Cref{lem:cos-sin_one_qubit,lem:q4rid-moore-nilsson-Intro}, which apply to cascade and valley circuits consisting of multi-qubit controlled unitaries. This generality is not needed for the main result of this section (\Cref{thm:single_qbit_parallelization-Intro}), but is included because the more-general statements follow naturally from the proof techniques used. A proof of \Cref{thm:single_qbit_parallelization-Intro} using only \Cref{lem:q4rid-moore-nilsson-Intro} is given in \Cref{subsec:optimal_mn_circuits} and a reader may skip straight to this section if desired.

\subsection{A CS decomposition for valley circuits and a refined precomputation identity}
\label{subsec:flipped_pyramid_cosine_sine}

We begin by defining some notation for CS decompositions. 
For any even-dimensional unitary $Y$, let a CS decomposition be denoted
$$Y = \begin{pmatrix}
    S_0^Y & 0 \\ 0 & S_1^Y
\end{pmatrix}
\begin{pmatrix}
    \Sigma_1^Y & \Sigma_2^Y \\ -\Sigma_2^Y & \Sigma_1^Y
\end{pmatrix}
\begin{pmatrix}
    T_0^Y & 0 \\ 0 & T_1^Y
\end{pmatrix},$$
Recall that $\Sigma_1^Y, \Sigma_2^Y$ are diagonal matrices. We call these matrices the \textit{stubs} of  $Y$.
%
The \stubs \ $Y$ are not unique. We can permute the entries of $\Sigma_1^Y, \Sigma_2^Y$ simultaneously by conjugating with the permutation 
\begin{align*}
 I_2 \otimes \pi = \begin{pmatrix}
        \pi & 0 \\ 0 & \pi
\end{pmatrix},
\end{align*} 
which can be absorbed into $S$ and $T$.
Similarly, we can change the phase of the entries of $\Sigma_1^Y, \Sigma_2^Y$ be left multiplying by a diagonal matrix, and then absorbing that matrix into the $S_0^Y, S_1^Y$ matrices. 
Throughout this section we use the convention that the stubs are nonnegative diagonal matrices but entries can be arranged in any order.

The \stubs of a valley circuit can be written in terms of the \stubs the unitaries which define the valley circuit.
The next lemma gives such
an algebraic stub formula. In \Cref{scc:CSDecFlippedP} we expand this result further and give formulas 
for all of the components $S,T$ and \stubs the 
CS Decomposition.
The appendix proof is also stated in terms of circuit diagrams.

\begin{lemma}
\label{lem:SinglePyramidDelta}
Let unitaries $U$, $V$ and $W$ be related as in the following circuit. 
\begin{align*}
    W = \begin{quantikz}[classical gap=0.07cm, wire types = {b, q, b}]
        & \gate{U^\dagger} & & \gate{U} & \\
        & \ctrl{-1} & \gate[2]{V} & \ctrl{-1} &\\
        & & \ghost{V} & &
     \end{quantikz}
\end{align*}
Then, for any CS Decompositions for $U$, $V$, and $W$, defining 
$$M := (S_0^{U} \otimes \ketbra00 \otimes T_0^{V\dagger} + T_0^{U\dagger} \otimes \ketbra11 \otimes T_1^{V\dagger}) \; T_0^{W\dagger}.$$
gives
$$(\Sigma_2^W)^2 = M^\dagger \; ( (\Sigma_2^U)^2 \otimes I_2  \otimes (\Sigma_2^V)^2) \; M.$$
Consequently, there exists a CS Decomposition of $W$ with $T_0^W$ chosen to make $M = I$ and
$$(\Sigma_2^W)^2 = (\Sigma_2^U)^2 \otimes I_2 \otimes (\Sigma_2^V)^2.$$

\end{lemma}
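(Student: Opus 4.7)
The plan is to compute the top-right block $W^{01}$ of $W$ in the chosen CS partition directly from the circuit identity $W = U^\dagger V U$, and then to extract $\Sigma_2^W$ via $(W^{01})^\dagger W^{01}$. First I would expand $W$ in block form with respect to the middle qubit (with $U$ acting on the top bundle together with the middle qubit, $V$ acting on the middle qubit together with the bottom bundle, both lifted trivially to the missing tensor factor). A short index chase, using $(U^\dagger)^{m,p}=(U^{p,m})^\dagger$, gives
\[
W^{m,n} \;=\; \sum_{p,q\in\{0,1\}}(U^{p,m})^\dagger U^{q,n}\otimes V^{p,q},
\]
so in particular $W^{01}=\sum_{p,q}(U^{p,0})^\dagger U^{q,1}\otimes V^{p,q}$.

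Second, I would substitute the CS decompositions of $U$ and $V$ into this expression. Each block $U^{p,q}$ factors as $S^U_p\,(\pm\Sigma^U_\bullet)\,T^U_q$, so every summand $(U^{p,0})^\dagger U^{q,1}$ takes the form $(T_0^U)^\dagger X_{p,q}\,T_1^U$ with $X_{p,q}$ depending only on $\Sigma_1^U,\Sigma_2^U$ and the auxiliary unitary $P:=(S_0^U)^\dagger S_1^U$. Pulling the outer $T$'s to either side of the sum reduces the problem to analyzing a core matrix $\hat Y := \sum_{p,q}X_{p,q}\otimes V^{p,q}$; expanding $V^{p,q}$ via $V$'s CS decomposition and then collapsing the resulting cross-terms using the Pythagorean identities $(\Sigma_1^U)^2+(\Sigma_2^U)^2=I$ and $(\Sigma_1^V)^2+(\Sigma_2^V)^2=I$ (together with $PP^\dagger=I$) should reduce $\hat Y^\dagger\hat Y$ to $(\Sigma_2^U)^2\otimes I_2\otimes(\Sigma_2^V)^2$ conjugated by a multiplexer-shaped unitary. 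Reading off that conjugating unitary and combining with $(W^{01})^\dagger W^{01}=(T_1^W)^\dagger(\Sigma_2^W)^2 T_1^W$ (from the CS decomposition of $W$) then recovers
$(\Sigma_2^W)^2 = M^\dagger\bigl((\Sigma_2^U)^2\otimes I_2\otimes(\Sigma_2^V)^2\bigr)M$
with $M$ in precisely the multiplexer form stated.

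For the ``consequently'' clause, I would invoke the standard non-uniqueness of the CS decomposition: the pair $(S^W, T^W)$ is determined only up to a consistent right-multiplication by unitaries respecting the block structure of $\Sigma_2^W$. Choosing $T_0^W$ equal to the multiplexer $S_0^U\otimes\ketbra{0}{0}\otimes T_0^{V\dagger}+T_0^{U\dagger}\otimes\ketbra{1}{1}\otimes T_1^{V\dagger}$ forces $M=I$ and yields the cleaner tensor identity $(\Sigma_2^W)^2=(\Sigma_2^U)^2\otimes I_2\otimes(\Sigma_2^V)^2$.

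The main obstacle will be the algebraic simplification in step 2: the full expansion produces roughly sixteen tensor-product terms involving $P$ and $P^\dagger$ on the $U$-side and the full $V$-blocks on the other, and it is a priori unclear why they should collapse so drastically. A cleaner route---likely matching the circuit-diagrammatic proof the authors suggest for the appendix---is to first change basis so that $U$ and $V$ are in CS normal form (absorbing the $S$'s and $T$'s into the ambient picture), carry out the short calculation there, and then restore the $S,T$ unitaries by unwinding the conjugations; the pieces they leave behind should assemble into exactly the multiplexer defining $M$.
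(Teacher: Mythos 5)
Your high-level strategy---extract an off-diagonal block, form its Gram matrix, and diagonalize---is exactly the paper's approach, but your step 1 starts from a misreading of the circuit, and this derails the rest of the argument.

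There are two compounding issues. First, the circuit does \emph{not} implement $W = U^\dagger V U$: the outer gates are \emph{controlled} $U^\dagger$ and $U$ (control on the middle qubit), and $U$ acts only on the top bundle, not on the top bundle together with the middle qubit. Second---and this is the gap that actually kills the computation---the CS decompositions of $U$, $V$, and $W$ in the lemma all partition by the \emph{first} qubit of the respective operator. For $W$ and $U$ that is the top-most wire of the top bundle; the middle qubit is irrelevant to those partitions (it is only the first qubit of $V$). Your formula
\[
W^{m,n} \;=\; \sum_{p,q}(U^{p,m})^\dagger U^{q,n}\otimes V^{p,q}
\]
computes blocks of $W$ with respect to the middle qubit, which is not the block appearing in $W$'s CS decomposition. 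Indeed, if you carry your formula forward with the actual controlled gate $\tilde U = I\otimes\ketbra{0}{0} + U\otimes\ketbra{1}{1}$ (which is block-diagonal in the middle qubit), you get $W^{01} = U\otimes V^{01}$, hence $(W^{01})^\dagger W^{01} = I\otimes V^{01\dagger}V^{01}$---the $(\Sigma_2^U)^2$ factor never appears. The dimensions also fail: with your reading of $U$ as a unitary on the top bundle plus the middle qubit, $(\Sigma_2^U)^2\otimes I_2\otimes(\Sigma_2^V)^2$ would be twice the size of $\Sigma_2^W$.

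The fix, which is what the paper does, is to take the $(1,0)$ block of $W$ with respect to its \emph{first} qubit. Writing the circuit out with the controls made explicit,
\[
W = \bigl(U^\dagger\otimes\ketbra{1}{1}+I\otimes\ketbra{0}{0}\bigr)_{\mathrm{tm}}\,(I\otimes V)_{\mathrm{mb}}\,\bigl(U\otimes\ketbra{1}{1}+I\otimes\ketbra{0}{0}\bigr)_{\mathrm{tm}},
\]
one finds that the identity terms on the top bundle contribute nothing to $W_{10}$ (since $\bra{1}I\ket{0}=0$), leaving
\[
W_{10} = U_{01}^\dagger\otimes\ketbra{1}{0}\otimes V_{10} \;+\; U_{10}\otimes\ketbra{0}{1}\otimes V_{01}.
\]
Crucially the cross terms in $W_{10}^\dagger W_{10}$ vanish because $\bra{0}{1}\rangle\!\langle{0}\ket{1}=0$, and the surviving $\ketbra{0}{0}$ and $\ketbra{1}{1}$ pieces recombine to $I_2$ after conjugation by $S_0^U\otimes\ketbra{0}{0}\otimes T_0^{V\dagger}+T_0^{U\dagger}\otimes\ketbra{1}{1}\otimes T_1^{V\dagger}$, producing $(\Sigma_2^U)^2\otimes I_2\otimes(\Sigma_2^V)^2$. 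This is where the $(\Sigma_2^U)^2$ factor actually comes from, and it is precisely what your middle-qubit expansion cannot see. Your handling of the ``consequently'' clause via CS non-uniqueness is fine once the first identity is in hand.
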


\begin{proof}
To start, we can break $W$ into its blocks:
    $$W = \begin{pmatrix}
        W_{00} & W_{01} \\ W_{10} & W_{11}
    \end{pmatrix}$$
    
    Next, we want to rewrite $W_{10}$ in terms of $U$ and $V$. First, using $\ket0$ and $\ket1$ we can use the middle wire to split the blocks to rewrite~$W$:
    $$
    W = [(U^\dagger \ \otimes \ \ketbra11 \ + \ I_{\dim(U)} \ \otimes \ketbra00 )\ \otimes \ I_{\dim (V)/2} ]\ (I_{\dim(U)} \otimes V )\  [(U \otimes \ \ketbra11 \ + \ I_{\dim(U)} \ \otimes \ \ketbra00 ) \ \otimes \ I_{\dim(V)/2}].
    $$
    
    Breaking $V$ into its blocks
    $$V = \begin{pmatrix}
        V_{00} & V_{01} \\ V_{10} & V_{11}
    \end{pmatrix},$$
    we can then rewrite
    $$V = \ketbra00 \otimes V_{00} + \ketbra10 \otimes V_{10} + \ketbra01 \otimes V_{01} + \ketbra11 \otimes V_{11}.$$
    With this, we can multiply out $W$ to get
    $$
    W = I_{\dim(U)} \otimes \ketbra00 \otimes V_{00} \; + \; U^\dagger \otimes \ketbra10 \otimes V_{10} \; + \; U \otimes \ketbra01 \otimes V_{01} + I_{\dim(U)} \otimes \ketbra11 \otimes V_{11}.
    $$
    From here, we can express $W_{10}$ using $W$:
    \begin{align*}
        W_{10} &= ( \bra1 \otimes I_{\dim(W)/2} )\ W\ (\ket0 \otimes I_{\dim(W)/2}) \\
        &= (\bra1 \otimes I_{\dim(U)/2} \otimes I_2 \otimes I_{\dim(V)/2})\ W\ (\ket0 \otimes I_{\dim(U)} \otimes I_2 \otimes I_{\dim(V)/2})\\
        &= [(\bra1 \otimes I_{\dim(U)/2})\ U^\dagger \ (\ket0 \otimes I_{\dim(U)/2})] \otimes V_{10}\ + \ [(\bra1 \otimes I_{\dim(U)})\ U \ (\ket0 \otimes I_{\dim(U)})] \otimes V_{01}.
    \end{align*}
   To continue we expand $U$ similarly to $W$ and $V$ as
    $$U = \begin{pmatrix}
        U_{00} & U_{01} \\ U_{10} & U_{11}
    \end{pmatrix}.$$
    With this,
    $$
    W_{10} = U_{01}^\dagger \otimes \ketbra10 \otimes V_{10} \; + \; U_{10} \otimes \ketbra01 \otimes V_{01},
    $$
    so
    \begin{align*}
        W_{10}^\dagger W_{10} 
        &= U_{01} U_{01}^\dagger \otimes \ketbra00 \otimes V_{10}^\dagger V_{10} \ + \ U_{10}^\dagger U_{10} \otimes \ketbra11 \otimes V_{01}^\dagger V_{01}.
    \end{align*}

    Now we write a CS decomposition of $U$ as
    $$
    U = \begin{pmatrix}
        S_0^U & 0 \\ 0 & S_1^U
    \end{pmatrix}\begin{pmatrix}
        \Sigma_1^U & \Sigma_2^U \\ -\Sigma_2^U & \Sigma_1^U
    \end{pmatrix}\begin{pmatrix}
        T_0^U & 0 \\ 0 & T_1^U
    \end{pmatrix}
    $$
    so $U_{10} = -S_1^U\Sigma_2^UT_0^U$ and $U_{01} = S_0^U \Sigma_2^UT_1^U$. Likewise, we can write a CS Decomposition of $V$ and $W$ so $V_{10} = -S_1^V\Sigma_2^VT_0^V$, $V_{01} = S_0^V \Sigma_2^VT_1^V$, and $W_{10} = -S_1^W\Sigma_2^WT_0^W$.
    With this,
    $$W_{10}^\dagger W_{10} = S_0^{U}\Sigma_2^{U}\Sigma_2^{U\dagger}S_0^{U\dagger} \otimes \ketbra00 \otimes T_0^{V\dagger}\Sigma_2^{V\dagger}\Sigma_2^VT_0^V + 
    T_0^{U\dagger} \Sigma_2^{U\dagger} \Sigma_2^{U}T_0^{U} \otimes \ketbra11 \otimes T_1^{V\dagger}\Sigma_2^{V\dagger}\Sigma_2^VT_1^V.$$
    
    Conjugating $W_{10}^\dagger W_{10}$ by the unitary matrix 
    $S_0^{U} \otimes \ketbra 00 \otimes T_0^{V\dagger} + T_0^{U\dagger} \otimes \ketbra 11 \otimes T_1^{V\dagger}$ 
    we find
    \begin{align*}
        \Big(S_0^{U} \otimes \ketbra 00 \otimes T_0^{V\dagger} + T_0^{U\dagger} &\otimes \ketbra 11 \otimes T_1^{V\dagger}\Big)^\dagger \ W_{10}^\dagger W_{10} \  \Big(S_0^{U} \otimes \ketbra 00 \otimes T_0^{V\dagger} + T_0^{U\dagger} \otimes \ketbra 11 \otimes T_1^{V\dagger}\Big) \\[0.5em]
        &= \Sigma_2^{U\dagger}\Sigma_2^U \otimes \ketbra00 \otimes \Sigma_2^{V\dagger}\Sigma_2^{V} + \Sigma_2^{U}\Sigma_2^{U\dagger} \otimes \ketbra11 \otimes \Sigma_2^{V\dagger}\Sigma_2^{V} \\[0.5em]
        &=\Sigma_2^{U\dagger}\Sigma_2^U \otimes I_2 \otimes \Sigma_2^{V\dagger}\Sigma_2^V
    \end{align*}
    At the same time, $W_{10}^\dagger W_{10} = T_0^{W\dagger}\Sigma_2^{W\dagger}\Sigma_2^WT_0^W$ from the CS Decomposition for $W$.

Then we set 
\begin{align*}
M
:=
  \Big(S_0^{U} \otimes \ketbra 00 \otimes T_0^{V\dagger} \ + \ T_0^{U\dagger} \otimes \ketbra 11 \otimes T_1^{V\dagger}\Big) \; T_0^{W\dagger}
\end{align*}
so
\begin{align*}
    \Sigma_2^{W\dagger}\Sigma_2^W = M^\dagger \; \Big(\Sigma_2^{U\dagger}\Sigma_2^U \otimes I_2 \otimes \Sigma_2^{V\dagger}\Sigma_2^V\Big) \; M.
\end{align*}

Note to this point we have not used that
the $\Sigma_2$
are real diagonal matrices.
But since they come from the CS Decomposition and are therefore real we have
$$(\Sigma_2^W)^2 = M^\dagger \ \Big((\Sigma_2^U)^2 \otimes I_2 \otimes (\Sigma_2^V)^2\Big) \ M,$$
as desired.
\end{proof}

Applying the above lemma inductively gives us our desired CS decomposition for valleys. The resulting lemma applies to a larger class of valley circuits in which each of the unitaries $U^{(i)}$ can possibly acting on many qubits, although the single-qubit case remains the strongest version of the statement. We define this class of circuits first, then state the lemma and a useful corollary. 

\begin{definition}
    Let $\vec{U} = (U^{(1)}, U^{(2)}, ..., U^{(\ell )})$ be a vector of unitaries, with the number of qubits each unitary acts on not necessarily equal. Then the valley circuit $\mc V (\vec{U})$ is the unitary given by 
    \begin{align*}
    \mc V(\vec{U}) = 
    \autoname{\begin{quantikz}[wire types={b,q,b,n,q,b,q,b},classical gap=2pt, row sep = {1.75em, between origins}, column sep = {0.5em}]
        &\gate{{U^{(\ell )\dagger}}} &&&& & &                       &        & \gate{U^{(\ell )}}    & \\
        &\ctrl{-1} &\gate[2]{{U^{(\ell -1)\dagger}}}\wire[d]{q} &&& & && \gate[2]{U^{(\ell -1)}}\wire[d]{q}   &  \ctrl{-1}        & \\
        & &\wire[d]{q} &&& & && \wire[d]{q}   &  & \\[0.3em]
        \newwave &&\defer{\hspace{2em}\rotatebox{-45}{$\cdots$}}&&&&&\defer{\hspace{1em}\rotatebox{45}{$\cdots$}\hspace{-1em}}&& &\\
        &&&\ctrl{-1}& \gate[2]{U^{(2)\dagger}} && \gate[2]{U^{(2)}} & \ctrl{-1}&&& \\
        &&&&  && & &&& \\
        &&&&\ctrl{-1}& \gate[2]{U^{(1)}} & \ctrl{-1} & & & & \\
        &&&&& & & & & &
    \end{quantikz}}
    \end{align*}
\end{definition}

\begin{lemma}
\label{lem:fpyrDelta}
    For any valley $\mc V (U^{(1)}, U^{(2)}, ..., ,U^{(\ell )})$, and arbitrary CS decompositions of $U^{(1)}, U^{(2)}, ..., ,U^{(\ell )}$, there exists a CS decomposition of $\mc V$ with
\begin{align*}
    (\Sigma_2^{\mc V})^2
  = (\Sigma_2^{U^{(\ell )}})^2 \otimes
 \bigotimes_{j=1}^{\ell-1}
  [   (\Sigma_2^{U^{(\ell -j)}})^2 \otimes I_2 ] . 
\end{align*}

\end{lemma}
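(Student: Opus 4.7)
The plan is to proceed by induction on the number of unitaries $\ell$, peeling off the outermost layer $U^{(\ell)}$ at each step and invoking \Cref{lem:SinglePyramidDelta} to reduce to a valley on $\ell-1$ unitaries.

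The base case $\ell = 1$ is trivial: the valley circuit is just (a controlled version of) $U^{(1)}$, and the claimed formula reduces to $(\Sigma_2^{\mc V})^2 = (\Sigma_2^{U^{(1)}})^2$, which is the definition of the stubs of $U^{(1)}$.

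For the inductive step, I would observe that the valley $\mc V(U^{(1)},\ldots,U^{(\ell)})$ has precisely the sandwich form treated in \Cref{lem:SinglePyramidDelta}: the outermost layer consists of $U^{(\ell)\dagger}$ on the top qubits controlled by the next wire, and at the end $U^{(\ell)}$ again controlled by that same wire; in between these two controlled gates sits the smaller valley $\mc V(U^{(1)},\ldots,U^{(\ell-1)})$. Setting $U := U^{(\ell)}$ and $V := \mc V(U^{(1)},\ldots,U^{(\ell-1)})$, \Cref{lem:SinglePyramidDelta} gives a CS decomposition of the whole valley with
\[
(\Sigma_2^{\mc V})^2 \;=\; (\Sigma_2^{U^{(\ell)}})^2 \otimes I_2 \otimes (\Sigma_2^{V})^2,
\]
after choosing $T_0^{\mc V}$ to absorb the conjugating factor $M$. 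By the inductive hypothesis applied to $V$,
\[
(\Sigma_2^{V})^2 \;=\; (\Sigma_2^{U^{(\ell-1)}})^2 \otimes \bigotimes_{j=1}^{\ell-2}\bigl[(\Sigma_2^{U^{(\ell-1-j)}})^2 \otimes I_2\bigr],
\]
and substituting gives the desired tensor product up to the location of the various $I_2$ tensor factors.

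The one subtlety, which I expect to be the main bookkeeping obstacle, is that the raw output of the inductive step produces a factor of the form $(\Sigma_2^{U^{(\ell)}})^2 \otimes I_2 \otimes (\Sigma_2^{U^{(\ell-1)}})^2 \otimes \cdots$, whereas the statement places each $I_2$ \emph{after} its companion $\Sigma_2$ rather than before. This is harmless: as noted in the preamble to \Cref{lem:SinglePyramidDelta}, the entries of the stubs may be permuted simultaneously (by conjugating with a block-diagonal $I_2 \otimes \pi$ permutation and absorbing $\pi$ into the outer $S$ and $T$ blocks of the decomposition), so an overall reindexing of the tensor factors yields a valid CS decomposition in precisely the form claimed. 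I would finish by writing down explicitly the permutation that moves each $I_2$ past its neighboring $(\Sigma_2^{U^{(\ell-j)}})^2$ and checking that this permutation is indeed expressible as the required conjugation, which follows since both sides of the identity are diagonal and agree as multisets of eigenvalues.
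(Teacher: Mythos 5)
Your proof is correct and follows the paper's own argument exactly: induct on $\ell$, apply \Cref{lem:SinglePyramidDelta} to peel off the outermost $U^{(\ell)}$, and substitute the inductive hypothesis for the remaining valley $\mathcal V(U^{(1)},\ldots,U^{(\ell-1)})$. You are actually slightly more careful than the paper, which just asserts that inserting the inductive expression ``completes the proof'': as you correctly point out, the raw substitution yields $(\Sigma_2^{U^{(\ell)}})^2 \otimes I_2 \otimes (\Sigma_2^{\mathcal V'})^2$ with the $I_2$ factors positioned differently from the stated formula, and one must invoke the non-uniqueness of CS decompositions under conjugation by $I_2 \otimes \pi$ (noted in the preamble to \Cref{lem:SinglePyramidDelta}) to reorder the tensor factors.
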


\begin{proof} 
The proof is by induction on $\ell$. For $\ell = 1$, the identity is trivial. Let $\mc V ' = \mc V(U^{(1)}, U^{(2)}, ..., ,U^{(\ell -1)})$ be a valley with the first and last controlled $U^{(\ell )}$ removed. Then, by~\Cref{lem:SinglePyramidDelta}, for any CS decompositions of $\mc V'$ and $U^{(\ell )}$ there exists a CS decomposition of $\mc F$ with
\begin{align*}
    (\Sigma_2 ^ {\mc V})^2 = (\Sigma_2^{{U^{(\ell )}}})^2 \otimes I_2 \otimes (\Sigma_2 ^{\mc V '})^2.
\end{align*}
And, by our induction hypothesis we have that there exists a CS decomposition of $\mc V'$ with
\begin{align*}
    (\Sigma_2^{\mc V'})^2
  = 
 (\Sigma_2^{U^{(\ell - 1)}})^2 \otimes
 \bigotimes_{j=2}^{\ell-1}
  \Big(   (\Sigma_2^{U^{(\ell - j)}})^2 \otimes I_2 \Big) 
\end{align*}
Inserting this expression into the one above completes the proof. 
\end{proof}

\begin{corollary}
\label{prop:1qbflippedDelta}
For a valley $\mc V (U^{(1)}, U^{(2)}, ..., U^{(\ell )})$: 
\begin{enumerate}
\item \label{it:prop1qbflippedDelta1} 
If $U^{(1)}, U^{(2)}, ..., U^{(\ell )}$ are all one qubit unitaries, there exists a CS decomposition of $\mc V$ where $\Sigma_2^{\mc V}$ is a scalar multiple of the identity.

\item \label{it:prop1qbflippedDelta2} 
For any $U^{(1)}, U^{(2)}, ... , U^{(\ell )}$ there exists CS Decompositions of $\mc V$ with
\begin{align*}
    \Sigma_2^{\mc V} = I_{2^{\ell - 1}} \otimes \bigotimes_{j = 1}
^\ell \Sigma_2^{U^{(j)}} \qquad \text{or with} \qquad \Sigma_2^{\mc V} = \left(\bigotimes_{j = 1}
^\ell \Sigma_2^{U^{(j)}} \right) \otimes I_{2^{\ell-1}}.
\end{align*}
\end{enumerate}
\end{corollary}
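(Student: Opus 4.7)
The plan is to obtain both parts by direct substitution into the identity of \Cref{lem:fpyrDelta}, namely
\[
(\Sigma_2^{\mc V})^2 = (\Sigma_2^{U^{(\ell)}})^2 \otimes \bigotimes_{j=1}^{\ell-1}\bigl[(\Sigma_2^{U^{(\ell-j)}})^2 \otimes I_2\bigr].
\]

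For Part \ref{it:prop1qbflippedDelta1}, the first observation is that each single-qubit $U^{(j)}$ is a $2\times 2$ unitary, so its CS decomposition has $1\times 1$ blocks and the stub $\Sigma_2^{U^{(j)}}$ is simply a nonnegative scalar $s_j\in[0,1]$. Substituting into the formula above, every factor $(\Sigma_2^{U^{(j)}})^2 = s_j^2$ is a scalar and pulls through the tensor product, collapsing the right-hand side to $\bigl(\prod_{j=1}^\ell s_j^2\bigr)\, I_{2^{\ell-1}}$. Taking the entrywise nonnegative square root of this diagonal matrix yields $\Sigma_2^{\mc V} = \bigl(\prod_j s_j\bigr)\, I_{2^{\ell-1}}$, a scalar multiple of the identity.

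For Part \ref{it:prop1qbflippedDelta2}, I would appeal to the remark just before \Cref{lem:SinglePyramidDelta}: the stubs of a CS decomposition are unique only up to simultaneous permutation of their diagonal entries (conjugation by $I_2\otimes\pi$), and any such permutation can be absorbed into the $S$ and $T$ factors. The formula from \Cref{lem:fpyrDelta} expresses $(\Sigma_2^{\mc V})^2$ as a tensor product of the $\ell$ stub squares interleaved with $\ell-1$ copies of $I_2$. Because all factors are diagonal, applying the permutation matrix that regroups all $I_2$ copies to one side of the tensor product produces either
\[
(\Sigma_2^{\mc V})^2 = I_{2^{\ell-1}} \otimes \bigotimes_{j=1}^\ell (\Sigma_2^{U^{(j)}})^2 \quad\text{or}\quad \bigotimes_{j=1}^\ell (\Sigma_2^{U^{(j)}})^2 \otimes I_{2^{\ell-1}},
\]
and taking the entrywise nonnegative square root, which preserves diagonal structure, gives the two stated forms.

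I do not expect a serious obstacle; the only subtle point is that the rearrangement in Part 2 must correspond to a bona fide CS decomposition rather than an algebraic rewriting. This is precisely what the ``absorb into $S$ and $T$'' remark guarantees, and since the corollary only constrains $\Sigma_2^{\mc V}$ (not $S$ or $T$), no explicit construction of those factors is required. The entire argument is tensor-product bookkeeping on top of \Cref{lem:fpyrDelta}.
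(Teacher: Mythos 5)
Your proof is correct and takes essentially the same approach as the paper: both start from the tensor-product stub formula of \Cref{lem:fpyrDelta} and invoke the permutation freedom (conjugation by $I_2\otimes\pi$, absorbed into $S$ and $T$) to regroup the identity factors; the only cosmetic difference is that the paper derives Item~\ref{it:prop1qbflippedDelta1} as the one-qubit special case of Item~\ref{it:prop1qbflippedDelta2}, whereas you prove it directly by pulling scalar stubs out of the tensor product, which amounts to the same observation.
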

\begin{proof}
\Cref{it:prop1qbflippedDelta2} follows immediately from \Cref{lem:fpyrDelta} and non-uniqueness of CS Decompositions under block permutations of the $\Sigma$'s. \Cref{it:prop1qbflippedDelta1} follows from \Cref{it:prop1qbflippedDelta2} with all $\Sigma_2^{U^{(j)}}$ being $1 \times 1$ matrices and therefore scalars. 
\end{proof}

\label{subsec:proof_of_MNQ4R}

\def\tU{{\widetilde U}}

We can now prove a refined version of our quantum precomputation identity (\Cref{lem:q4rid-Intro}), which bounds the size of the diagonal matrix $D$ in the circuit. As in the case of~\Cref{prop:1qbflippedDelta}, this lemma is stated for cascades with arbitrary-sized controlled unitaries, but the strongest versions of the statement apply to one-qubit cascades. 

\begin{lemma}
    [Quantum precomputation identity with stub count]
\label{prop:q4rid-stubs moore-nilsson}

Let $\mc C (U^{(1)}_0,U^{(1)}_1,\ldots, U^{(\ell)}_0,U^{(\ell)}_1)$ be a control cascade which acts on a total of $d$ qubits, with unitaries $U^{(1)}_0,U^{(1)}_1,\ldots, U^{(\ell)}_0,U^{(\ell)}_1$ arbitrary.
Then there exists an $(d-1)$-qubit unitary $P'$, a diagonal $(d - \ell)$-qubit unitary $D'$, and a multiplexer $R$
such that
\begin{equation}
\label{eq:q4rmn}
    \mathcal C(\vec U)
    \quad=\quad
    \begin{quantikz}[wire types={q,b,b,q},classical gap=2pt]
        & & & \ctrl{2} & & \coctrl{}\wire[d]{q} &\\
        & \gate[3]{P'} & & & & \gate[2]{R} &\\
        & \ghost{P'} & & \gate[2]{D'} & & \ghost{R} &\\
        & \ghost{P'} & \gate{\Phi^\dagger} & \ghost{D'} & \gate{\Phi} & \coctrl{}\wire[u]{q} &
    \end{quantikz}\,.
\end{equation}
Formulas for all these unitaries are given explicitly given in the proof.
\end{lemma}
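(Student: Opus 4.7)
The plan is to specialize the proof of Lemma~\ref{lem:q4rid-Intro} to the cascade structure and exploit the valley stub computation (Corollary~\ref{prop:1qbflippedDelta}) to shrink the diagonal factor. View the entire cascade $\mathcal C(\vec U)$ as a single $d$-qubit multiplexer $(V_0, V_1)$ on the top qubit, where $V_b$ is the cascade obtained by selecting the $U_b^{(1)}$ branch at the top and leaving the remaining controlled unitaries intact. Applying Lemma~\ref{lem:q4rid-Intro} directly produces preprocessing $P$ on $d-1$ qubits, a diagonal $D$ on $d-1$ qubits, and a multiplexer $R$; the task is to identify the extra structure that shrinks $D$ from a $(d-1)$-qubit to a $(d-\ell)$-qubit diagonal.

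The key input is that the unitary $W := \rev(V_0^\dagger V_1)$ appearing in the CS decomposition step of the proof of Lemma~\ref{lem:q4rid-Intro} is, after the reversal, a valley circuit $\mathcal V(\widetilde U^{(1)}, \ldots, \widetilde U^{(\ell)})$, where $\widetilde U^{(j)} := (U_0^{(j)})^\dagger U_1^{(j)}$. Indeed, factoring out the top-qubit control gives $V_0^\dagger V_1 = (\mathrm{rest})^\dagger (\widetilde U^{(1)} \otimes I)(\mathrm{rest})$, which is precisely the conjugation shape encoded in~(\ref{eq:one_q_valley}) generalized to $k$-qubit unitaries; reversing qubit order then turns this into the valley $\mathcal V$. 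Corollary~\ref{prop:1qbflippedDelta} then supplies a CS decomposition of $W$ whose stub $\Sigma_2^W$ contains an explicit $I_{2^{\ell-1}}$ tensor factor---namely $\Sigma_2^W = \bigl(\bigotimes_{j=1}^{\ell} \Sigma_2^{\widetilde U^{(j)}}\bigr) \otimes I_{2^{\ell-1}}$ up to a permutation of factors. By $(\Sigma_1^W)^2 + (\Sigma_2^W)^2 = I$, we may choose $\Sigma_1^W$ with the same $I_{2^{\ell-1}}$ tensor factor on the same coordinates, so the explicit diagonal $D = \rev\,\mathrm{diag}(\Sigma_1 - i\Sigma_2, \Sigma_1 + i\Sigma_2)$ from Lemma~\ref{lem:q4rid-Intro} factors as $D = I_{2^{\ell-1}} \otimes D'$ for some $(d-\ell)$-qubit diagonal $D'$. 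The trivial identity factor is absorbed into $P'$ by extending the preprocessing onto the freed wires, producing the circuit in the statement. Explicit formulas for $P', D', R$ then come from combining those of Lemma~\ref{lem:q4rid-Intro} with the stub formulas in Lemma~\ref{lem:fpyrDelta}.

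The main obstacle will be the qubit-ordering bookkeeping required to identify the correct $(d-\ell)$ qubits on which $D'$ acts nontrivially: the tensor factorization of $\Sigma_2^W$ given by Corollary~\ref{prop:1qbflippedDelta} must be transported through the reversal, the block embedding $\mathrm{diag}(\Sigma_1 - i\Sigma_2, \Sigma_1 + i\Sigma_2)$, and the commutation step in the proof of Lemma~\ref{lem:q4rid-Intro} without the identity structure getting scrambled. A lesser but nontrivial point is verifying the claimed valley identity $\rev(V_0^\dagger V_1) = \mathcal V(\widetilde U^{(1)}, \ldots, \widetilde U^{(\ell)})$ in the general cascade setting, since the controlled unitaries now act on overlapping multi-qubit blocks and one must check that the relevant conjugations really produce the valley shape required by Corollary~\ref{prop:1qbflippedDelta}.
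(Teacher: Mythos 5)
Your proof follows the same strategy as the paper: view the cascade $\mathcal C(\vec U)$ as a single multiplexer $(V_0,V_1)$ on the top qubit, invoke Lemma~\ref{lem:q4rid-Intro}, recognize the CS-decomposed unitary as a valley, and extract the $I_{2^{\ell-1}}$ tensor factor in the stub via Corollary~\ref{prop:1qbflippedDelta}, which then propagates through $\Sigma_1^2 + \Sigma_2^2 = I$ into the diagonal $D$. The conclusion $D = I_2^{\otimes(\ell-1)} \otimes D'$ is the right target, and the inputs you cite (Lemma~\ref{lem:fpyrDelta}, Corollary~\ref{prop:1qbflippedDelta}) are the ones the paper uses.

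One place where you are glossing a step the paper makes explicit: the unitary $\rev(V_0^\dagger V_1)$ you obtain is a \emph{valley of multiplexers} --- the conjugating layers are $\coctrl$-gates $(U_0^{(j)},U_1^{(j)})$, not singly-controlled gates --- so it is not literally of the form demanded by the valley definition preceding Lemma~\ref{lem:fpyrDelta}. The paper bridges this with an explicit rewriting identity, repeatedly peeling off the unconditional part $U_0^{(j)}$ from each $\coctrl U^{(j)}$ and commuting it past the inner block (using disjoint supports) so only a singly-controlled gate survives. Your proposal skips straight to asserting $\rev(V_0^\dagger V_1) = \mathcal V(\widetilde U^{(1)},\dots,\widetilde U^{(\ell)})$; with the conjugation running in the direction $(\coctrl U)^\dagger \cdot (\cdot) \cdot (\coctrl U)$ as here, the factoring $\coctrl U = (I\otimes U_0)(\ctrl\, U_0^\dagger U_1)$ does let the unconditional $U_0^{(j)}$ cancel inwards, so your $\widetilde U^{(j)} = (U_0^{(j)})^\dagger U_1^{(j)}$ is consistent; but this peeling step needs to be stated, and you should verify the supports of $U_0^{(j)}$ are disjoint from everything inside the $j$-th conjugation layer, which holds by the cascade's non-overlapping target structure. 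Also, the ``absorb the trivial factor into $P'$'' phrasing is misleading: $P'$ is just the $P$ from Lemma~\ref{lem:q4rid-Intro} unchanged; the factorization $D = I_2^{\otimes(\ell-1)}\otimes D'$ is what lets the controlled-diagonal gate shrink, and nothing moves into $P'$. These are small bookkeeping matters, and you rightly flag the qubit-order transport as the main thing to be careful about; the argument is essentially the paper's.
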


\begin{proof}
This result is a refinement of the quantum precomputation identity 
$$ \mc C (U^{(1)}_0,U^{(1)}_1,\ldots, U^{(\ell)}_0,U^{(\ell)}_1) =
\begin{quantikz}[wire types={q,b,q},classical gap=2pt]
        & & & \ctrl{2} & & \coctrl{}\wire[d]{q} &\\
        & \gate[3]{P'} & & \gate[3]{D} & & \gate[2]{R} &\\
        & \ghost{P'} & \gate{\Phi^\dagger} & \ghost{D} & \gate{\Phi} & \coctrl{}\wire[u]{q} &
    \end{quantikz}
$$
We  will show the diagonal matrix $D$ in the quantum precomputation identity factors as
\begin{align*}
D=I^{\otimes \ell-1}_2 \otimes D'
\end{align*}
for a $(d - \ell)$-qubit diagonal unitary $D'$.

The $D$ in the quantum precomputation identity is simply the $D$ coming from the CS decomposition of valley with multiplexers $U^{(\ell)}, ..., U^{(1)}$. But this valley of multiplexers is equivalent to a valley of singly controlled unitaries with the same dimension, since we can rewrite a multiplexer valley via repeated applications of the identity 
\begin{align*}
    \begin{quantikz}[classical gap=0.07cm, wire types = {b, q, b}, column sep = 3pt]
            & \gate{U_0} & \gate{U_1} & & \gate{U_1^\dagger} & \gate{U_0^\dagger} & \\
            & \octrl{-1} & \ctrl{-1} & \gate[2]{V} & \ctrl{-1} & \octrl{-1} &\\
            & & & \ghost{V} & & &
         \end{quantikz}
         &=
         \begin{quantikz}[classical gap=0.07cm, wire types = {b, q, b}, column sep = 3pt]
            & \gate{U_1U_0^\dagger} & \gate{U_0} & & \gate{U_0^\dagger} & \gate{U_0U_1^\dagger} & \\
            & \ctrl{-1} & & \gate[2]{V} & & \ctrl{-1} &\\
            & & & \ghost{V} & & &
         \end{quantikz} \\
         &=
         \begin{quantikz}[classical gap=0.07cm, wire types = {b, q, b}, column sep = 3pt]
            & \gate{U_1 U_0^\dagger} & & \gate{U_0 U_1^\dagger} & \\
            & \ctrl{-1} & \gate[2]{V} & \ctrl{-1} &\\
            & & \ghost{V} & &
         \end{quantikz}.
\end{align*}

Then by \Cref{lem:SinglePyramidDelta}, we can select a cosine-sine decomposition with the stub 
$$\Sigma_2 = I^{\otimes \ell - 1}_2 \otimes \bigotimes_{j = 1}^\ell \Sigma_2^{U^{(j )}}.$$
We write this as 
$$\Sigma_2 = I_2^{\otimes\ell - 1} \otimes \widehat \Sigma_2$$
where $\widehat \Sigma_2 = \bigotimes_{j = 1}^\ell \Sigma_2^{U^{(j)}}$.
For the same cosine-sine decomposition we have
$\Sigma_1$ by 
$$\Sigma_1 = \sqrt{I^{\otimes d - 2}_2 - \Sigma_2^2}
= \sqrt{I^{\otimes d - 2}_2 - I^{\otimes \ell - 1}_2 \otimes  \widehat \Sigma_2^2}
= I^{\otimes \ell - 1}_2 \otimes \widehat \Sigma_1$$
where
$$\hat \Sigma_1 = \sqrt{I^{\otimes d - \ell - 1}_2 - \widehat \Sigma_2^2}.$$
From $D = \begin{pmatrix}
    \Sigma_1 - i \Sigma_2 & 0 \\ 0 & \Sigma_1 + i \Sigma_2
\end{pmatrix}$, we can factor $I^{\otimes \ell - 1}_2$ to get
$$D = \begin{pmatrix}
    I^{\otimes \ell - 1}_2 \otimes (\widehat \Sigma_1 - i \widehat\Sigma_2) & 0 \\
    0 & I^{\otimes \ell - 1}_2 \otimes (\widehat \Sigma_1 + i \widehat\Sigma_2)
\end{pmatrix}
= I^{\otimes \ell - 1}_2 \otimes D'
$$
where 
$$
D' :=  
\begin{pmatrix}
    \widehat \Sigma_1 - i \widehat\Sigma_2 & 0 \\
    0 & \widehat \Sigma_1 + i \widehat\Sigma_2
\end{pmatrix}. 
$$
\end{proof}

\subsection{Optimal-depth Moore--Nilsson circuits}
\label{subsec:optimal_mn_circuits}

We are now ready to prove our optimal depth compression result for Moore--Nilsson circuits (\Cref{thm:single_qbit_parallelization-Intro}). We begin by repeating the theorem here, then give its proof.

\begin{reptheorem}{thm:single_qbit_parallelization-Intro}[Repeated]
    Every Moore--Nilsson unitary $\mc C(\vec U)$ on $n$ qubits is computed exactly by...
    \begin{itemize}
        \item A $\textsf{QNC}$ circuit of depth $\mc O(\log n)$ and no ancillae, and
        \item A $\textsf{QNC}_\textsf{2D}$ circuit of depth $\mc O(\sqrt n)$ and $\mc O(n)$ ancillae.
    \end{itemize}
    Both of these depths are the best possible.
    Moreover, these circuits can be computed from the list of gates $U^{(1)},\ldots, U^{(n)}$ in time $\mathrm{poly}(n)$.
\end{reptheorem}

\begin{proof}
Group the cascade into blocks $V^{(j)}$ of size $\ell = b$ as pictured in \Cref{fig:chunking-MN}, apply the quantum precomputation identity for Moore--Nilsson circuits (\Cref{lem:q4rid-moore-nilsson-Intro}), and rearrange to obtain:

\[\mc C(\vec U)=\autoname{\begin{quantikz}[wire types={q,b,q,b,q,n,b,q,b,q},classical gap=2pt, row sep={2em,between origins}, column sep = {2em, between origins}]
    & &[2em] \ctrl{2} \gategroup[8,steps=4,style={inner
sep=2pt, fill=blue!10,dashed, rounded corners},background,label style={label
position=south,anchor=north,yshift=-0.5em}]{$=\mc C(\vec Q)$}& & & & &\coctrl{}\wire[d]{q}&& \\
    & \gate[2]{P^{(1)}} & & & & & &\gate{R^{(1)}}&&\\
    & & \gate{Q^{(1)}} &\ctrl{2} & & & &\coctrl{}\wire[u]{q}&\coctrl{}\wire[d]{q}&\\
    & \gate[2]{P^{(2)}} &  & & & & &&\gate{R^{(2)}}&\\
    & & & \gate{Q^{(2)}}& \ctrl{1}& & &\coctrl{}\wire[d]{q}&\coctrl{}\wire[u]{q}&\\
    & \defer{\rotatebox{90}{$\cdots$}}& \newwavepartial[style={minimum height=2.5em, wave color=blue!10},steps=4]{} && \defer{\hspace{2em}\rotatebox{-45}{$\cdots$}}& \ctrl{2}& \newwavepartial[style={minimum height=2.5em, wave color=white},steps=4]{}&\defer{\hspace{2em}\rotatebox{90}{$\cdots$}}&\\[.2em]
    & \gate[2]{P^{(m)}}& & & & & & &\gate{R^{(m)}}\wire[u]{q} & \\
    & & & & & \gate{Q^{(m)}}& & &\coctrl{}\wire[u]{q} &
\end{quantikz}}
\]

The depth of the first column is at most $\mc O(4^b)$ \cite{tucci1999rudimentary,vartiainen04}; same for the last column.
The middle column is a Moore--Nilsson circuit on $\mc O\big(n/b\big)$ qubits.
Iterating this identity on successive $\mc C(\vec Q)$'s $r$-many times yields a circuit of total depth at most
\[\mc O\left(r 4^b+\frac{n}{b^r}\right)\,.\]
Setting $b$ to a sufficiently large constant and $r=\log(n)$ we find the depth is at most~$\mc O(\log n)$ for all-to-all connected $\cc{QNC}$ circuits.
This completes the first assertion of \Cref{thm:single_qbit_parallelization-Intro}.

\bigskip

We turn to the second assertion of the theorem and describe how to compile the 
circuit above into a 2D connectivity architecture with optimal depth.
We will work with the concrete choice of block size $b = 2$.

Given any quantum circuit, we call its \emph{forward topology} the directed graph obtained by, for each qubit $j$, truncating the $j$\textsuperscript{th} wire so that it terminates at the final gate that interacts with qubit $j$.
Vertices in the resulting graph correspond to gates in the circuit, while edges in the graph correspond to wires and are oriented with the flow of time.
Parallel edges (i.e. edges with the same out and in vertex) are replaced with a single edge and a label denoting the number of qubits represented by the edge.
The \emph{backward topology} is defined analogously, with time reversed.
This is illustrated in the example of a $9$-qubit Moore--Nilsson circuit in \Cref{fig:2dtopo} and  \Cref{fig:Rtopo}.
Now we argue that the first and last ``stages'' of the circuit have nice forwards and backwards topologies, respectively.
\\

\begin{figure}
    \centering
    \includegraphics[width=0.5\linewidth]{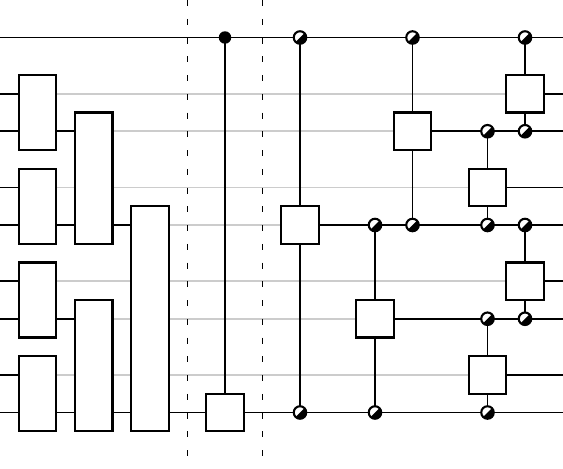}
    \caption{An example $9$-qubit Moore--Nilsson circuit after parallelization.
    It naturally separates into three stages, with the first stage having the forward topology of a complete binary tree.}
    \label{fig:2dtopo}
\end{figure}

\begin{claim}
     The $P$ stage of the Moore--Nilsson parallelization has forward topology equal to the complete binary tree.
     The $R$ stage of the Moore--Nilsson parallelization can be rewritten into a circuit with backward topology equal to the complete binary tree, followed by a layer of CNOTs.
\end{claim}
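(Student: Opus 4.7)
For the $P$-stage forward topology I would proceed by a clean induction on the recursion levels; for the $R$-stage backward topology I would need an additional CNOT-based rewriting step.

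Indexing each $P$-gate produced by the recursion as $P_k^{(j)}$ (with $k\in\{1,\ldots,r\}$ the recursion level and $j$ indexing the blocks at that level), and taking block size $b=2$, one can verify that at level $k$ the central Moore--Nilsson cascade lives on $S_k = \{1,\,2^{k-1}{+}1,\,2\cdot 2^{k-1}{+}1,\ldots\} \subseteq [n]$ (spacing $2^{k-1}$), and that $P_k^{(j)}$ acts on the pair of consecutive qubits $\{(2j{-}1)\cdot 2^{k-1}{+}1,\ 2j\cdot 2^{k-1}{+}1\}$ from $S_k$. The key observation---read off from the circuit identity in~\Cref{lem:q4rid-moore-nilsson-Intro}---is that the lower of these two qubits receives $Q_k^{(j)}$ and continues into $S_{k+1}$, while the upper one is consumed in postprocessing by $R_k^{(j)}$ and so is not touched by any later $P$-gate. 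The binary tree structure then follows immediately: each $P_k^{(j)}$ with $k\ge 2$ has exactly two incoming edges in the forward topology (from $P_{k-1}^{(2j-1)}$ and $P_{k-1}^{(2j)}$, whose continuing qubits become its two inputs) and, for $k<r$, exactly one outgoing edge (via its continuing qubit) to $P_{k+1}^{(\lceil j/2\rceil)}$; level-1 $P$-gates are sources, $P_r^{(1)}$ is the sink, and the aggregate is a complete binary tree with $2^{r-1}$ leaves.

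For the $R$-stage, the mirror argument is complicated because each $R_k^{(j)}$ is a doubly controlled single-qubit gate with both controls $a_{\text{top}}, a_{\text{bot}}$ typically targets of higher-level $R$-gates; naively each $R_k^{(j)}$ therefore has two outgoing edges in the backward DAG rather than one. I would rewrite each $R_k^{(j)}$ into a singly controlled single-qubit gate whose control is a designated ``tree-parent'' control (one of $a_{\text{top}}, a_{\text{bot}}$, chosen to match the binary tree inherited in reverse time from the $P$-stage) together with CNOTs between $a_{\text{top}}$ and $a_{\text{bot}}$ that absorb the dependence on the non-parent control---this is the standard decomposition of a two-input multiplexer into singly controlled single-qubit gates and CNOTs. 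The singly controlled subcircuit so obtained has backward topology exactly equal to a complete binary tree (with parent/child relation $R_k^{(j)} \to R_{k+1}^{(\lceil j/2\rceil)}$ and leaves $R_1^{(j)}$), and the residual CNOTs, which live entirely on pairs of qubits inside $S_{k+1}$ (qubits that appear in all later rewritten $R$-gates only as controls, never as targets), can be commuted past the later singly controlled gates and gathered into a single final layer of CNOTs using the usual CNOT--CNOT algebra.

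The main obstacle will be the CNOT bookkeeping: verifying that (a)~the choice of tree-parent controls is consistent so that the singly controlled subcircuit really is the complete binary tree (not a DAG with extra edges), and (b)~the residual CNOTs can be commuted past all subsequent singly controlled gates and CNOTs and assembled into a final disjoint-support layer. Part~(a) reduces to a time-reversed mirror of the $P$-stage induction using the qubit indexing above, while part~(b) is a careful check of which qubits appear as controls vs.\ targets in each gate together with standard CNOT commutation identities.
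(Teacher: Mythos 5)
Your $P$-stage argument is correct and is essentially the paper's: the paper treats this direction as immediate, and your qubit-indexing verification of the binary-tree structure is a fine way to make the ``immediate'' precise.

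Your $R$-stage argument has a genuine gap, and it is a different approach from the paper's. The central problem is the claimed rewriting of each doubly-controlled gate $R_k^{(j)} = (R_{00}, R_{01}, R_{10}, R_{11})$ into a singly controlled single-qubit gate on a chosen ``tree-parent'' control together with CNOTs acting between $a_{\text{top}}$ and $a_{\text{bot}}$. No such decomposition exists for general two-control multiplexers: conjugating a gate singly controlled on one of $a_{\text{top}}, a_{\text{bot}}$ by CNOTs between the controls can only reach constrained families such as $(R', R'', R'', R')$ (equivalently $R_{00}^{-1}R_{01} = R_{11}^{-1}R_{10}$). The $R$ produced by the precomputation identity has the form $(S_0^\dagger, S_1^\dagger, T_0, T_1)$ with $S_b, T_b$ coming from a CS decomposition, and there is no reason for the required identity to hold. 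The subsequent commutation step also fails: a CNOT that writes to $a_{\text{top}}$ or $a_{\text{bot}}$ does not commute with a later gate controlled on that qubit, so the ``residual CNOTs'' cannot in general be swept into a terminal layer.

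The paper's actual rewriting leaves the $R$-gates doubly controlled and instead attacks the shared-wire problem directly: a control wire that is read by two $R$-gates (the ``left'' and ``right'' siblings at the next level in reverse time) is split by a CNOT onto a fresh ancilla, so each sibling reads its own copy. Since every $R$-gate only \emph{reads} its control wires after the copy is made, all such copy CNOTs can be uncomputed at the very end in a single depth-one layer of CNOTs. After splitting, each $R_k^{(j)}$ has a unique $2$-qubit bundle of wires to its reverse-time successor $R_{k+1}^{(\lceil j/2\rceil)}$, which produces the complete binary tree with $2$-qubit edges; the uncompute CNOTs form the trailing layer referred to in the claim. Note this uses ancilla qubits, which is admissible here because the claim is only invoked in the $\mathsf{QNC}_{\mathsf{2D}}$ construction, where $\mathcal O(n)$ ancillae are allowed.
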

\begin{claimproof}
    The first part is immediate.
    The second part can be seen by splitting shared control wires into left and right sides with CNOTs that can be easily uncomputed in depth one afterwards.
    See \Cref{fig:Rtopo}.
\end{claimproof}

\begin{claim}
    Fix an embedding of the $n$-leaf complete binary tree in an $m$-vertex 2D grid and let $d$ be the maximum edge length in the embedding.
    Then any quantum circuit with complete binary tree forward topology and where each gate is size $\mc O(1)$ and each edge is on $\mc O(1)$ qubits has a compilation into the 2D architecture on $\mc O(m)$ qubits with depth $\mc O(d\log n)$.
\end{claim}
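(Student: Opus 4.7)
The plan is to execute the circuit layer-by-layer along the binary tree, using the given embedding to orchestrate SWAP routing between layers. I would begin by placing each input qubit at the grid vertex corresponding to the leaf of the embedded tree at which it first interacts with a gate; the remaining grid vertices hold ancillae in $|0\rangle$ that serve as temporary buffers for SWAP routing. This uses $\mc O(m)$ qubits and associates every tree vertex with a distinct grid vertex.

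Next, I would process the tree one level at a time from the leaves toward the root. Each level proceeds in two phases. First, all gates at that level are applied in parallel in depth $\mc O(1)$---this is valid because same-level gates act on $\mc O(1)$-sized outputs of disjoint subtrees, hence on disjoint grid vertices. Second, for each level-$\ell$ gate, its $\mc O(1)$ output qubits are routed along its outgoing tree edge to the grid vertex of its parent gate. Since the embedded path has length at most $d$ and carries only $\mc O(1)$ qubits, the routing is a sequence of $\mc O(d)$ nearest-neighbor SWAPs along the path. Combining the two phases gives depth $\mc O(d)$ per level, and summing over $\log n$ levels yields the claimed total depth of $\mc O(d \log n)$.

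The crucial point---and the main obstacle---is that all the routings at a given level must be executed simultaneously without SWAP conflicts. This reduces to the property that the embedded tree edges at each level correspond to edge-disjoint paths in the grid, which is the standard property enjoyed by H-tree-style embeddings such as those of \cite{patersonRuzzoSnyder, Ruzzo1981}. If instead the embedding only has bounded edge-congestion (each grid edge used by $\mc O(1)$ tree paths), an edge-coloring of the SWAP schedule incurs only a constant blow-up in depth. Beyond this, the proof is essentially bookkeeping: verifying that when a level-$\ell$ gate fires, both of its input qubits have already arrived at its grid location, which is automatic from the level-by-level schedule, since all routings at level $\ell-1$ complete before any gate at level $\ell$ is applied.
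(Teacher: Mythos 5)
Your proposal is correct and follows essentially the same level-by-level routing strategy as the paper: apply gates at each level in parallel, then route outputs along the embedded tree edges with SWAP chains of length $\mc O(d)$, summing to $\mc O(d\log n)$ over $\log n$ levels. One small but genuine point in your favor is that you explicitly flag the congestion issue (SWAP conflicts when embedded paths share grid edges) and note that it is handled by edge-disjointness or $\mc O(1)$ congestion in the H-tree embeddings of \cite{patersonRuzzoSnyder, Ruzzo1981}; the paper's proof passes over this silently, instead leaning on a coarse-grained grid (each grid vertex $\mapsto$ a constant-sized square of qubits) to absorb constant-factor overheads. Strictly speaking, as the claim is phrased for a generic ``fixed embedding,'' some bounded-congestion hypothesis is needed for the $\mc O(d)$-per-level routing to hold, so your explicit remark is not redundant — it identifies the hypothesis the paper implicitly relies on when it later invokes the PRS embeddings.
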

\begin{claimproof}
    We describe a procedure for deriving a 2D quantum circuit from the tree embedding.

    We begin by identifying our $m$-vertex grid with a course-grained version of the 2D architecture of qubits, so that each vertex of the grid corresponds to a constant-sized square of qubits. We set this constant at least as large as the maximum number of qubits involved in a single gate in our original circuit.
    
    Now the gates corresponding to each level of the tree will be computed in parallel as a different stage (group of consecutive layers) of the quantum circuit.
    Each gate is implemented on the subset of qubits identified by the embedding. 
    Between the gates corresponding to vertices at level $s$ and level $s+1$ of the tree there is a swap network permuting qubits to they are in the correct location for the next layer of gates. This swap network permutes qubits along the paths corresponding to embedded tree edges.
    The depth of this swap network is bounded by a constant factor times the maximum embedded edge length among the level $s$-$(s+1)$ edges.
    The $n$-leaf binary tree has depth $\log n$, so the swap networks in total contribute $\mc O(d\log n)$ depth.
    The node layers in total contribute $\log n \cdot \mc O(1) = \mc O(\log n)$ depth, which is dominated by the swap network depth.
\end{claimproof}

\begin{figure}
    \centering
    \includegraphics[width=1\linewidth]{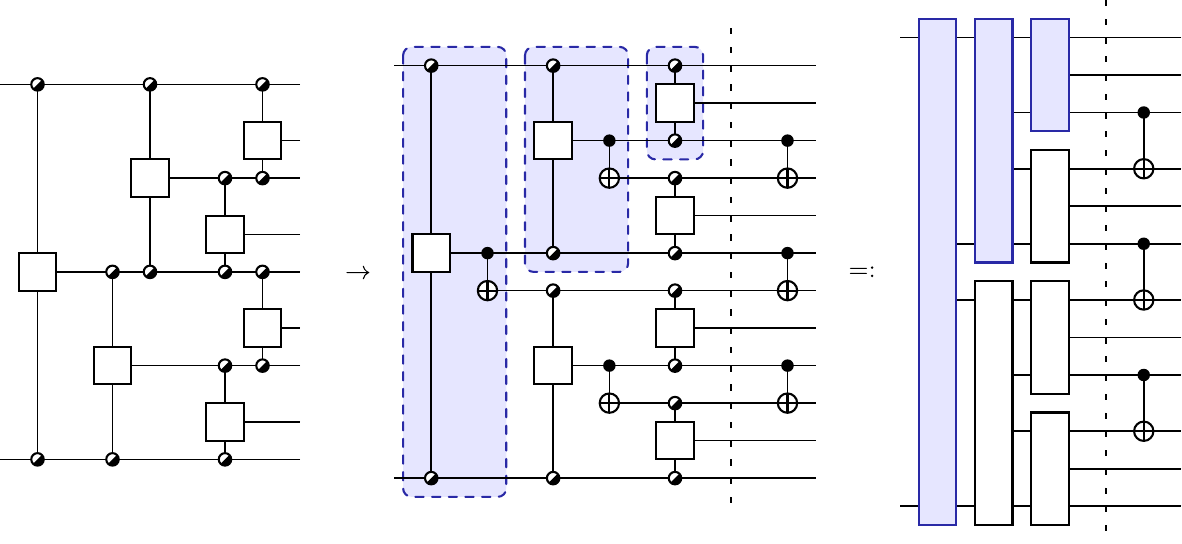}
    \caption{Determining the backward topology of the ``$R$ stage'' of the compressed Moore--Nilsson unitary.
    After some rewriting we obtain a binary tree with 2-qubit edges followed by a layer of CNOTs.}
    \label{fig:Rtopo}
\end{figure}

Combining these claims, we may appeal to minimax edge length embeddings of binary trees in 2D grids from the VLSI design literature \cite{patersonRuzzoSnyder,Ruzzo1981}.
These results state a binary tree with $n$ leaves may be embedded in a 2D grid with maximum edge length $\mc O(\sqrt{n}/\log n)$.
This leads to a total depth of $\mc O(\sqrt{n})$ for implementing the first and last stages of a parallelized Moore--Nilsson circuit with a 2D architecture. 

The central stage of the parallelized Moore--Nilsson circuit can be implemented in constant depth provided qubits on the 2D grid are arranged correctly. 

Connecting these three stages requires at worst two arbitrary permutations on the grid of qubits, which take depth $\mc O(\sqrt{n})$~\cite{alon1993routing}. Then the overall depth required to implement the parallelized Moore--Nilsson circuit is also $\mc O(\sqrt{n})$.

\bigskip
A straightforward lightcone argument shows that there are Moore--Nilsson unitaries that require depth $\Omega(\log n)$ in all-to-all connected circuits and depth $\Omega(\sqrt n)$ in 2D circuits, so these upper bounds are asymptotically tight. 
\end{proof}

We close this section by noting that our proof of the second half of~\Cref{thm:single_qbit_parallelization-Intro} was somewhat wasteful in its use of ancilla qubits. With more careful bookkeeping it may be possible to obtain a similar result using only $n + o(n)$ ancillae, but we leave this question to future work. 

\section{Discussion}
\label{sec:discussion}

This section covers two directions in which the results of this paper could be extended. In~\Cref{subsec:open_problems} we discuss slight generalizations of Moore--Nilsson circuits on which our preprocessing techniques do not immediately apply. Further study of these circuits may lead to either new parallelization results, or super-logarithmic depth lower bounds. In~\Cref{subsec:compilation} we discuss how the results in this paper may be converted to practically-relevant compilation techniques, and highlight some initial progress in this direction given in~\Cref{app:CSDecFlippedP}.

\subsection{Open problems near the Moore--Nilsson conjecture}

\label{subsec:open_problems}

Despite the progress made in this paper, we don't know how to beat the naive depth bounds in any of the situations described below.
More specifically, we don't know how to compile the unitaries below to smaller depth than their naive implementations, either exactly or while retaining small error in operator norm.
Interestingly, log-depth Frobenius norm approximations may be easier to obtain, but it is not clear if these can be strengthened to operator norm approximations.
For many applications operator norm approximations seem necessary, such as when these circuits appear as subroutines in larger quantum algorithms.

\subsubsection*{Beyond one qubit of control}
\label{sec:future-qutrit}
The techniques presented above appear to be specific to single-qubit controls from one unitary to the next.
If several qubits from $U^{(i-1)}$ are used to control a $U^{(i)}$, then we do not know how to obtain any asymptotic depth reduction.
This can be seen already from a staircase of \textit{qutrit}-controlled unitaries.
Attempts to extend the argument above to qutrits seem to require a $3\times 3$ version of the CS decomposition, which is false in general.
Parameter counting suggests this is not too surprising, but for completeness we include a concrete counterexample.

\begin{proposition}[Counterexample to ``$3\times 3$ CS decomposition'']
   Let $U$ be any unitary matrix proportional to a $6\times 6$ matrix containing the following entries
\[\left(\begin{array}{cc|cccc}
  1 & 0  & \phantom{1}& \phantom{1}& \phantom{1}&\phantom{1}\\
    0 & 2  & & & & \\
    1 & 2 & \multicolumn{4}{c}{\multirow{2}{*}{$*$}}\\
    0 & 3  & & & & \\
    \cline{1-2}
  \multicolumn{2}{c|}{\multirow{2}{*}{$*$}}  & \\
  \phantom{.} &
\end{array}\right)\,,
\]
Then there do \textit{not} exist block-diagonal unitaries $S$ and $T$ and $2\times 2$ diagonal matrices $\Sigma_{i,j}$, $i.j\in[3]$ such that
\begin{equation}
\label{eq:nonadiag}
    \begin{pmatrix}
    S_1 & 0 & 0\\
    0 & S_2 & 0\\
    0 & 0 & S_3
\end{pmatrix}
U
\begin{pmatrix}
    T_1 & 0 & 0\\
    0 & T_2 & 0\\
    0 & 0 & T_3
\end{pmatrix}
= \begin{pmatrix}
    \Sigma_{11} & \Sigma_{12} & \Sigma_{13}\\
    \Sigma_{21} & \Sigma_{22} & \Sigma_{23}\\
    \Sigma_{31} & \Sigma_{32} & \Sigma_{33}
\end{pmatrix}\,.
\end{equation}
\end{proposition}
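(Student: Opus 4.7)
My plan is to derive a contradiction just from the first block column of $U$, using only the constrained entries in the blocks $U_{11}$ and $U_{21}$. Suppose for contradiction that block-diagonal unitaries $S = \mathrm{diag}(S_1,S_2,S_3)$, $T=\mathrm{diag}(T_1,T_2,T_3)$ and $2\times 2$ diagonal matrices $\Sigma_{ij}$ satisfy \eqref{eq:nonadiag}. Reading off the first block column,
\[ S_1 U_{11} T_1 = \Sigma_{11} \qquad \text{and} \qquad S_2 U_{21} T_1 = \Sigma_{21}, \]
where, writing $\lambda \neq 0$ for the overall proportionality constant, $U_{11} = \lambda\,\mathrm{diag}(1,2)$ and $U_{21} = \lambda\bigl(\begin{smallmatrix}1 & 2 \\ 0 & 3\end{smallmatrix}\bigr)$.

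The first identity forces $S_1$ and $T_1$ to each be diagonal or antidiagonal, essentially by SVD uniqueness. A direct check: writing $S_1 = \bigl(\begin{smallmatrix} a & b \\ c & d \end{smallmatrix}\bigr)$, the matrix $S_1 U_{11} = \lambda\bigl(\begin{smallmatrix} a & 2b \\ c & 2d \end{smallmatrix}\bigr)$ must have orthogonal rows (because $\Sigma_{11} T_1^\dagger$ does), giving $a\bar c + 4b\bar d = 0$; combined with $a\bar c + b\bar d = 0$ from $S_1^\dagger S_1 = I$, this yields $b\bar d = 0$, whence $S_1$ is diagonal or antidiagonal. Applying the same argument to $T_1^\dagger U_{11} S_1^\dagger = \Sigma_{11}^\dagger$ handles $T_1$.

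I then show either form of $T_1$ is incompatible with the second identity. For $S_2 M = \Sigma_{21}$ with $S_2$ unitary and $\Sigma_{21}$ diagonal, we have $M = S_2^\dagger \Sigma_{21}$, so the columns of $M := U_{21}T_1$ must be orthogonal. In the diagonal case $T_1 = \mathrm{diag}(t_1,t_2)$ with $|t_1|=|t_2|=1$,
\[ U_{21}T_1 \;=\; \lambda \begin{pmatrix} t_1 & 2t_2 \\ 0 & 3t_2 \end{pmatrix}, \]
whose columns have inner product $2|\lambda|^2 \bar t_1 t_2 \neq 0$. In the antidiagonal case $T_1 = \bigl(\begin{smallmatrix} 0 & t_1 \\ t_2 & 0 \end{smallmatrix}\bigr)$, the two columns of $U_{21}T_1$ swap order but the nonzero inner product $2|\lambda|^2 \bar t_2 t_1$ remains. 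Either way orthogonality fails, contradicting the existence of $S_2$.

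The argument uses only the first block column of $U$; neither the rest of the entries nor the unitarity of $U$ itself enters the proof. The only subtlety to watch is the ``diagonal vs.\ antidiagonal'' dichotomy for $S_1, T_1$, which is the full residual freedom in the SVD of $U_{11}$ once its singular values are assumed distinct. I therefore expect no genuine obstacles in turning this sketch into a rigorous proof.
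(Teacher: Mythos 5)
Your proof is correct and follows essentially the same route as the paper: deduce from the $(1,1)$ block that $T_1$ must be diagonal or antidiagonal, then show the $(2,1)$ block cannot be diagonalized by such a $T_1$ (your column-orthogonality check for $U_{21}T_1$ is the same computation as the paper's observation that $T_1^\dagger \bigl(\begin{smallmatrix}1 & 2 \\ 2 & 13\end{smallmatrix}\bigr) T_1$ remains non-diagonal). Only a cosmetic typo: the conjugated identity you invoke for $T_1$ should be $T_1^\dagger U_{11}^\dagger S_1^\dagger = \Sigma_{11}^\dagger$, though since $U_{11}^\dagger$ is still proportional to $\mathrm{diag}(1,2)$ this does not affect the argument.
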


\begin{proof}
Suppose we do have $S = S_1 \oplus S_2 \oplus S_3$ and $T = T_1 \oplus T_2 \oplus T_3$ achieving \eqref{eq:nonadiag}. 
Then
$$
S_1\begin{pmatrix}
    1 & 0 \\ 0 & 2
\end{pmatrix}T_1 = \Sigma_{11}
$$
is diagonal, so
$$
\Sigma_{11}^\dagger \Sigma_{11} = T_1^\dagger\begin{pmatrix}
    1 & 0 \\ 0 & 4
\end{pmatrix}T_1  =  \qquad 
\begin{pmatrix}
    1 & 0 \\ 0 & 4
\end{pmatrix} \quad \text{or} \quad
\begin{pmatrix}
    4 & 0 \\ 0 & 1
\end{pmatrix} \quad 
$$
hence $T_1$ must be either diagonal or antidiagonal.
Using \eqref{eq:nonadiag} again, \[S_2\begin{pmatrix}
    1 & 2 \\ 0 & 3
\end{pmatrix}T_1 = \Sigma_{21}\]
is diagonal. 
Hence
\[
\Sigma_{21}^\dagger\Sigma_{21}=\left(S_2\begin{pmatrix}
    1 & 2 \\ 0 & 3
\end{pmatrix}T_1\right)^\dagger S_2\begin{pmatrix}
    1 & 2 \\ 0 & 3
\end{pmatrix} T_1 
= T_1^\dagger \begin{pmatrix}
    1 & 0 \\ 2 & 3
\end{pmatrix}\begin{pmatrix}
    1 & 2 \\ 0 & 3
\end{pmatrix} T_1
= T_1^\dagger \begin{pmatrix}
    1 & 2 \\ 2 & 13
\end{pmatrix} T_1
\]
is diagonal too.
However, this matrix is full since $T_1$
is a diagonal or antidiagonal unitary.
\end{proof}

\subsubsection*{General unitaries}

The techniques of this paper also do not apply to variants of Moore--Nilsson circuits where the controlled unitary operations are replaced by arbitrary two-qubit unitaries. (See~\Cref{Fig:general_unitary_staricases}.) 

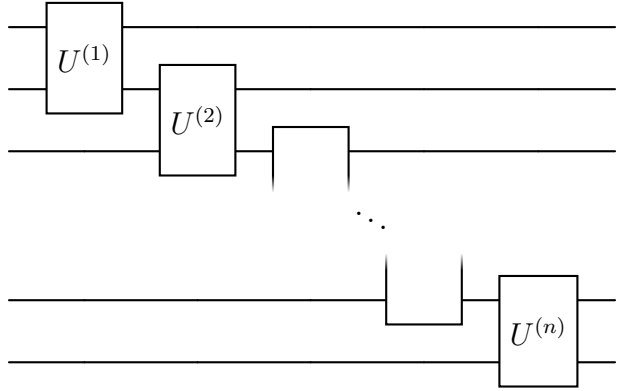
\begin{figure}
\[
\autoname{\begin{quantikz}[classical gap=1.8pt,row sep={2em,between origins},
wire types={q,q,q,q,q,q}]
     & \gate[2]{U^{(1)}} & & & & & \\
     &  & \gate[2]{U^{(2)}} & & & & \\
     & &  & \gate[2]{\phantom{U^{(3)}}} & & & \\[0.3em]
    \newwave[minimum height=3.2em]&&&\defer{\hspace{5em}\rotatebox{-30}{$\cdots$}\hspace{1em}}& \gate[2]{\phantom{U^{(3)}}} &\\[.5em]
    & & & & &\gate[2]{U^{(n)}} &  \\
    & & & & & & 
\end{quantikz}}
\]
\caption{A Moore--Nilsson circuit variant with arbitrary two-qubit unitaries. It is an open question if circuits of this form are parallelizable.}
\label{Fig:general_unitary_staricases}
\end{figure}
Standard compilation techniques let us rewrite each two-qubit unitary in this cascade as a product of single-qubit rotations and two-qubit controlled unitaries. But it is unclear if the techniques in this paper can be extended to apply to circuits of this form. 

\subsubsection*{Speedups for ``quantum dynamic programming'' in many dimensions?}

\begin{figure}
    \centering
    \begin{subfigure}{.56\textwidth}
    \centering
        \includegraphics[scale=0.6]{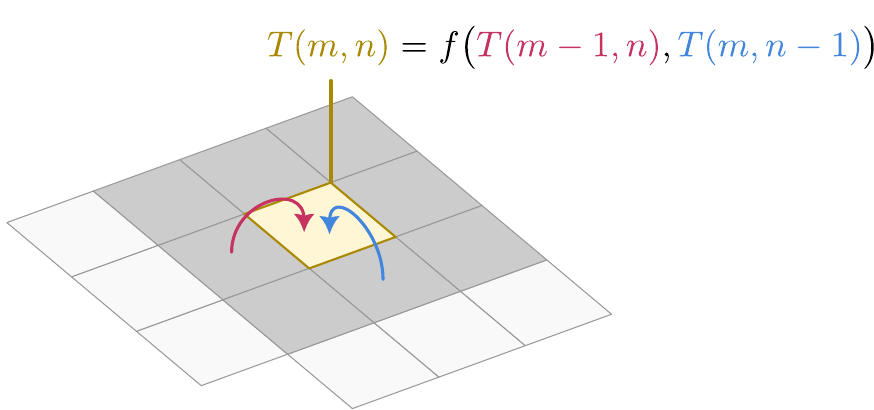}
        \vspace{.5em}
    \caption{A classical 2-dimensional memo table}
    \label{subfig:q4r-2d-c}
    \end{subfigure}
    \begin{subfigure}{.43\textwidth}
        \includegraphics[scale=0.6]{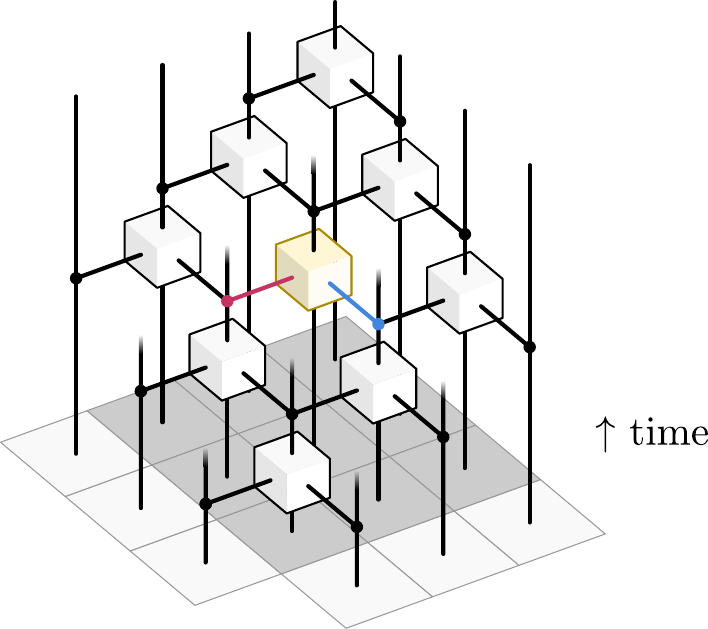}
        \vspace{-.7em}
        \caption{A quantum analogue}
        \label{subfig:q4r-2d-q}
    \end{subfigure}
    \vspace{-.5em}
    \caption{The current paper presents a quantum precomputation method for one-dimensional ``quantum memo tables.''
    Does a quantum precomputation method exist for 2D tables and up?}
    \label{fig:q4r-2d}
\end{figure}

The general parallelization techniques presented in this paper are specialized to one-dimensional ``quantum memo tables,'' or control cascade circuits.
One can envision general $k$-dimensional quantum memo tables, in analogy with classical dynamic programming---see \textit{e.g.,} \Cref{fig:q4r-2d}.
Can we parallelize such circuits?
Note that $k$-dimensional memo tables of maximum sidelength $n$---as well as their quantum analogues---can be naively computed in circuit depth $n$ (cells in the same diagonal are independent), so the appropriate goal is to asymptotically \textit{beat} depth~$n$.
Classically this is possible with the Four-Russians method \cite{arlazarov1970economical}, down to depth $kn/\log(n)$.

\subsection{Towards practical precomputation techniques}
\label{subsec:compilation}

Beyond having complexity theoretic implications, are the circuit rewriting techniques described in this paper useful in practice? Answering this question likely requires understanding how these rewriting procedures affect other resource requirements of circuits, for example their $T$ count.
And this, in turn, requires a finer-grained investigation of the procedure, particularly the $D, P$ and $R$ matrices constructed in~\Cref{lem:q4rid-Intro,lem:q4rid-moore-nilsson-Intro}.

\Cref{app:CSDecFlippedP} provides a first step in this direction, by giving inductive circuit decompositions of the $S, T,$ and $D$ unitaries generated by the CS decomposition of valley circuits. Beginning with these formula and then following the proofs outlined in this paper it should be possible to obtain clean descriptions of the circuits produced by parallelizing both Moore--Nilsson unitaries and other more-general unitary cascades of interest. This fine-grained investigation of circuits also has the potential to yield new asymptotic depth reductions for certain classes of unitary cascades, as it would allow replacing the worst-case bound of depth $4^k$ for implementing the $P$ and multiplexer $R$ gates with tighter upper bounds. 

Finally, we mention the possibility that a closer investigation of the circuits produced by our parallelization procedure might reveal the opportunity to apply even more compilation techniques. As a concrete example, we mention that the parallelization of a Moore--Nilsson cascade of identical unitaries (for example, the $\mc C (H,H,...,H)$ circuit discussed in this paper's introduction) produces columns of identical $P$ and multiplexer $R$ gates. Can the unitary ``mass-production'' theorems of~\cite{kretschmer2022quantum} be used in this setting?

\printbibliography   

\appendix

\section{Explicit formulas for the CS decomposition of a valley circuit}
\label{scc:CSDecFlippedP}
\label{app:CSDecFlippedP}


The CS decomposition 
of a valley circuit can be written in terms of the CS decomposition of
the unitaries which define the valley circuit.
In \Cref{lem:SinglePyramidDelta} we provided an algebraic 
formula for the stubs of the valley circuit.
In this section, we provide formulas for every component of a CS Decomposition for a basic valley; 
here, we do it totally in terms of circuit diagrams.

We start by reviewing CS decomposition notation. Then we state the main result  \Cref{prop:CSDecFlippedP}.

\subsection{Notation for this appendix}
We begin with a remark on notation, so that the forthcoming set of formulas is consistent with the ones which came before.
In this section, when we take the CS decomposition of a circuit $\Xi$, instead of using
\begin{align}
    \Xi = \autoname{\begin{quantikz}[wire types = {q,b}, classical gap = 2pt]
        & \coctrl{0}\wire[d]{q} & \gate{\Phi^\dagger} & \gate[2]{D^\Xi} & \gate{\Phi} & \coctrl{0}\wire[d]{q} & \\
        & \gate{S^\Xi} &&&& \gate{T^\Xi} &
    \end{quantikz}}
\end{align}
we will use a shorthanded version
\begin{align}
    \Xi = \autoname{\begin{quantikz}[wire types = {q,b}, classical gap = 2pt]
        & \coctrl{0}\wire[d]{q} & \gate[2]{\Sigma^\Xi} & \coctrl{0}\wire[d]{q} & \\
        & \gate{S^\Xi} && \gate{T^\Xi} &
    \end{quantikz}}
\end{align}
where we take
\begin{align}
    \Sigma^\Xi = \autoname{\begin{quantikz}[wire types = {q,b}, classical gap = 2pt]
        & \gate{\Phi^\dagger} & \gate[2]{D^\Xi} & \gate{\Phi} & \\
        &&&&
    \end{quantikz}} = \autoname{\begin{pmatrix}
        \Sigma_1^\Xi & \Sigma_2^\Xi \\ -\Sigma_2^\Xi & \Sigma_1^\Xi
    \end{pmatrix}}
\end{align}
where $\Sigma_1^\Xi$ and $\Sigma_2^\Xi$ are nonnegative diagonal matrices.


\subsection{The formula for the CS Decomposition of a valley}

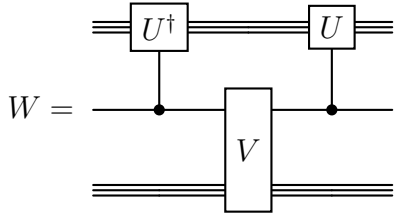
\begin{figure}[h!]
        \centering
        $$
        W = \autoname{\begin{quantikz}[classical gap=0.07cm, wire types = {b, q, b}]
            & \gate{U^\dagger} & & \gate{U} & \\
            & \ctrl{0}\wire[u]{q} & \gate[2]{V} & \ctrl{0}\wire[u]{q} &\\
            & & \ghost{V} & &
         \end{quantikz}}
        $$
        \caption{The basic valley}
        \label{fig:CorePyramid}
\end{figure}

\begin{proposition}
    \label{prop:CSDecFlippedP}
    Given a valley $W$ of $U$ and $V$ in accordance with \Cref{fig:CorePyramid} and CS Decompositions of $U$ and $V$, a CS Decomposition of $W$
    \begin{align}
        W =
        \autoname{\begin{quantikz}[wire types = {q,b,q,b}, classical gap = 2pt]
            & \coctrl{0}\wire[d]{q} & \gate[4]{\Sigma^W} & \coctrl{0}\wire[d]{q} & \\
            & \gate[3]{S^W} && \gate[3]{T^W} & \\
            &&&& \\
            &&&&
        \end{quantikz}}
    \end{align}
    has components
    \begin{align*}
        \Sigma^W := \begin{pmatrix}
            \Sigma_1^W & \Sigma_2^W \\
            -\Sigma_2^W & \Sigma_1
        \end{pmatrix}
        \qquad \text{with} \qquad 
        \Sigma_2^W := \autoname{\begin{quantikz}[wire types = {b,q,b}, classical gap = 2pt]
            & \gate{\Sigma_2^U} & \\
            && \\
            & \gate{\Sigma_2^V} &
        \end{quantikz}}
        \qquad \text{and} \qquad
        \Sigma_1^W := \sqrt{I - (\Sigma_2^W)^2}
        \\
        S^W = \autoname{\begin{quantikz}[wire types = {q,b,q,b}, classical gap = 2pt]
            & \coctrl{0}\wire[d]{q} && \octrl{2} & \\
            & \gate{\widetilde R^{U\dagger}} & \coctrl{0}\wire[d]{q} && \\
            & \coctrl{0}\wire[d]{q}\wire[u]{q} & \gate{L^\dagger} & \gate{Z} & \\
            & \gate{S^V} & \coctrl{0}\wire[u]{q} &&
        \end{quantikz}}
        \qquad \text{and} \qquad
        T^W = \autoname{\begin{quantikz}[wire types = {q,b,q,b}, classical gap = 2pt]
            & \ctrl{2} &&& \coctrl{0}\wire[d]{q} & \\
            && \coctrl{0}\wire[d]{q} && \gate{\widetilde R^U} & \\
            & \gate{Z} & \gate{L} & \gate{X} & \coctrl{0}\wire[d]{q}\wire[u]{q} & \\
            && \coctrl{0}\wire[u]{q} && \gate{T^V} &
        \end{quantikz}}
    \end{align*}
    where $\widetilde R^U$ is represented by
    \begin{align}
        \widetilde R^U = 
        \autoname{\begin{quantikz}[wire types = {q,b,q}, classical gap = 2pt]
            & \coctrl{0}\wire[d]{q} & \\
            & \gate{\widetilde R^U} & \\
            & \coctrl{0}\wire[u]{q} &
        \end{quantikz}} = 
        \autoname{\begin{quantikz}[wire types = {q,b,q}, classical gap = 2pt]
            & \coctrl{0}\wire[d]{q} & \coctrl{0}\wire[d]{q} & \\
            & \gate{S^{U\dagger}} & \gate{T^U} & \\
            & \octrl{-1} & \ctrl{-1} &
        \end{quantikz}} = 
        \autoname{\begin{quantikz}[wire types = {q,b,q}, classical gap = 2pt]
            & \octrl{1} & \ctrl{1} & \octrl{1} & \ctrl{1} & \\
            & \gate{S_0^{U\dagger}} & \gate{S_1^{U\dagger}} & \gate{T_0^{U}} & \gate{T_1^{U}} & \\
            & \octrl{-1} & \octrl{-1} & \ctrl{-1} & \ctrl{-1} &
        \end{quantikz}}
    \end{align}
    and $L$ represents a list of $(\dim W)/4$ one-qubit unitaries are defined in \Cref{eq:BW}.
\end{proposition}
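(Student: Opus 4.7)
The plan is to build the CS decomposition of $W$ from the block decomposition already derived in the proof of \Cref{lem:SinglePyramidDelta}, namely
\[
W \;=\; I_{\dim U} \otimes \ketbra00 \otimes V_{00} \;+\; U^\dagger \otimes \ketbra10 \otimes V_{10} \;+\; U \otimes \ketbra01 \otimes V_{01} \;+\; I_{\dim U} \otimes \ketbra11 \otimes V_{11},
\]
into which we substitute the given CS decompositions $U = S^U \Sigma^U T^U$ and $V = S^V \Sigma^V T^V$. The first step is to expand the four $V$-blocks via $V_{00} = S_0^V \Sigma_1^V T_0^V$, $V_{01} = S_0^V \Sigma_2^V T_1^V$, $V_{10} = -S_1^V \Sigma_2^V T_0^V$, $V_{11} = S_1^V \Sigma_1^V T_1^V$, and observe that the action of $S^V$ on the left and $T^V$ on the right of the bottom register is common to all four terms. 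Pulling these out identifies the rightmost factor $\begin{quantikz}[cramped,row sep=small]\wireoverride{n}\&\coctrl{1}\&\\\&\gate{T^V}\&\end{quantikz}$ in $T^W$ and the analogous leftmost factor in $S^W$.

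Second, I would factor the $U$-dependence on the top register, which appears only in the $\ketbra10$- and $\ketbra01$-sectors of the middle qubit. Writing these sectors in the CS form and unfolding the block-diagonal parts of $U$ produces exactly the ``controlled $S^{U\dagger}/T^U$'' operation that the proposition packages as $\widetilde R^U$: in the $\ket{0}$-branch of the middle qubit it acts as $T^U$, and in the $\ket{1}$-branch as $S^{U\dagger}$, which matches the diagram
\begin{align*}
\widetilde R^U \;=\; \begin{quantikz}[cramped,classical gap = 2pt]
    & \octrl{1} & \ctrl{1} & \octrl{1} & \ctrl{1} & \\
    & \gate{S_0^{U\dagger}} & \gate{S_1^{U\dagger}} & \gate{T_0^{U}} & \gate{T_1^{U}} & \\
    & \octrl{-1} & \octrl{-1} & \ctrl{-1} & \ctrl{-1} &
\end{quantikz}.
\end{align*}
After peeling off $S^V, T^V$ and the two $\widetilde R^U$'s, what remains in the middle is a unitary whose only nontrivial content is built from $\Sigma^U$ on the top register and $\Sigma^V$ on the bottom register, conditioned on the middle qubit. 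The tensor product structure of $(\Sigma_2^W)^2 = (\Sigma_2^U)^2 \otimes I_2 \otimes (\Sigma_2^V)^2$ from \Cref{lem:SinglePyramidDelta} already forces the stubs to have the form claimed for $\Sigma^W$.

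The main obstacle, and the reason the unitary $L$ must be introduced, is that after these factorizations the middle piece is not literally equal to the canonical CS matrix $\begin{psmallmatrix} \Sigma_1^W & \Sigma_2^W \\ -\Sigma_2^W & \Sigma_1^W \end{psmallmatrix}$: the rows and columns corresponding to the first qubit of the top register and to the middle qubit are interleaved in the wrong order, and the $\ketbra10$- versus $\ketbra01$-sectors contribute with a relative sign. The last step is to diagonalize, in each eigenspace of $\Sigma_2^U \otimes \Sigma_2^V$, the resulting $2\times 2$ block mixing the two off-diagonal sectors. Because the block to be diagonalized depends only on the paired eigenvalues of $\Sigma_2^U \otimes \Sigma_2^V$, this diagonalization can be realized by a multiplexed one-qubit unitary indexed by those $(\dim W)/4$ eigenvalue pairs, and the needed Pauli $X, Z$ corrections absorb the remaining sign and swap. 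This is precisely $L$ together with the $X$ and $Z$ gates in the proposition's formulas, and a direct verification confirms that the assembled $S^W \Sigma^W T^W$ reproduces the block expression for $W$ above, completing the CS decomposition.
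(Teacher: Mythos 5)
Your proposal follows essentially the same route as the paper's Appendix~A argument: first pull out the $S^V$, $T^V$, and $\widetilde R^{U}$ multiplexers (controlled on the middle wire) from the block expansion of $W$, reducing to a valley of $\Sigma$-matrices, and then analyze that remainder block-by-block and diagonalize the resulting $2\times 2$ blocks on the middle qubit via the multiplexed one-qubit unitary $L$. One book-keeping slip worth fixing: $\widetilde R^{U}$ applies $S^{U\dagger}$ when the middle wire is $\ket 0$ (open control) and $T^{U}$ when it is $\ket 1$ (closed control), the opposite of what your text states, even though the diagram you reproduce has the controls the right way around.
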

\noindent
\textit{Note:} $\widetilde R$ is the same as $R$ used in the parallelization identity with reversed qubits.

The proof is based on two lemmas each presented in its own subsubsection.

\subsubsection{Reduction from a general valley to a valley of diagonal matrices}

\begin{lemma}
    Given CS decompositions of $U$ and $V$
    \begin{align}
        U = \autoname{\begin{quantikz}[wire types = {q,b}, classical gap = 2pt]
            & \coctrl{0}\wire[d]{q} & \gate[2]{\Sigma^U} & \coctrl{0}\wire[d]{q} & \\
            & \gate{S^U} & \ghost{\Sigma^U} & \gate{T^U} &
        \end{quantikz}}
        \qquad \text{and} \qquad
        V = \autoname{\begin{quantikz}[wire types = {q,b}, classical gap = 2pt]
            & \coctrl{0}\wire[d]{q} & \gate[2]{\Sigma^V} & \coctrl{0}\wire[d]{q} & \\
            & \gate{S^V} & \ghost{\Sigma^V} & \gate{T^V} &
        \end{quantikz}}.
    \end{align}
    We can factor the waterfall $W$ defined as 
    \begin{align}
        W = \autoname{\begin{quantikz}[wire types = {q,b,q,b}, classical gap = 2pt]
            & \gate[2]{U^\dagger} && \gate[2]{U} & \\
            & \ghost{U} && \ghost{U^\dagger} & \\
            & \ctrl{-1} & \gate[2]{V} & \ctrl{-1} & \\
            && \ghost{V} &&
        \end{quantikz}}
    \end{align}
    into
    \begin{align*}
        W = 
        \autoname{\begin{quantikz}[wire types = {q,b,q,b}, classical gap = 2pt]
            & \coctrl{0}\wire[d]{q} & \\
            & \gate{\widetilde R^{U\dagger}} & \\
            & \coctrl{0}\wire[u]{q}\wire[d]{q} & \\
            & \gate{S^V} &
        \end{quantikz}}\cdot
        \autoname{\begin{quantikz}[wire types = {q,b,q,b}, classical gap = 2pt]
            & \gate[2]{\Sigma^{U\dagger}} && \gate[2]{\Sigma^{U}} & \\
            & \ghost{U} && \ghost{U} & \\
            & \ctrl{-1} & \gate[2]{\Sigma^V} & \ctrl{-1} & \\
            && \ghost{V} &&
        \end{quantikz}}\cdot
        \autoname{\begin{quantikz}[wire types = {q,b,q,b}, classical gap = 2pt]
            & \coctrl{0}\wire[d]{q} & \\
            & \gate{\widetilde R^{U}} & \\
            & \coctrl{0}\wire[u]{q}\wire[d]{q} & \\
            & \gate{T^V} &
        \end{quantikz}}
    \end{align*}
    where the multiplexer $\widetilde R^{U}$ refers to the following circuit:
    \begin{align}
        \autoname{\begin{quantikz}[wire types = {q,b,q}, classical gap = 2pt]
            & \coctrl{0}\wire[d]{q} & \\
            & \gate{\widetilde R^U} & \\
            & \coctrl{0}\wire[u]{q} &
        \end{quantikz}} = 
        \autoname{\begin{quantikz}[wire types = {q,b,q}, classical gap = 2pt]
            & \coctrl{0}\wire[d]{q} & \coctrl{0}\wire[d]{q} & \\
            & \gate{S^{U\dagger}} & \gate{T^U} & \\
            & \octrl{-1} & \ctrl{-1} &
        \end{quantikz}} = 
        \autoname{\begin{quantikz}[wire types = {q,b,q}, classical gap = 2pt]
            & \octrl{1} & \ctrl{1} & \octrl{1} & \ctrl{1} & \\
            & \gate{S_0^{U\dagger}} & \gate{S_1^{U\dagger}} & \gate{T_0^{U}} & \gate{T_1^{U}} & \\
            & \octrl{-1} & \octrl{-1} & \ctrl{-1} & \ctrl{-1} &
        \end{quantikz}}
    \end{align}
\end{lemma}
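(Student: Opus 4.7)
The plan is to verify the claimed identity by expanding both sides in $2\times 2$ block form with respect to the middle qubit of the valley, which I will denote $q_{\mathrm{mid}}$. Since the valley satisfies $W = (\mathrm{ctrl}\text{-}q_{\mathrm{mid}}\, U^{\dagger}) \cdot V \cdot (\mathrm{ctrl}\text{-}q_{\mathrm{mid}}\, U)$, the two controlled operations are block-diagonal on $q_{\mathrm{mid}}$ with the identity in the ``$0$''-slot and $U$ (or $U^{\dagger}$) in the ``$1$''-slot, while $V$ carries its full block decomposition with entries $V_{ij}$. Multiplying these three factors and using that operators on the top and bottom wires act on disjoint tensor factors (so that in particular $U^{\dagger} V_{11} U$ collapses to $V_{11}$), one obtains
\begin{align*}
    W \;=\; \begin{pmatrix} I_{\mathrm{top}} \otimes V_{00} & U \otimes V_{01} \\ U^{\dagger} \otimes V_{10} & I_{\mathrm{top}} \otimes V_{11} \end{pmatrix}.
\end{align*}

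Next I would compute the proposed RHS in the same block form. The two outer factors are products of multiplexers all controlled by $q_{\mathrm{mid}}$ (the pieces of $\widetilde R^{U\dagger}$ and $\widetilde R^{U}$ together with $S^V, T^V$), and these commute pairwise because the ``$R$'' parts act on the top wire while $S^V, T^V$ act on the bottom wire. Unpacking the definitions yields
\begin{align*}
    \mathrm{Left} \;=\; \begin{pmatrix} S^U \otimes S^V_0 & 0 \\ 0 & T^{U\dagger} \otimes S^V_1 \end{pmatrix}, \qquad \mathrm{Right} \;=\; \begin{pmatrix} S^{U\dagger} \otimes T^V_0 & 0 \\ 0 & T^U \otimes T^V_1 \end{pmatrix},
\end{align*}
while the middle factor inherits the full off-diagonal structure of $\Sigma^V$ sandwiched by controlled $\Sigma^{U\dagger}$ and $\Sigma^U$ on the top wire (its $(1,1)$ block simplifying via $\Sigma^{U\dagger}\Sigma^U = I$). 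Multiplying $\mathrm{Left}\cdot\mathrm{Middle}\cdot\mathrm{Right}$ block by block, the diagonal blocks telescope using $S^U S^{U\dagger} = I$ and $T^{U\dagger} T^U = I$ to give $I_{\mathrm{top}} \otimes S^V_i \Sigma^V_{ii} T^V_i = I_{\mathrm{top}} \otimes V_{ii}$, while the off-diagonal blocks reassemble $S^U \Sigma^U T^U = U$ (resp.\ $T^{U\dagger}\Sigma^{U\dagger}S^{U\dagger} = U^{\dagger}$) on the top wire, tensored with $V_{01}$ (resp.\ $V_{10}$) on the bottom wire. This matches the block-form of $W$ above.

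The main conceptual step, and the only place where there is a real choice to make, is the definition of $\widetilde R^U$: in its two multiplexer slots (indexed by $q_{\mathrm{mid}}$) it must place $S^{U\dagger}$ and $T^U$ rather than $I$ and $T^U$, so that when paired with $\widetilde R^{U\dagger}$ on the other side, the inserted factors $S^U, S^{U\dagger}$ and $T^U, T^{U\dagger}$ cancel on the diagonal blocks (where the controlled $U$ and $U^{\dagger}$ should produce identity on the top wire) while simultaneously piecing together the CS factorization $S^U \Sigma^U T^U = U$ across the off-diagonal blocks. Once this telescoping pattern is spotted, the verification reduces to the short blockwise computation above. As a purely diagrammatic alternative, one can instead commute $S^V$ and $T^V$ through the controlled-$U^{\dagger}$ and controlled-$U$ (they act on disjoint non-control wires, so the multiplexers commute), expand $U^{\dagger}, U$ via their CS decompositions, and then insert the identities $S^U S^{U\dagger} = T^U T^{U\dagger} = I$ at the ``$q_{\mathrm{mid}}=0$'' positions to grow the outermost doubly controlled multiplexers into the precise form of $\widetilde R^{U\dagger}$ and $\widetilde R^{U}$.
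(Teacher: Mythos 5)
Your proof is correct, and the primary argument you give takes a genuinely different route from the paper's. You verify the factorization by expanding both sides as $2\times 2$ block matrices with respect to the middle control qubit: you compute that $W$ has blocks $I\otimes V_{00}$, $U\otimes V_{01}$, $U^{\dagger}\otimes V_{10}$, $I\otimes V_{11}$, observe that $L$ and $R$ are block-diagonal with the stated entries, and then check block-by-block that $L_{ii}M_{ij}R_{jj}$ reproduces $W_{ij}$ using the cancellations $S^{U}S^{U\dagger}=T^{U\dagger}T^{U}=I$ on the diagonal and the CS reassemblies $S^{U}\Sigma^{U}T^{U}=U$, $S^{V}_{i}\Sigma^{V}_{ij}T^{V}_{j}=V_{ij}$ on all four blocks. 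This is a direct, mechanical verification. The paper instead stays entirely at the circuit-diagram level: after substituting the CS decompositions into the valley, it multiplies on each side by the identity written as (multiplexer $S^{U}$)(multiplexer $S^{U\dagger}$) anti-controlled on the middle wire, commutes these factors through the existing controls, cancels the central $S^{U}$ pair, slides $S^{V}$ and $T^{V}$ outward, and recognizes the remaining outer pair as $\widetilde R^{U\dagger}$ and $\widetilde R^{U}$. You actually sketch this exact diagrammatic maneuver yourself in your final paragraph as an alternative, so you have found both arguments. What each buys: your blockwise computation is elementary and trivially checkable, and it makes the structure of $\widetilde R$ feel inevitable (you say explicitly why the two slots must carry $S^{U\dagger}$ and $T^{U}$, which is the one genuine design choice in the lemma); the paper's diagrammatic version is shorter once the identity insertion is spotted and stays in the circuit-rewriting idiom the rest of the paper uses, at the cost of requiring that non-obvious insert-and-commute move up front.
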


\begin{proof}
    We start by substituting $U$ and $V$ with their CS decompositions
    $$
    W = 
    \autoname{\begin{quantikz}[wire types = {q,b,q,b}, classical gap = 2pt]
        & \coctrl{0}\wire[d]{q} & \gate[2]{\Sigma^{U\dagger}} & \coctrl{0}\wire[d]{q} &&&& \coctrl{0}\wire[d]{q} & \gate[2]{\Sigma^{U}} & \coctrl{0}\wire[d]{q} & \\
        & \gate{T^{U\dagger}} & \ghost{\Sigma^U} & \gate{S^{U\dagger}} &&&& \gate{S^{U}} & \ghost{\Sigma^U} & \gate{T^{U}} & \\
        & \ctrl{0}\wire[u]{q} & \ctrl{0}\wire[u]{q} & \ctrl{0}\wire[u]{q} & \coctrl{0}\wire[d]{q} & \gate[2]{\Sigma^V} & \coctrl{0}\wire[d]{q} & \ctrl{0}\wire[u]{q} & \ctrl{0}\wire[u]{q}& \ctrl{0}\wire[u]{q} & \\    
        &&&& \gate{S^V} & \ghost{\Sigma^V} & \gate{T^V} &&&&
    \end{quantikz}}
    $$
    We can multiply by 
    $\autoname{\begin{quantikz}[wire types = {q,b,q,b}, classical gap = 2pt]
        & \coctrl{0}\wire[d]{q} & \coctrl{0}\wire[d]{q} & \\
        & \gate{S^{U}} & \gate{S^{U\dagger}} & \\
        & \octrl{-1} & \octrl{-1} &
    \end{quantikz}}$
    which is the identity on both the left and right, 
    and then commute them through the closed controls to get
    \begin{align}
        W = \autoname{\begin{quantikz}[wire types = {q,b,q,b}, classical gap = 2pt]
            & \coctrl{0}\wire[d]{q} & \coctrl{0}\wire[d]{q} & \gate[2]{\Sigma^{U\dagger}} & \coctrl{0}\wire[d]{q} &&&& \coctrl{0}\wire[d]{q} & \gate[2]{\Sigma^{U}} & \coctrl{0}\wire[d]{q} & \coctrl{0}\wire[d]{q} & \\
            & \gate{S^U} & \gate{T^{U\dagger}} & \ghost{\Sigma^U} & \gate{S^{U\dagger}} &&&& \gate{S^{U}} & \ghost{\Sigma^U} & \gate{S^{U\dagger}} & \gate{T^U} & \\
            & \octrl{0}\wire[u]{q} & \ctrl{-1} & \ctrl{0}\wire[u]{q} && \coctrl{0}\wire[d]{q} & \gate[2]{\Sigma^V} & \coctrl{0}\wire[d]{q} && \ctrl{0}\wire[u]{q} & \octrl{0}\wire[u]{q} & \ctrl{-1} & \\
            &&&&& \gate{S^V} & \ghost{\Sigma^V} & \gate{T^V} &&&&&
        \end{quantikz}}
    \end{align}
    The center $S^U$ terms cancel, the $S^V$ and $T^V$ terms commute out, and the $S^U$ and $T^U$ terms combine into $\widetilde R^U$ to get
    \begin{align}
        W = \autoname{\begin{quantikz}[wire types = {q,b,q,b}, classical gap = 2pt]
            & \coctrl{0}\wire[d]{q} & \gate[2]{\Sigma^{U\dagger}} && \gate[2]{\Sigma^{U}} & \coctrl{0}\wire[d]{q} & \\
            & \gate{\widetilde R^{U\dagger}} & \ghost{\Sigma^U} && \ghost{\Sigma^U} & \gate{\widetilde R^U} & \\
            & \coctrl{0}\wire[u]{q}\wire[d]{q} & \ctrl{-1} & \gate[2]{\Sigma^V} & \ctrl{-1} & \coctrl{0}\wire[u]{q}\wire[d]{q} & \\
            & \gate{S^V} && \ghost{\Sigma^V} && \gate{T^V} &
        \end{quantikz}}
    \end{align}
    which confirms the factorization.
\end{proof}


\subsubsection{Valleys of diagonal matrices}

\begin{lemma}
    Let $\Sigma^U = \begin{pmatrix}
        \Sigma_1^U & \Sigma_2^U \\ -\Sigma_2^U & \Sigma_1^U
    \end{pmatrix}$ be a unitary circuit where $\Sigma_1^U$ and $\Sigma_2^U$ are diagonal and let $\Sigma^V$ be similarly defined.
    Given a valley circuit $W^\Sigma$ of $\Sigma^U$ and $\Sigma^V$ in accordance with

    \begin{align}
        W^\Sigma = \autoname{\begin{quantikz}[wire types = {q,b,q,b}, classical gap = 2pt]
            & \gate[2]{\Sigma^{U\dagger}} && \gate[2]{\Sigma^{U}} & \\
            & \ghost{\Sigma^U} && \ghost{\Sigma^U} & \\
            & \ctrl{-1} & \gate[2]{\Sigma^V} & \ctrl{-1} & \\
            && \ghost{\Sigma^V} &&
        \end{quantikz}},
    \end{align}
    a CS Decomposition of $W^\Sigma$ is given by
    \begin{align*}
        \Sigma_2^{W^\Sigma} := \autoname{\begin{quantikz}[wire types = {b,q,b}, classical gap = 2pt]
            & \gate{\Sigma_2^U} & \\
            && \\
            & \gate{\Sigma_2^V} &
        \end{quantikz}}
        \qquad \text{and} \qquad
        \Sigma_1^{W^\Sigma} := \sqrt{I - (\Sigma_2^{W^\Sigma})^2}
        \\
        S^{W^\Sigma} = \autoname{\begin{quantikz}[wire types = {q,b,q,b}, classical gap = 2pt]
            && \octrl{2} & \\
            & \coctrl{0}\wire[d]{q} && \\
            & \gate{L^\dagger} & \gate{Z} & \\
            & \coctrl{0}\wire[u]{q} &&
        \end{quantikz}}
        \qquad \text{and} \qquad
        T^{W^\Sigma} = \autoname{\begin{quantikz}[wire types = {q,b,q,b}, classical gap = 2pt]
            & \ctrl{2} &&& \\
            && \coctrl{0}\wire[d]{q} && \\
            & \gate{Z} & \gate{L} & \gate{X} & \\
            && \coctrl{0}\wire[u]{q} &&
        \end{quantikz}}
    \end{align*}
    where $L$ is a list of $\dim W/4$ one-qubit unitaries.
\end{lemma}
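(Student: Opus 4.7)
The plan is to exploit the block structure imposed by the diagonal form of $\Sigma_1^U, \Sigma_2^U, \Sigma_1^V, \Sigma_2^V$ and reduce the lemma to a $4\times 4$ verification per index pair. Since all four stub matrices are diagonal, each of $\Sigma^U$ and $\Sigma^V$ decomposes as a direct sum of $2\times 2$ rotations indexed by its second register: for each index $j$ into wire $2$, $\Sigma^U$ acts on $\operatorname{span}\{|0,j\rangle,|1,j\rangle\}$ as the rotation $R(\theta^U_j) := \bigl(\begin{smallmatrix} c^U_j & s^U_j \\ -s^U_j & c^U_j\end{smallmatrix}\bigr)$ with $c^U_j := (\Sigma_1^U)_{jj}$ and $s^U_j := (\Sigma_2^U)_{jj}$, and symmetrically $\Sigma^V = \bigoplus_k R(\theta^V_k)$ indexed by $k$. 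Because $W^\Sigma$ uses wire $3$ only as a closed control on the $\Sigma^{U\dagger},\Sigma^U$ blocks and as the top qubit of $\Sigma^V$, each four-dimensional subspace $\mc H_{j,k} := \operatorname{span}\{|a,j,c,k\rangle : a,c \in \{0,1\}\}$ is invariant under $W^\Sigma$, so $W^\Sigma = \bigoplus_{j,k} W^\Sigma|_{\mc H_{j,k}}$.

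Tracing a basis state of $\mc H_{j,k}$ through the three gates of $W^\Sigma$ yields an explicit $4 \times 4$ matrix. Partitioning this matrix into $2\times 2$ blocks according to the wire-$1$ value (the CS split), one finds that the diagonal blocks $(W^\Sigma)_{00}|_{j,k} = (W^\Sigma)_{11}|_{j,k}$ both equal $\tau_{j,k} R(\alpha_{j,k})$ with $\cos\alpha_{j,k} = c^V_k/\tau_{j,k}$, $\sin\alpha_{j,k} = c^U_j s^V_k/\tau_{j,k}$, and $\tau_{j,k} := \sqrt{1-(s^U_j s^V_k)^2}$; and the off-diagonal blocks are $+s^U_j s^V_k X$ and $-s^U_j s^V_k X$. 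Consequently every block has singular values $\tau_{j,k}$ or $s^U_j s^V_k$, each of multiplicity two, which immediately confirms $\Sigma_2^{W^\Sigma} = \Sigma_2^U \otimes I_2 \otimes \Sigma_2^V$ and $\Sigma_1^{W^\Sigma} = \sqrt{I-(\Sigma_2^{W^\Sigma})^2}$ as stated.

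To finish I would verify the proposed $S^{W^\Sigma}$ and $T^{W^\Sigma}$ by unpacking them on each $\mc H_{j,k}$. At fixed $(j,k)$ the multiplexers reduce to the $2\times 2$ components $S_0 = L^\dagger_{j,k} Z$, $S_1 = L^\dagger_{j,k}$, $T_0 = L_{j,k} X$, $T_1 = Z L_{j,k} X$ acting on wire $3$, with $L_{j,k}$ the value of the $L$ multiplexer at index $(j,k)$. Substituting these together with $\Sigma_1 = \tau_{j,k} I$ and $\Sigma_2 = s^U_j s^V_k I$ into the four CS block equations, the two equations involving $\Sigma_2$ are automatic because $S_0 T_1 = L^\dagger Z \cdot Z L X = X$ and $S_1 T_0 = L^\dagger \cdot L X = X$ regardless of $L$; the remaining two both collapse to the single requirement $L^\dagger_{j,k} Z L_{j,k} = R(\alpha_{j,k}) X$. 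Since $R(\alpha_{j,k}) X$ is a trace-zero self-inverse unitary with spectrum $\{+1,-1\}$ (the same as $Z$), the requisite $L_{j,k}$ exists --- concretely, the rotation by $\pi/4 - \alpha_{j,k}/2$ works --- and assembling the $L_{j,k}$'s into a single multiplexer controlled by wires $2$ and $4$ produces the $L$ of the lemma statement.

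The main obstacle is sign bookkeeping: the closed controls on wire $3$ in $W^\Sigma$ introduce subtle sign flips between the $\pm s^U_j s^V_k X$ off-diagonal blocks, and the fixed universal gates $Z$ and $X$ appearing in $S^{W^\Sigma}$, $T^{W^\Sigma}$ must be placed exactly so that all residual $(j,k)$-dependence can be packaged into the single one-qubit multiplexer $L$ and none leaks into extra universal unitaries. Once the $4\times 4$ matrix of $W^\Sigma|_{\mc H_{j,k}}$ is in hand, the remainder is a short algebraic check.
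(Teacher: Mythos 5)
Your approach is essentially the same as the paper's: both reduce the lemma to a direct verification of the CS block equations, and both exploit the fact that the diagonal stubs make $W^\Sigma$ block diagonal with respect to the outer registers. Where the paper carries out the computation with diagonal matrices on the full space (decomposing first by the middle wire, then by the top), you pass immediately to the invariant $4\times 4$ blocks $\mc H_{j,k}$, which is a tidy repackaging of the same calculation: your $\tau_{j,k}R(\alpha_{j,k})$ diagonal blocks, $\pm s^U_j s^V_k X$ off-diagonal blocks, and the reduction of the $S,T$ verification to the single per-index constraint on $L_{j,k}$ all check out against the paper's quantities. Two caveats: (i) as written, your per-block components $S_0 = L^\dagger_{j,k}Z$, $T_0 = L_{j,k}X$, etc.\ list the gates in circuit (time) order rather than matrix-product order, which flips the composition; the cancellations you then invoke ($S_0T_1 = X$, $S_1T_0 = X$) happen to survive either reading, but the final constraint on $L_{j,k}$ comes out as $XL_{j,k}ZL_{j,k}^\dagger = R(\alpha_{j,k})$ rather than the transposed form you wrote --- both are solvable since $R(\alpha)X$ and $XR(\alpha)$ are real symmetric with spectrum $\{\pm 1\}$, so this is cosmetic; and (ii) you correctly flag that the final sign check is not yet done, and it is indeed a non-trivial bookkeeping exercise (the paper devotes the ``Evaluating the CS decomposition'' subsection to precisely this), so the proposal is a sound plan but stops short of a complete proof.
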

\begin{proof}
    We write $W^{\Sigma}$ as a block  $ 2 \times 2$ matrix, based on inputs and outputs on the middle wire.
    This allows us to write $W^\Sigma$ as the sum of the $4$ blocks. We will use $\ketbra{input}{output}$ on the middle wire to denote which block is selected. The sum of the four blocks can be written:  
    \begin{align*}
        W^\Sigma = 
        \autoname{\begin{quantikz}[wire types = {q,b,q,b}, classical gap = 2pt]
            & \ghost{\Sigma^U} & \\
            & \ghost{\Sigma^U} & \\
            & \gate{\ketbra00} & \\
            & \gate{\Sigma_1^V} &
        \end{quantikz}}
        \ + \
        \autoname{\begin{quantikz}[wire types = {q,b,q,b}, classical gap = 2pt]
            & \ghost{\Sigma^U} & \\
            & \ghost{\Sigma^U} & \\
            & \gate{\ketbra11} & \\
            & \gate{\Sigma_1^V} &
        \end{quantikz}}
        \ + \
        \autoname{\begin{quantikz}[wire types = {q,b,q,b}, classical gap = 2pt]
            & \gate[2]{\Sigma^U} & \\
            & \ghost{\Sigma^U} & \\
            & \gate{\ketbra01} & \\
            & \gate{\Sigma_2^V} &
        \end{quantikz}}
        \ + \
        \autoname{\begin{quantikz}[wire types = {q,b,q,b}, classical gap = 2pt]
            & \gate[2]{\Sigma^{U\dagger}} & \\
            & \ghost{\Sigma^U} & \\
            & \gate{\ketbra10} & \\
            & \gate{-\Sigma_2^V} &
        \end{quantikz}}
    \end{align*}
    We next consider the blocks of $W^\Sigma$ individually. 
        First, we start by analyzing the off-diagonal blocks.

\bs 

\noindent
    {\bf Off-diagonal blocks.}
    $W^\Sigma_{01}$ is the off-diagonal block where we take the input $\bra0$ and the output $\ket1$ on the top wire. With the identity on the top wire, the first two terms of $W^\Sigma_{01}$ are $0$ and we are left with
    \begin{align*}
        W^\Sigma_{01} = \autoname{\begin{quantikz}[wire types = {q,b,q,b}, classical gap = 2pt]
            \bra0 & \gate[4]{W^\Sigma} & \ket{1} \\
            & \ghost{W} & \\
            & \ghost{W} & \\
            & \ghost{W} &
        \end{quantikz}}
        =
        \autoname{\begin{quantikz}[wire types = {b,q,b}, classical gap = 2pt]
            & \gate{\Sigma_2^U} & \\
            & \gate{\ketbra01} & \\
            & \gate{\Sigma_2^V} &
        \end{quantikz}} + 
        \autoname{\begin{quantikz}[wire types = {b,q,b}, classical gap = 2pt]
            & \gate{-\Sigma_2^U} & \\
            & \gate{\ketbra10} & \\
            & \gate{-\Sigma_2^V} &
        \end{quantikz}}
        =
        \autoname{\begin{quantikz}[wire types = {b,q,b}, classical gap = 2pt]
            & \gate{\Sigma_2^U} & \\
            & \gate{X} & \\
            & \gate{\Sigma_2^V} &
        \end{quantikz}}
        =
        \autoname{\begin{quantikz}[wire types = {b,q,b}, classical gap = 2pt]
            & \gate[3]{\Sigma_2^{W^\Sigma}} && \\
            && \gate{X} & \\
            &&&
        \end{quantikz}}
    \end{align*}
    because $\ketbra01 + \ketbra10 = X$.
    
    We can perform similar operations to find $W_{10}^\Sigma$
    \begin{align*}
        W_{10}^\Sigma = 
        \autoname{\begin{quantikz}[wire types = {q,b,q,b}, classical gap = 2pt]
            \bra1 & \gate[4]{W^\Sigma} & \ket{0} \\
            & \ghost{W} & \\
            & \ghost{W} & \\
            & \ghost{W} &
        \end{quantikz}}
        =
        \autoname{\begin{quantikz}[wire types = {b,q,b}, classical gap = 2pt]
            & \gate{-\Sigma_2^U} & \\
            & \gate{\ketbra01} & \\
            & \gate{\Sigma_2^V} &
        \end{quantikz}} + 
        \autoname{\begin{quantikz}[wire types = {b,q,b}, classical gap = 2pt]
            & \gate{\Sigma_2^U} & \\
            & \gate{\ketbra10} & \\
            & \gate{-\Sigma_2^V} &
        \end{quantikz}}
        = 
        \autoname{\begin{quantikz}[wire types = {b,q,b}, classical gap = 2pt]
            & \gate{\Sigma_2^U} & \\
            & \gate{X} & \\
            & \gate{-\Sigma_2^V} &
        \end{quantikz}}
        =
        \autoname{\begin{quantikz}[wire types = {b,q,b}, classical gap = 2pt]
            & \gate[3]{-\Sigma_2^{W^\Sigma}} && \\
            && \gate{X} & \\
            &&&
        \end{quantikz}}
    \end{align*}

    
    \noindent
    {\bf Diagonal blocks.}
    Next, we consider the $W_{00}^\Sigma$ block. We have
    \begin{align}
        W_{00}^\Sigma = \autoname{\begin{quantikz}[wire types = {b,q,b}, classical gap = 2pt]
            & \ghost{\Sigma^U} & \\
            & \gate{\ketbra00} & \\
            & \gate{\Sigma_1^V} & 
        \end{quantikz}}
        +
        \autoname{\begin{quantikz}[wire types = {b,q,b}, classical gap = 2pt]
            & \ghost{\Sigma^U} & \\
            & \gate{\ketbra11} & \\
            & \gate{\Sigma_1^V} & 
        \end{quantikz}}
        + 
        \autoname{\begin{quantikz}[wire types = {b,q,b}, classical gap = 2pt]
            & \gate{\Sigma_1^U} & \\
            & \gate{\ketbra01} & \\
            & \gate{\Sigma_2^V} & 
        \end{quantikz}}
        +
        \autoname{\begin{quantikz}[wire types = {b,q,b}, classical gap = 2pt]
            & \gate{\Sigma_1^U} & \\
            & \gate{\ketbra10} & \\
            & \gate{-\Sigma_2^V} & 
        \end{quantikz}}
    \end{align}
    We can combine the first two terms as $\ketbra00 + \ketbra11 = I_2$ and we can combine the second two terms as $\ketbra01 - \ketbra10 = ZX$.
    $W_{00}^\Sigma$ can then be written as
    \begin{align}
        W_{00}^\Sigma = \autoname{\begin{quantikz}[wire types = {b,q,b}, classical gap = 2pt]
            & \ghost{\Sigma} & \\
            & \ghost{X} & \\
            & \gate{\Sigma_1^V} &
        \end{quantikz}} + \autoname{\begin{quantikz}[wire types = {b,q,b}, classical gap = 2pt]
            & \gate{\Sigma_1^U} & \\
            & \gate{ZX} & \\
            & \gate{\Sigma_2^V} &
        \end{quantikz}}
    \end{align}
    We observe that $W_{00}^\Sigma$ is diagonal except on the middle wire. If we were to swap the middle and bottom wires using $P_{swap}$, we would have
    \begin{align}
        P_{swap}^\dagger W_{00}^\Sigma P_{swap} = \autoname{\begin{quantikz}[wire types = {b,b,q}, classical gap = 2pt]
            & \ghost{\Sigma} & \\
            & \gate{\Sigma_1^V} & \\
            & \ghost{X} &
        \end{quantikz}} + \autoname{\begin{quantikz}[wire types = {b,b,q}, classical gap = 2pt]
            & \gate{\Sigma_1^U} & \\
            & \gate{\Sigma_2^V} & \\
            & \gate{ZX} &
        \end{quantikz}}.
    \end{align}
    The matrix representing this circuit is the sum of two matrices. 
    The first circuit is $I \otimes \Sigma_1^V \otimes  I$ which is block diagonal with the 
    blocks of the form $a_j I_2$ with $a_j$ a real number.
    The second circuit is $\Sigma_1^U \otimes \Sigma_2^V \otimes \begin{pmatrix}
        0 & 1 \\ -1 & 0
    \end{pmatrix}$
    which is block diagonal with the 
    blocks of the form $b_j 
    \begin{pmatrix}
        0 & 1 \\ -1 & 0
    \end{pmatrix}$  with $b_j$ a real number.

Combining 
    these, $P_{swap}^\dagger W_{00}^\Sigma P_{swap}$ is a block diagonal matrix with blocks
    \begin{align}
    \label{B_j}
        B_j = \begin{pmatrix}
            a_j & b_j \\ -b_j & a_j
        \end{pmatrix}
    \end{align}
    With this, we can replace the top sets of wires with a large set of controls:
    \begin{align}
        P_{swap}^\dagger W_{00}^\Sigma P_{swap} = \autoname{\begin{quantikz}[wire types = {b,b,q}, classical gap = 2pt]
            & \coctrl{0}\wire[d]{q} & \\
            & \coctrl{0}\wire[d]{q} & \\
            & \gate{B} &
        \end{quantikz}}.
    \end{align}
    Reversing the permutations gives us the result that we can write $W_{00}^\Sigma$ as a doubly multi-controlled list of one-qubit unitaries:
    \begin{align}
        W_{00}^\Sigma = \autoname{\begin{quantikz}[wire types = {b,q,b}, classical gap = 2pt]
            & \coctrl{0}\wire[d]{q} & \\
            & \gate{B} & \\
            & \coctrl{0}\wire[u]{q} &
        \end{quantikz}}.
    \end{align}

    We then apply the $X$ gate that converts $W_{01}^\Sigma$ and $W_{10}^\Sigma$ into $\Sigma_2^W$, and then define $B^W$ as
    \begin{align}
        \autoname{\begin{quantikz}[wire types = {b,q,b}, classical gap = 2pt]
            & \coctrl{0}\wire[d]{q} & \\
            & \gate{B^W} & \\
            & \coctrl{0}\wire[u]{q} &
        \end{quantikz}}
        :=
        \autoname{\begin{quantikz}[wire types = {b,q,b}, classical gap = 2pt]
            & \coctrl{0}\wire[d]{q} && \\
            & \gate{B} & \gate{X} & \\
            & \coctrl{0}\wire[u]{q} &&
        \end{quantikz}}
        =
        \autoname{\begin{quantikz}[wire types = {b,q,b}, classical gap = 2pt]
            & \gate[3]{W_{00}^\Sigma} && \\
            && \gate{X} & \\
            &&&
        \end{quantikz}}
        =
        \autoname{\begin{quantikz}[wire types = {b,q,b}, classical gap = 2pt]
            & \ghost{\Sigma} & \\
            & \gate{X} & \\
            & \gate{\Sigma_1^V} &
        \end{quantikz}} + \autoname{\begin{quantikz}[wire types = {b,q,b}, classical gap = 2pt]
            & \gate{\Sigma_1^U} & \\
            & \gate{Z} & \\
            & \gate{\Sigma_2^V} &
        \end{quantikz}}
    \end{align}
where   
$B^W:= (B^W_1,B^W_2,..., B^W_{\dim(W)/4})$ with 
$$B^W_j = \begin{pmatrix}
    b_j & a_j \\ a_j & -b_j
\end{pmatrix}.$$
The matrix $B^W_j$ has eigenvalues are $c_j$ and $-c_j$ where $c_j = \sqrt{a_j^2 + b_j^2}$.
We let 
$L_j$ denote a unitary matrix diagonalizing $B^W_j$, namely, 
     
    \begin{align}
        B^W_j = L_j^\dagger \begin{pmatrix}
            c_j & 0 \\ 0 & -c_j
        \end{pmatrix} L_j = L_j^\dagger c_jZ L_j
    \end{align}
    Form the list $L =(L_1, ..., L_{\dim(W)/4})$ and use it in the circuit
    \begin{align}
    \label{eq:BW}
        \autoname{\begin{quantikz}[wire types = {b,q,b}, classical gap = 2pt]
            & \coctrl{0}\wire[d]{q} & \\
            & \gate{B^W}& \\
            & \coctrl{0}\wire[u]{q} &
        \end{quantikz}}
        =
        \autoname{\begin{quantikz}[wire types = {b,q,b}, classical gap = 2pt]
            & \coctrl{0}\wire[d]{q} & \coctrl{0}\wire[d]{q} && \coctrl{0}\wire[d]{q} & \\
            & \gate{L^\dagger} & \gate{C} & \gate{Z} & \gate{L} & \\
            & \coctrl{0}\wire[u]{q} & \coctrl{0}\wire[u]{q} && \coctrl{0}\wire[u]{q} &
        \end{quantikz}}
    \end{align}
    where $C := (c_1I, c_2I, ..., c_{\dim(W)/4}I)$. Notably, $C$ ends up being a real diagonal matrix with entries equal to the singular values of $W_{00}^\Sigma$. Since $W$ is unitary and $W_{01}^\Sigma$ times $X$ on the middle wire is equal to $\Sigma_2^{W^\Sigma}$ and $\Sigma_1^{W^\Sigma} = \sqrt{1-(\Sigma_2^{W^\Sigma})^2}$, this means that $C$ has the same singular values as $W_{00}^\Sigma$ in the same order as $\Sigma_1^{W^\Sigma}$, and
    \begin{align}
        \autoname{\begin{quantikz}[wire types = {b,q,b}, classical gap = 2pt]
            & \coctrl{0}\wire[d]{q} & \\
            & \gate{C}& \\
            & \coctrl{0}\wire[u]{q} &
        \end{quantikz}}
        =
        \Sigma_1^{W^\Sigma}
    \end{align}
    so
    \begin{align}
        \autoname{\begin{quantikz}[wire types = {b,q,b}, classical gap = 2pt]
            & \coctrl{0}\wire[d]{q} & \\
            & \gate{B^W}& \\
            & \coctrl{0}\wire[u]{q} &
        \end{quantikz}}
        =
        \autoname{\begin{quantikz}[wire types = {b,q,b}, classical gap = 2pt]
            & \coctrl{0}\wire[d]{q} & \gate[3]{\Sigma_1^{W^\Sigma}} && \coctrl{0}\wire[d]{q} & \\
            & \gate{L^\dagger} && \gate{Z} & \gate{L} & \\
            & \coctrl{0}\wire[u]{q} &&& \coctrl{0}\wire[u]{q} &
        \end{quantikz}}
    \end{align}

    The computations above give us the recipes needed to evaluate the CS Decomposition stated in the lemma to 
    verify that it is indeed $W^\Sigma$. We present this next.

\bs

\noindent 
    {\bf Evaluating the CS decomposition.}
    The CS decomposition expands to
    \begin{align}
        \autoname{\begin{quantikz}[wire types = {q,b,q,b}, classical gap = 2pt]
            && \octrl{2} & \gate[4]{\Sigma^{W^\Sigma}} & \ctrl{2} &&& \\
            & \coctrl{0}\wire[d]{q} && \ghost{\Sigma_2} && \coctrl{0}\wire[d]{q} && \\
            & \gate{L^\dagger} & \gate{Z} & \ghost{\Sigma} & \gate{Z} & \gate{L} & \gate{X} & \\
            & \coctrl{0}\wire[u]{q} && \ghost{\Sigma} && \coctrl{0}\wire[u]{q} &&
        \end{quantikz}}
    \end{align}
    We write the CS decomposition as a block $2 \times 2$ matrix, based on inputs and output of the top wire to get
    \begin{align}
        = \autoname{\begin{quantikz}[wire types = {q,b,q,b}, classical gap = 2pt]
            &&& \gate{\ketbra00} &&& \\
            & \coctrl{0}\wire[d]{q} && \gate[3]{\Sigma_1^{W^\Sigma}} & \coctrl{0}\wire[d]{q} && \\
            & \gate{L^\dagger} & \gate{Z} & \ghost{\Sigma} & \gate{L} & \gate{X} & \\
            & \coctrl{0}\wire[u]{q} && \ghost{\Sigma} & \coctrl{0}\wire[u]{q} &&
        \end{quantikz}}
        \ + \
        \autoname{\begin{quantikz}[wire types = {q,b,q,b}, classical gap = 2pt]
            && \gate{\ketbra11} &&&& \\
            & \coctrl{0}\wire[d]{q} & \gate[3]{\Sigma_1^{W^\Sigma}} && \coctrl{0}\wire[d]{q} && \\
            & \gate{L^\dagger} & \ghost{\Sigma} & \gate{Z} & \gate{L} & \gate{X} & \\
            & \coctrl{0}\wire[u]{q} & \ghost{\Sigma} && \coctrl{0}\wire[u]{q} &&
        \end{quantikz}} \\
        \ + \ 
        \autoname{\begin{quantikz}[wire types = {q,b,q,b}, classical gap = 2pt]
            && \gate{\ketbra10} &&& \\
            & \coctrl{0}\wire[d]{q} & \gate[3]{-\Sigma_2^{W^\Sigma}} & \coctrl{0}\wire[d]{q} && \\
            & \gate{L^\dagger} & \ghost{\Sigma} & \gate{L} & \gate{X} & \\
            & \coctrl{0}\wire[u]{q} & \ghost{\Sigma} & \coctrl{0}\wire[u]{q} &&
        \end{quantikz}}
        \ + \ 
        \autoname{\begin{quantikz}[wire types = {q,b,q,b}, classical gap = 2pt]
            &&& \gate{\ketbra01} &&&& \\
            & \coctrl{0}\wire[d]{q} && \gate[3]{\Sigma_2^{W^\Sigma}} && \coctrl{0}\wire[d]{q} && \\
            & \gate{L^\dagger} & \gate{Z} & \ghost{\Sigma} & \gate{Z} & \gate{L} & \gate{X} & \\
            & \coctrl{0}\wire[u]{q} && \ghost{\Sigma} && \coctrl{0}\wire[u]{q} &&
        \end{quantikz}}
    \end{align}
    For the first two circuits, $Z$ commutes with $\Sigma_1^{W^\Sigma}$ so the $\ketbra00$ and $\ketbra11$ blocks are equal and
    \begin{align}
        \autoname{\begin{quantikz}[wire types = {b,q,b}, classical gap = 2pt]
            & \coctrl{0}\wire[d]{q} & \gate[3]{\Sigma_1^{W^\Sigma}} && \coctrl{0}\wire[d]{q} && \\
            & \gate{L^\dagger} & \ghost{\Sigma_1} & \gate{Z} & \gate{L} & \gate{X} & \\
            & \coctrl{0}\wire[u]{q} & \ghost{\Sigma_1} && \coctrl{0}\wire[u]{q} &&
        \end{quantikz}}
        =
        \autoname{\begin{quantikz}[wire types = {b,q,b}, classical gap = 2pt]
            & \coctrl{0}\wire[d]{q} & \coctrl{0}\wire[d]{q} && \coctrl{0}\wire[d]{q} && \\
            & \gate{L^\dagger} & \gate{C} & \gate{Z} & \gate{L} & \gate{X} & \\
            & \coctrl{0}\wire[u]{q} & \coctrl{0}\wire[u]{q} && \coctrl{0}\wire[u]{q} &&
        \end{quantikz}}
        =
        W_{00}^\Sigma = W_{11}^\Sigma
    \end{align}
    For the other two circuits, conjugating a diagonal matrix by a Pauli Z matrix does nothing, so we evaluate the $\ketbra01$ component to
    \begin{align}
        \autoname{\begin{quantikz}[wire types = {b,q,b}, classical gap = 2pt]
            & \coctrl{0}\wire[d]{q} & \gate[3]{\Sigma_2^{W^\Sigma}} && \coctrl{0}\wire[d]{q} && \\
            & \gate{L^\dagger} & \ghost{\Sigma_1} & \gate{L} & \gate{X} & \\
            & \coctrl{0}\wire[u]{q} & \ghost{\Sigma_1} && \coctrl{0}\wire[u]{q} &&
        \end{quantikz}}
        =
        \autoname{\begin{quantikz}[wire types = {b,q,b}, classical gap = 2pt]
            & \coctrl{0}\wire[d]{q} & \gate{\Sigma_2^U} & \coctrl{0}\wire[d]{q} && \\
            & \gate{L^\dagger} && \gate{L} & \gate{X} & \\
            & \coctrl{0}\wire[u]{q} & \gate{\Sigma_2^V} & \coctrl{0}\wire[u]{q} &&
        \end{quantikz}}
        =
        \autoname{\begin{quantikz}[wire types = {b,q,b}, classical gap = 2pt]
            & \gate{\Sigma_2^U} && \\
            && \gate{X} & \\
            & \gate{\Sigma_2^V} &&
        \end{quantikz}}
        = W_{01}^\Sigma.
    \end{align}
    The $\ketbra 10$ component is the same with $-\Sigma_2^{W^\Sigma}$ so the $\ketbra10$ block evaluates to $W_{10}^\Sigma$.

        Combine the calculations above to see that the CS Decomposition evaluates to
    \begin{align}
        \ketbra00 \otimes W_{00}^\Sigma + \ketbra11 \otimes W_{11}^\Sigma + \ketbra01 \otimes W_{01}^\Sigma + \ketbra10 \otimes W_{10}^\Sigma = W^\Sigma
    \end{align} 
    which is what the lemma asserted.
\end{proof}

\end{document}